\RequirePackage{fix-cm}
\documentclass[preprint,12pt,a4paper]{elsarticle}

\usepackage[T1]{fontenc}
\usepackage[utf8]{inputenc}
\usepackage[english]{babel}
\usepackage{lmodern}
\usepackage{microtype}
\usepackage{amsmath,amssymb,amsfonts,amsthm,mathtools}
\usepackage{booktabs,array,multirow,longtable}
\usepackage{graphicx,rotating,caption}
\usepackage{enumitem}
\usepackage{hyperref}
\usepackage[nameinlink,noabbrev]{cleveref}
\usepackage{url}
\usepackage[pass]{geometry}
\allowdisplaybreaks[2]
\journal{Finite Fields and Their Applications}
\biboptions{sort&compress}
\abstracttitle{Abstract}
\keywordtitle{Keywords}

\newtheorem{theorem}{Theorem}[section]
\newtheorem{lemma}[theorem]{Lemma}
\newtheorem{corollary}[theorem]{Corollary}
\newtheorem{proposition}[theorem]{Proposition}

\newcommand{\Tr}{\operatorname{Tr}}
\newcommand{\N}{\operatorname{N}}
\newcommand{\GRS}{\operatorname{GRS}}

\newcommand{\qcode}[3]{\left[\!\left[#1,#2,#3\right]\!\right]_q}

\hypersetup{
  hidelinks,
  pdftitle={Quantum MDS codes from complements of unions of finite-field subsets},
  pdfauthor={Naihong Hu and Hong Ji}
}

\newlength{\knownwidth}
\newlength{\knownleft}
\newlength{\knownright}

\begin{document}
\begin{frontmatter}

\title{Quantum MDS codes from complements\\of unions of finite-field subsets}

\author[aff1]{Naihong Hu}
\author[aff1]{Hong Ji\corref{cor1}}
\ead{52280155008@stu.ecnu.edu.cn}
\cortext[cor1]{Corresponding author.}
\address[aff1]{School of Mathematical Sciences, MOE Key Laboratory of Mathematics and Engineering Applications $\&$ Shanghai Key Laboratory of PMMP, East China Normal University, Shanghai 200241, China}

\begin{abstract}
Let $q$ be an odd prime power. We use complements of unions of subsets of $\mathbb F_{q^2}$ as locator sets and establish a sufficient condition under which a generalized Reed--Solomon (GRS) code is Hermitian self-orthogonal. Using cosets of multiplicative subgroups and sets with prescribed trace or norm values, we construct five families of Hermitian self-orthogonal GRS codes over $\mathbb F_{q^2}$. The Hermitian construction then yields five corresponding families of $q$-ary quantum maximum-distance-separable (MDS) codes. Under suitable parameter conditions, these quantum codes have minimum distances greater than $q/2+1$. By comparing codes of the same length, we give conditions under which our codes have strictly larger minimum distances than those obtainable from several previously known constructions based on trace maps, linear transformations, and cosets of multiplicative subgroups, either directly or via the propagation rule. We further show that such improvements occur for infinitely many values of $q$.
\end{abstract}

\begin{keyword}
Hermitian self-orthogonal code \sep Generalized Reed--Solomon code \sep Finite field \sep Trace map \sep Norm map \sep Quantum MDS code
\MSC[2020] 11T71 \sep 94B05 \sep 81P70
\end{keyword}

\end{frontmatter}

\section{Introduction}\label{sec:introduction}

Quantum error-correcting codes protect quantum information against errors arising during storage, transmission, and computation. Shor's nine-qubit code \cite{Shor1995} demonstrated that arbitrary errors acting on a single qubit can be corrected. Subsequently, the constructions of Calderbank and Shor \cite{CalderbankShor1996} and Steane \cite{Steane1996}, together with the stabilizer construction based on additive codes over $\mathbb F_4$ \cite{CalderbankRainsShorSloane1998}, established a close connection between classical and quantum coding theory. This connection was extended to general finite fields in \cite{AshikhminKnill2001,KetkarEtAl2006}, providing an important approach to constructing quantum codes with good parameters.

We use $\qcode{n}{k}{d}$ to denote a $q$-ary quantum code of length $n$, size $q^k$, and minimum distance $d$. Its parameters satisfy the quantum Singleton bound \cite{KetkarEtAl2006}
\begin{equation}\label{eq:intro-singleton}
 2d\le n-k+2.
\end{equation}
A quantum code attaining equality is called a quantum maximum-distance-separable (MDS) code. Such a code has the largest possible minimum distance for its length and dimension. Quantum MDS codes of length at most $q+1$ have been constructed for a broad range of parameters \cite{GrasslBethRoetteler2004,JinXing2014}. For lengths greater than $q+1$, the attainable parameters depend more strongly on the particular construction. In particular, constructing quantum MDS codes with new lengths and minimum distances greater than $q/2+1$ remains an important problem \cite{FangFu2018,GuoLiLiu2021,WanZhengZhu2025}.

The Hermitian construction reduces this problem to that of constructing classical self-orthogonal MDS codes. If an $[n,k,n-k+1]_{q^2}$ MDS code $C$ satisfies $C\subseteq C^{\perp_H}$, then there exists a $q$-ary quantum MDS code with parameters $\qcode{n}{n-2k}{k+1}$ \cite{AshikhminKnill2001,KetkarEtAl2006}. Thus, for a fixed length, increasing the dimension of the classical self-orthogonal code increases the minimum distance of the resulting quantum code. Generalized Reed--Solomon (GRS) codes are particularly well suited to this construction: their MDS property follows from polynomial evaluation, so the main task is to choose code locators and column multipliers that ensure Hermitian self-orthogonality \cite{MacWilliamsSloane1977}. The algebraic structure of the locator set is therefore central to the construction.

Li, Xing, and Wang \cite{LiXingWang2008} established a unified framework for constructing quantum MDS codes from GRS codes. Jin et al.\ \cite{JinLingLuoXing2010,JinXing2014} further developed this approach. A key step is to reduce the choice of column multipliers to finding a solution in $(\mathbb F_q^{*})^n$ of a homogeneous system over $\mathbb F_{q^2}$. Sufficient conditions ensuring the existence of such solutions were subsequently obtained in \cite{WanZhengZhu2025}. By combining these methods with algebraically structured locator sets, researchers have constructed $q$-ary quantum MDS codes whose minimum distances can exceed $q/2+1$ \cite{FangFu2018,FangFu2019}. Restrictions on the attainable lengths have also been relaxed. For example, Guo, Li, and Liu \cite{GuoLiLiu2021} constructed quantum MDS codes of lengths $m(q-1)$ and $s(q+1)$, thereby relaxing divisibility conditions on $m$ and $s$ imposed in earlier constructions. Subsequent work produced quantum MDS codes whose lengths satisfy particular congruence conditions \cite{WanLianZhu2025} or have product forms \cite{CampionHernandoMcGuire2025}, further enlarging the range of attainable lengths.

Constructions based on unions of cosets of multiplicative subgroups have yielded quantum MDS codes with flexible lengths \cite{FangLuo2020,JinLuoFangQu2022}. Related work has also produced both quantum MDS codes and entanglement-assisted quantum MDS codes \cite{TianLiWuChen2024}. Fang, Wen, and Fu \cite{FangWenFu2024} gave sufficient conditions for constructing Hermitian self-orthogonal GRS codes on the union of two locator sets. By combining trace maps with finite-field subspaces in one construction and with norm maps in the other, they obtained $q$-ary quantum MDS codes of lengths $(s+t)q-st$ and $1+sq+t(q+1)-N_{s,t}$, respectively. The quantities $st$ and $N_{s,t}$ record the sizes of the corresponding intersections, illustrating how intersections affect the code length. Li, Liu, and Jiang \cite{LiLiuJiang2026} used affine and projective linear transformations to construct quantum MDS codes whose minimum distances can exceed $q/2+1$.

Several general criteria for selecting code locators are also known. Ball and Vilar \cite{BallVilar2022} characterized the Hermitian self-orthogonality of truncated GRS codes by a polynomial condition. In the Euclidean self-dual setting, Meng, Fang, Fu, Zhou, and Gu \cite{MengFangFuZhouGu2026} constructed MDS codes from multiple intersecting subsets of finite fields. These results address different orthogonality conditions or different types of locator sets. Here we ask how $\mathbb F_q^{*}$-cosets and sets with prescribed trace or norm values can be used to specify the code locators and the equations satisfied by the column multipliers.

In this paper, we take $A=X\setminus B$ as the locator set, where $X$ is either $\mathbb F_{q^2}$ or $\mathbb F_{q^2}^{*}$ and $B$ is a union of selected subsets. This complement structure allows the Lagrange coefficients on $A$ to be expressed in terms of the vanishing polynomial of $B$. By studying the product of the vanishing polynomials of the individual subsets and its behavior under the Frobenius automorphism $x\mapsto x^q$, we specify the norms of the column multipliers and derive the sufficient condition for Hermitian self-orthogonality in Theorem~\ref{thm:general-construction}. The condition applies to any finite collection of subsets satisfying the stated hypotheses and permits nonempty intersections among them. The resulting dimension bound does not explicitly depend on the intersection sizes, whereas the code length depends on them through the cardinality of the union $B$.

Applying this condition, we obtain five families of Hermitian self-orthogonal GRS codes. The first combines sets with prescribed norm values and $\mathbb F_q^{*}$-cosets. The second combines sets with prescribed trace values and $\mathbb F_q^{*}$-cosets, whereas the third combines sets with prescribed trace values and sets with prescribed norm values. The fourth and fifth combine two families of sets with prescribed trace values with $\mathbb F_q^{*}$-cosets and sets with prescribed norm values, respectively.

The Hermitian construction then gives five families of quantum MDS codes. Their lengths have the common form $n=q^2-Sq+\theta$, and codes exist for every minimum distance in the range $2\le d\le d_*=q-S+1$. Here $S$ is determined by the cardinalities of the selected parameter sets, whereas $\theta$ depends on these cardinalities and the corresponding intersection sizes. When the numbers of selected trace values, norm values, and cosets are fixed, the constructions give codes of length $q^2-O(q)$ whose minimum distances can be as large as $q-O(1)$; in particular, $d_*>q/2+1$ for all sufficiently large $q$.

Using this common parameterization, we compare our codes at the same length with those obtainable from previously known constructions, either directly or via the propagation rule. For the trace--subspace and trace--norm constructions in \cite{FangWenFu2024}, we derive upper bounds on the attainable minimum distances and sufficient conditions under which each of our five families has a strictly larger minimum distance. For the constructions based on linear transformations in \cite{LiLiuJiang2026}, we determine the largest minimum distance attainable in the range $(q^2-1)/2<n<q^2$ from the first two constructions, including all codes derived from them via the propagation rule. Applying these comparison formulas to our second family yields a two-parameter family of length $(q-R)(q-b)$ together with conditions for a strict improvement in minimum distance; see Corollary~\ref{cor:second-comparison}. In addition, Corollary~\ref{cor:infinite-comparison} exhibits one infinite subfamily from each of our five constructions and gives its gain in minimum distance over the linear-transformation constructions.

For previously known constructions based on cosets of multiplicative subgroups, we combine congruence conditions on the original lengths with the propagation rule to derive fixed-length upper bounds on the minimum distances of the codes listed as R5--R16, including all codes derived from them via the propagation rule. We also give two sufficient conditions for strict improvement. Corollary~\ref{cor:second-coset-family} explicitly selects the sets required for our second construction and proves that, under the stated conditions, the resulting two-parameter family has minimum distance strictly greater than these bounds. The original lengths in R17--R20 are at most $(q^2-1)/2$, and the propagation rule does not increase the length; hence neither these families nor codes derived from them via the propagation rule can have the lengths covered by that corollary. The parameters of R3--R20 are summarized in \ref{app:known}.

The remainder of the paper is organized as follows. Section~\ref{sec:preliminaries} recalls GRS codes and a criterion for Hermitian self-orthogonality. Section~\ref{sec:classical} presents the general construction, the required intersection formulas and vanishing polynomials, and five families of Hermitian self-orthogonal GRS codes. Section~\ref{sec:quantum} gives the parameter ranges of the corresponding quantum MDS codes and compares them at the same length with codes obtainable from previously known constructions based on trace maps, linear transformations, and cosets of multiplicative subgroups. Section~\ref{sec:conclusion} concludes the paper. \ref{app:known} summarizes the previously known families used in the comparisons and several other related results.

\section{Preliminaries}\label{sec:preliminaries}

Throughout this paper, $q$ is an odd prime power. A linear $[n,k,d]_{q^2}$ code is a $k$-dimensional subspace of $\mathbb F_{q^2}^{n}$ with minimum Hamming distance $d$. Write $\Tr(x)=x+x^q$ and $\N(x)=x^{q+1}$ for the trace and norm maps of the extension $\mathbb F_{q^2}/\mathbb F_q$, respectively. The multiplicative group $\mathbb F_{q^2}^{*}$ is cyclic, and the norm map is a surjection onto $\mathbb F_q^{*}$ with kernel
\[
 U=\{z\in\mathbb F_{q^2}^{*}:z^{q+1}=1\},\qquad |U|=q+1.
\]
The map $x\mapsto x^{q-1}$ is a surjection from $\mathbb F_{q^2}^{*}$ onto $U$ with kernel $\mathbb F_q^{*}$. These standard facts about finite fields may be found in \cite[Chapter~2]{LidlNiederreiter1997}. We adopt the conventions that an empty product is $1$ and $\deg 0=-\infty$.

For $x,y\in\mathbb F_{q^2}^{n}$, the Euclidean and Hermitian inner products are defined, respectively, by
\[
 \langle x,y\rangle_E=\sum_{i=1}^{n}x_i y_i,
 \qquad
 \langle x,y\rangle_H=\sum_{i=1}^{n}x_i y_i^q.
\]
The corresponding duals of a linear code $C\subseteq\mathbb F_{q^2}^{n}$ are
\[
\begin{aligned}
 C^{\perp_E}&=\{x\in\mathbb F_{q^2}^{n}:\langle x,c\rangle_E=0,
                    \ \text{for all }c\in C\},\\
 C^{\perp_H}&=\{x\in\mathbb F_{q^2}^{n}:\langle x,c\rangle_H=0,
                    \ \text{for all }c\in C\}.
\end{aligned}
\]
The code $C$ is called Euclidean self-orthogonal if $C\subseteq C^{\perp_E}$ and Hermitian self-orthogonal if $C\subseteq C^{\perp_H}$. Write $x^q=(x_1^q,\ldots,x_n^q)$. Since $\langle x,c\rangle_H^q=\langle x^q,c\rangle_E$, we have $x\in C^{\perp_H}$ if and only if $x^q\in C^{\perp_E}$.

Let $A=\{\alpha_1,\ldots,\alpha_n\}\subseteq\mathbb F_{q^2}$ be a set of $n$ distinct elements. For $\boldsymbol v=(v_1,\ldots,v_n)\in(\mathbb F_{q^2}^{*})^n$ and $1\le k\le n$, define
\[
 \GRS_k(A,\boldsymbol v)=\left\{(v_1f(\alpha_1),\ldots,v_nf(\alpha_n)):
 f\in\mathbb F_{q^2}[x],\ \deg f\le k-1\right\}.
\]
This is an \mbox{$[n,k,n-k+1]_{q^2}$} MDS code \cite{MacWilliamsSloane1977}. The elements of $A$ are called code locators (or evaluation points), and the components of $\boldsymbol v$ are called column multipliers. We also index the multipliers by the code locators, writing $v_\alpha$ for the multiplier corresponding to $\alpha\in A$.

For a finite set $S\subseteq\mathbb F_{q^2}$, let $F_S(x)=\prod_{\beta\in S}(x-\beta)$ be its monic vanishing polynomial. For each $\alpha\in A$, define the corresponding Lagrange coefficient by
\[
 w_\alpha=\frac{1}{F_A'(\alpha)}
 =\prod_{\substack{\beta\in A\\\beta\ne\alpha}}(\alpha-\beta)^{-1}.
\]
The Lagrange interpolation formula gives $\sum_{\alpha\in A}w_\alpha H(\alpha)=0$ whenever $\deg H\le n-2$. Consequently, the Euclidean dual of a GRS code is given by \cite{MacWilliamsSloane1977,FangFu2018}
\[
 \GRS_k(A,\boldsymbol v)^{\perp_E}
 =\left\{\bigl(w_\alpha v_\alpha^{-1}g(\alpha)\bigr)_{\alpha\in A}:
 g\in\mathbb F_{q^2}[x],\ \deg g\le n-k-1\right\}.
\]
Combining this formula with the relation between the Euclidean and Hermitian duals gives the following criterion; see \cite[Lemma~6]{FangFu2018} and \cite{WanZhengZhu2025}.

\begin{lemma}\label{lem:hermitian-dual}
Let $c_f=(v_\alpha f(\alpha))_{\alpha\in A}\in\GRS_k(A,\boldsymbol v)$. Then $c_f\in\GRS_k(A,\boldsymbol v)^{\perp_H}$ if and only if there exists $g\in\mathbb F_{q^2}[x]$ with $\deg g\le n-k-1$ such that, for every $\alpha\in A$,
\[
 v_\alpha^{q+1}f(\alpha)^q=w_\alpha g(\alpha).
\]
\end{lemma}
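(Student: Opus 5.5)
The argument reduces Hermitian orthogonality to Euclidean orthogonality and then applies the explicit description of $\GRS_k(A,\boldsymbol v)^{\perp_E}$ recalled above.

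First, by the preliminary remark, $c_f\in\GRS_k(A,\boldsymbol v)^{\perp_H}$ holds if and only if $c_f^q\in\GRS_k(A,\boldsymbol v)^{\perp_E}$. Here $c_f^q=(v_\alpha^q f(\alpha)^q)_{\alpha\in A}$.

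Second, the Euclidean dual formula says this happens exactly when some $g\in\mathbb F_{q^2}[x]$ with $\deg g\le n-k-1$ satisfies
\[
 v_\alpha^q f(\alpha)^q = w_\alpha v_\alpha^{-1} g(\alpha) \quad\text{for all } \alpha\in A.
\]
Since every $v_\alpha$ is nonzero, we may multiply both sides by $v_\alpha$. This turns the condition into
\[
 v_\alpha^{q+1}f(\alpha)^q=w_\alpha g(\alpha),
\]
and the step is reversible, so the two conditions are equivalent. That gives both directions of the lemma.

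The only point needing care is the dual formula itself, which is taken from the preliminaries. If we want to justify it directly, the plan has two parts.

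\begin{itemize}
\item \emph{Containment.} Vectors of the form $\bigl(w_\alpha v_\alpha^{-1}g(\alpha)\bigr)_{\alpha\in A}$ lie in the dual. For any codeword, the inner product equals $\sum_{\alpha} w_\alpha f(\alpha)g(\alpha)$. The product $fg$ has degree at most $n-2$, so this sum vanishes by the Lagrange identity.
\item \emph{Equality.} The set of such vectors has dimension $n-k$. The map $g\mapsto\bigl(w_\alpha v_\alpha^{-1}g(\alpha)\bigr)_{\alpha\in A}$ is injective, because a polynomial of degree below $n$ cannot vanish on all $n$ points of $A$. Since the dual also has dimension $n-k$, the two spaces coincide.
\end{itemize}

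No real obstacle is expected.
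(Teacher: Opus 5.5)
Your proposal is correct and follows the route the paper indicates: the identity $\langle x,c\rangle_H^q=\langle x^q,c\rangle_E$ reduces the question to whether $c_f^q$ lies in $\GRS_k(A,\boldsymbol v)^{\perp_E}$, and the explicit Euclidean dual formula, followed by multiplying by $v_\alpha\neq 0$, gives the stated condition. Your verification of the dual formula is also sound: containment follows from the Lagrange identity, and equality from injectivity (using $w_\alpha v_\alpha^{-1}\neq0$) together with a dimension count. The paper does not prove this formula itself but cites it.
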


For $f(x)=\sum_j f_jx^j$, define $f^{[q]}(x)=\sum_j f_j^q x^{qj}$. Then $f^{[q]}(\alpha)=f(\alpha)^q$, and $\deg f^{[q]}=q\deg f$ whenever $f\ne0$. Taking $g=hf^{[q]}$ in Lemma~\ref{lem:hermitian-dual} yields the following criterion.

\begin{proposition}\label{prop:polynomial-criterion}
Let $k$ be a positive integer. Suppose that there exists $h\in\mathbb F_{q^2}[x]$ such that $\deg h+q(k-1)\le n-k-1$ and
\[
 w_\alpha h(\alpha)\in\mathbb F_q^{*}
\]
for every $\alpha\in A$. Then there exists $\boldsymbol v=(v_\alpha)_{\alpha\in A}\in (\mathbb F_{q^2}^{*})^n$
such that $\GRS_k(A,\boldsymbol v)$ is a Hermitian self-orthogonal \mbox{$[n,k,n-k+1]_{q^2}$} MDS code.
\end{proposition}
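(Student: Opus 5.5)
The plan is to use the hypothesis to fix the norms of the column multipliers, and then check self-orthogonality one codeword at a time via Lemma~\ref{lem:hermitian-dual} with the choice $g=hf^{[q]}$.

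\emph{Choosing the multipliers.} For each $\alpha\in A$ put $\lambda_\alpha=w_\alpha h(\alpha)\in\mathbb F_q^{*}$. The norm map $\N:\mathbb F_{q^2}^{*}\to\mathbb F_q^{*}$ is surjective, so we can pick $v_\alpha\in\mathbb F_{q^2}^{*}$ with $v_\alpha^{q+1}=\lambda_\alpha$. Set $\boldsymbol v=(v_\alpha)_{\alpha\in A}\in(\mathbb F_{q^2}^{*})^n$. Then $\GRS_k(A,\boldsymbol v)$ is an $[n,k,n-k+1]_{q^2}$ MDS code, as recalled above. We need $k\le n$ for this, and it follows from the degree hypothesis: it gives $q(k-1)\le n-k-1$, hence $k\le n-1$. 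Note also that $h\neq0$, because $w_\alpha h(\alpha)\neq0$ for every $\alpha\in A$, and $A$ is nonempty since $n\ge k+1\ge 2$.

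\emph{Self-orthogonality.} Let $c_f=(v_\alpha f(\alpha))_{\alpha\in A}$ with $\deg f\le k-1$. If $f=0$ there is nothing to prove, so assume $f\neq0$. Define $g=hf^{[q]}$. Since $\deg f^{[q]}=q\deg f\le q(k-1)$, we get
\[
\deg g\le \deg h+q(k-1)\le n-k-1 .
\]
For every $\alpha\in A$, using $f^{[q]}(\alpha)=f(\alpha)^q$,
\[
w_\alpha g(\alpha)=w_\alpha h(\alpha)f(\alpha)^q=\lambda_\alpha f(\alpha)^q=v_\alpha^{q+1}f(\alpha)^q .
\]
This is exactly the condition in Lemma~\ref{lem:hermitian-dual}, so $c_f\in\GRS_k(A,\boldsymbol v)^{\perp_H}$. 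As $f$ was arbitrary, $\GRS_k(A,\boldsymbol v)\subseteq\GRS_k(A,\boldsymbol v)^{\perp_H}$.

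There is no real obstacle here. The only points needing care are the degree bookkeeping, including the convention $\deg0=-\infty$ for the zero case, and the surjectivity of the norm onto $\mathbb F_q^{*}$, which is what makes each $v_\alpha$ exist.
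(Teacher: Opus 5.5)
Your proposal is correct and is essentially the paper's proof: fix $v_\alpha^{q+1}=w_\alpha h(\alpha)$ using surjectivity of the norm onto $\mathbb F_q^{*}$, then apply Lemma~\ref{lem:hermitian-dual} with $g=hf^{[q]}$. One small ordering slip: deducing $q(k-1)\le n-k-1$ from the degree hypothesis already requires $\deg h\ge0$, so as written your argument is circular. Obtain $h\ne0$ first from $A\ne\varnothing$ (implicit in the setup, exactly as the paper does), not from $n\ge k+1$.
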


\begin{proof}
The condition $w_\alpha h(\alpha)\in\mathbb F_q^{*}$ implies $h\ne0$, so $\deg h\ge0$. Together with $k\ge1$ and the degree condition, this gives $1\le k\le n-1$.

For each $\alpha \in A$, the surjectivity of the norm map onto $\mathbb F_q^{*}$ allows us to choose $v_\alpha \in \mathbb F_{q^2}^{*}$ such that $v_\alpha^{q+1}=w_\alpha h(\alpha)$. Take any $f\in\mathbb F_{q^2}[x]$ with $\deg f\le k-1$ and set $g=hf^{[q]}$. Then $\deg g\le n-k-1$, and for every $\alpha\in A$,
\[
 v_\alpha^{q+1}f(\alpha)^q
 =w_\alpha h(\alpha)f(\alpha)^q=w_\alpha g(\alpha).
\]
By Lemma~\ref{lem:hermitian-dual}, every codeword of $\GRS_k(A,\boldsymbol v)$ lies in its Hermitian dual. Hence $\GRS_k(A,\boldsymbol v)\subseteq\GRS_k(A,\boldsymbol v)^{\perp_H}$.
\end{proof}

\section{Hermitian self-orthogonal GRS codes}\label{sec:classical}\label{sec:general}\label{sec:families}

In this section, we construct Hermitian self-orthogonal GRS codes whose locator sets are complements of unions of subsets of $\mathbb F_{q^2}$ or $\mathbb F_{q^2}^{*}$. We first establish a sufficient condition for Hermitian self-orthogonality and then apply it to $\mathbb F_q^{*}$-cosets and sets with prescribed trace or norm values.

\subsection{A general construction}

When a locator set is the complement of a subset $B$ in an ambient finite set $X$, its Lagrange coefficients can be expressed in terms of the vanishing polynomial of $B$.

\begin{lemma}\label{lem:complement-coefficients}
Let $X\in\{\mathbb F_{q^2},\mathbb F_{q^2}^{*}\}$ and $B\subsetneq X$, and put $A=X\setminus B$. Then, for every $\alpha\in A$,
\[
 w_\alpha=
 \begin{cases}
 -F_B(\alpha),&X=\mathbb F_{q^2},\\
 -\alpha F_B(\alpha),&X=\mathbb F_{q^2}^{*}
 \end{cases}.
\]
\end{lemma}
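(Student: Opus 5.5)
The plan is to compute $F_A'(\alpha)$ directly from the factorization $F_X = F_A F_B$, which holds because $A$ and $B$ partition $X$. The ambient vanishing polynomials are standard: $F_{\mathbb F_{q^2}}(x)=x^{q^2}-x$ and $F_{\mathbb F_{q^2}^{*}}(x)=x^{q^2-1}-1$, since every element of $\mathbb F_{q^2}$ (resp.\ $\mathbb F_{q^2}^{*}$) is a root of the respective polynomial and the degrees match. Since $B\subsetneq X$, the set $A$ is nonempty, so there is some $\alpha\in A$ to evaluate at.

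Next, differentiate $F_X=F_AF_B$ to get $F_X'=F_A'F_B+F_AF_B'$. Evaluating at $\alpha\in A$ kills the second term because $F_A(\alpha)=0$. This leaves $F_X'(\alpha)=F_A'(\alpha)F_B(\alpha)$. Here $F_B(\alpha)\neq0$ since $\alpha\notin B$, and $F_A'(\alpha)\ne0$ since $A$ has distinct roots.

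Now compute $F_X'$ in characteristic $p$, where $q^2\equiv0$.
\begin{itemize}
\item For $X=\mathbb F_{q^2}$: $F_X'(x)=q^2x^{q^2-1}-1=-1$.
\item For $X=\mathbb F_{q^2}^{*}$: $F_X'(x)=(q^2-1)x^{q^2-2}=-x^{q^2-2}$. Since $\alpha\ne0$ and $\alpha^{q^2-1}=1$, this gives $F_X'(\alpha)=-\alpha^{-1}$.
\end{itemize}
Therefore $F_A'(\alpha)=-1/F_B(\alpha)$ in the first case and $F_A'(\alpha)=-\alpha^{-1}/F_B(\alpha)$ in the second. Taking $w_\alpha=1/F_A'(\alpha)$ yields $-F_B(\alpha)$ and $-\alpha F_B(\alpha)$, respectively, as stated.

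There is no real obstacle here. The only points needing care are the characteristic-$p$ evaluation of the derivative and the identity $\alpha^{q^2-2}=\alpha^{-1}$ in the $\mathbb F_{q^2}^{*}$ case.
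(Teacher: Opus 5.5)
Your proof is correct and follows the paper's argument: differentiate $F_X=F_AF_B$ (with $F_X=x^{q^2}-x$ or $x^{q^2-1}-1$), evaluate at $\alpha\in A$ to get $F_X'(\alpha)=F_A'(\alpha)F_B(\alpha)$ with $F_X'(\alpha)=-1$ or $-\alpha^{-1}$, and invert. Your separate remark that $F_A'(\alpha)\neq0$ is harmless but unnecessary, since the identity $F_X'(\alpha)=F_A'(\alpha)F_B(\alpha)$ with a nonzero left-hand side already forces it.
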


\begin{proof}
Fix $\alpha\in A$. If $X=\mathbb F_{q^2}$, then $x^{q^2}-x=F_A(x)F_B(x)$. Differentiating and evaluating at $\alpha$ gives $-1=F_A'(\alpha)F_B(\alpha)$. If $X=\mathbb F_{q^2}^{*}$, differentiating $x^{q^2-1}-1=F_A(x)F_B(x)$ gives $-\alpha^{-1}=F_A'(\alpha)F_B(\alpha)$. In both cases, the result follows from $w_\alpha=1/F_A'(\alpha)$.
\end{proof}

When $B$ is a union of subsets, the next lemma expresses $F_B$ in terms of the product of their vanishing polynomials.

\begin{lemma}\label{lem:union-vanishing}
Let $B_1,\ldots,B_s\subseteq\mathbb F_{q^2}$, let $B=\bigcup_iB_i$, and put $P_i=c_iF_{B_i}$, where $c_i\in\mathbb F_{q^2}^{*}$. Write $P=\prod_iP_i$ and $c_0=\prod_ic_i$. Then there exists a monic polynomial $J\in\mathbb F_{q^2}[x]$ such that
\[
 P=c_0JF_B,
 \qquad \deg J=\deg P-|B|.
\]
\end{lemma}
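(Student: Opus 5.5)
The plan is to compare multiplicities of roots. Since each $F_{B_i}$ is monic and squarefree with root set $B_i$, the product $\prod_i F_{B_i}$ is monic and splits over $\mathbb F_{q^2}$:
\[
 \prod_{i=1}^{s}F_{B_i}(x)=\prod_{\beta\in B}(x-\beta)^{m_\beta},
 \qquad m_\beta=\#\{i:\beta\in B_i\}.
\]
Every $\beta\in B$ lies in at least one $B_i$, so $m_\beta\ge1$ for all $\beta\in B$, and no other roots occur. I would therefore define
\[
 J(x)=\prod_{\beta\in B}(x-\beta)^{m_\beta-1},
\]
which is a monic polynomial in $\mathbb F_{q^2}[x]$ because every exponent is a nonnegative integer.

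Then $\prod_iF_{B_i}=J\prod_{\beta\in B}(x-\beta)=JF_B$. Multiplying by the constants gives
\[
 P=\prod_i c_iF_{B_i}=c_0JF_B.
\]
For the degree, $\deg P=\sum_i|B_i|=\sum_{\beta\in B}m_\beta$, since the $c_i$ are nonzero. Hence $\deg J=\sum_{\beta\in B}(m_\beta-1)=\deg P-|B|$.

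There is no real obstacle here. The only points needing care are that the $c_i$ are nonzero, so $\deg P$ is unaffected by them, and that the degenerate cases are handled by the stated conventions. If $B=\emptyset$, or if all $B_i$ are empty, everything reduces to empty products equal to $1$. Alternatively, $J=P/(c_0F_B)$ is a polynomial because $F_B$ is squarefree and divides $\prod_iF_{B_i}$, and this quotient is automatically monic.
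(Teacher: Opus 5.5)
Your proof is correct and is essentially the paper's argument: the paper takes exactly your closing alternative, observing that $P$ vanishes on $B$ so $F_B\mid P$, setting $J=P/(c_0F_B)$ (monic because $P$ has leading coefficient $c_0$), and comparing degrees. Your explicit formula $J=\prod_{\beta\in B}(x-\beta)^{m_\beta-1}$ is a harmless refinement that also shows $J$ vanishes precisely at the points lying in at least two of the $B_i$.
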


\begin{proof}
Since $P$ vanishes at every element of $B$, we have $F_B\mid P$. Set $J=P/(c_0F_B)$. As the leading coefficient of $P$ is $c_0$, the polynomial $J$ is monic. Comparing degrees completes the proof.
\end{proof}

For use in the following theorem, let $B_1,\ldots,B_s\subseteq X$, where $X\in\{\mathbb F_{q^2},\mathbb F_{q^2}^{*}\}$. Retain the notation $B,P,c_0,J$ from Lemma~\ref{lem:union-vanishing}, assume $B\ne X$, and set $A=X\setminus B$ and $n=|A|$. Define $\varepsilon_X=0$ when $X=\mathbb F_{q^2}$ and $\varepsilon_X=1$ when $X=\mathbb F_{q^2}^{*}$. Proposition~\ref{prop:polynomial-criterion} can then be applied to the complement $A$.

\begin{theorem}\label{thm:general-construction}
With the notation above, let $m\ge\varepsilon_X$ be an integer and put $G(x)=x^mP(x)$. Suppose that there exists $\gamma\in U$ such that, for every $\alpha\in A$,
\begin{equation}\label{eq:general-frobenius}
 G(\alpha)\ne0,\qquad G(\alpha)^q=\gamma G(\alpha).
\end{equation}
Then, for every positive integer $k$ satisfying
\begin{equation}\label{eq:general-degree}
 (q+1)k\le q^2+q-1-m-\deg P
\end{equation}
there exists $\boldsymbol v=(v_\alpha)_{\alpha\in A}\in(\mathbb F_{q^2}^{*})^n$ such that $\GRS_k(A,\boldsymbol v)$ is a Hermitian self-orthogonal \mbox{$[n,k,n-k+1]_{q^2}$} MDS code.
\end{theorem}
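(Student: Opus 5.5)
The plan is to apply Proposition~\ref{prop:polynomial-criterion} to the locator set $A=X\setminus B$. The polynomial $h$ will be built from the factor $J$ of Lemma~\ref{lem:union-vanishing} and a suitable scalar. The key observation is that, by Lemma~\ref{lem:complement-coefficients}, $w_\alpha=-\alpha^{\varepsilon_X}F_B(\alpha)$ for every $\alpha\in A$, covering both choices of $X$. Meanwhile, by Lemma~\ref{lem:union-vanishing},
\[
G(\alpha)=\alpha^mP(\alpha)=c_0\,\alpha^mJ(\alpha)F_B(\alpha).
\]
So $G(\alpha)$ is $w_\alpha$ times a polynomial expression in $\alpha$. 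Here the hypothesis $m\ge\varepsilon_X$ is exactly what makes the factor $x^{m-\varepsilon_X}$ a genuine polynomial.

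First I remove the twist $\gamma$ in \eqref{eq:general-frobenius}. Since $x\mapsto x^{q-1}$ maps $\mathbb F_{q^2}^{*}$ onto $U$, and $\gamma^{-1}\in U$, I can choose $\lambda\in\mathbb F_{q^2}^{*}$ with $\lambda^{q-1}=\gamma^{-1}$. Then for each $\alpha\in A$,
\[
(\lambda G(\alpha))^q=\lambda^q\gamma G(\alpha)=\lambda G(\alpha),
\]
so $\lambda G(\alpha)\in\mathbb F_q$. Because $G(\alpha)\neq0$, in fact $\lambda G(\alpha)\in\mathbb F_q^{*}$.

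Next I set
\[
h(x)=-\lambda c_0\,x^{m-\varepsilon_X}J(x).
\]
Then $w_\alpha h(\alpha)=\lambda c_0\alpha^mJ(\alpha)F_B(\alpha)=\lambda G(\alpha)\in\mathbb F_q^{*}$ for all $\alpha\in A$, which is the pointwise condition of Proposition~\ref{prop:polynomial-criterion}.

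It remains to check the degree condition, which is a bookkeeping step. Using $\deg J=\deg P-|B|$ and $n=q^2-\varepsilon_X-|B|$, we get
\[
\deg h=m-\varepsilon_X+\deg P-|B|.
\]
Hence $\deg h+q(k-1)\le n-k-1$ becomes $m-\varepsilon_X+\deg P-|B|+qk-q\le q^2-\varepsilon_X-|B|-k-1$. The terms $\varepsilon_X$ and $|B|$ cancel, leaving exactly $(q+1)k\le q^2+q-1-m-\deg P$, which is \eqref{eq:general-degree}. Proposition~\ref{prop:polynomial-criterion} then yields the required $\boldsymbol v$ and the Hermitian self-orthogonal MDS code. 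The assumption $B\ne X$ ensures $A$ is nonempty, so Lemma~\ref{lem:complement-coefficients} applies.

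The only genuinely non-mechanical step is normalizing by $\lambda$, which converts the twisted Frobenius condition into membership in $\mathbb F_q^{*}$. I expect no real obstacle beyond that; the main care needed is in the sign and the $\varepsilon_X$ exponent, so that the degree count comes out as exactly \eqref{eq:general-degree}.
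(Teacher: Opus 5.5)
Your proposal is correct and follows essentially the same argument as the paper: normalize by a scalar $\lambda$ with $\lambda^{q-1}=\gamma^{-1}$, take $h$ to be a constant multiple of $x^{m-\varepsilon_X}J(x)$ so that $w_\alpha h(\alpha)$ is $\pm\lambda G(\alpha)\in\mathbb F_q^{*}$, and observe that the degree condition of Proposition~\ref{prop:polynomial-criterion} reduces exactly to \eqref{eq:general-degree}. The only difference is that you absorb the sign into $h$, whereas the paper obtains $w_\alpha h(\alpha)=-\eta G(\alpha)$; this is immaterial.
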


\begin{proof}
Since $x\mapsto x^{q-1}$ maps $\mathbb F_{q^2}^{*}$ onto $U$, we may choose $\eta\in\mathbb F_{q^2}^{*}$ with $\eta^{q-1}=\gamma^{-1}$. Define $h(x)=c_0\eta x^{m-\varepsilon_X}J(x)$. The exponent is nonnegative, so $h\in\mathbb F_{q^2}[x]$. When $m=0$, the factor $x^0$ is the constant polynomial $1$. When $m>0$, the nonvanishing of $G$ on $A$ implies $0\notin A$.

Fix $\alpha\in A$. By \eqref{eq:general-frobenius},
\[
 (\eta G(\alpha))^q=\eta^q\gamma G(\alpha)=\eta G(\alpha).
\]
Together with $G(\alpha)\ne0$, Lemmas~\ref{lem:complement-coefficients} and~\ref{lem:union-vanishing} give
\[
 w_\alpha h(\alpha)=-\eta G(\alpha)\in\mathbb F_q^{*}.
\]

It remains to verify the degree condition. Using $n=|X|-|B|$, $\deg J=\deg P-|B|$, and $|X|+\varepsilon_X=q^2$, we obtain
\[
\begin{aligned}
 \deg h&=m-\varepsilon_X+\deg J,\\
 n+q-1-\deg h&=q^2+q-1-m-\deg P.
\end{aligned}
\]
Thus \eqref{eq:general-degree} is equivalent to $\deg h+q(k-1)\le n-k-1$. The result now follows from Proposition~\ref{prop:polynomial-criterion}.
\end{proof}

We next specialize Theorem~\ref{thm:general-construction} to the case $P(x)=H(x)Q(x)$, where $H(\alpha)\in\mathbb F_q^{*}$ for every $\alpha\in A$ and $Q$ is the monic vanishing polynomial of a union of $b$ distinct $\mathbb F_q^{*}$-cosets. Taking $m=b$ gives the following result.

\begin{lemma}\label{lem:q-power-calculation}
Retain the notation and assumptions of Theorem~\ref{thm:general-construction}, and suppose that $0\notin A$. Let $D\subseteq U$, set $b=|D|\ge1$, and define
\[
 Q(x)=\prod_{\delta\in D}(x^{q-1}-\delta),
 \qquad C_D=(-1)^b\left(\prod_{\delta\in D}\delta\right)^{-1}.
\]
Suppose that $P(x)=H(x)Q(x)$ for some $H\in\mathbb F_{q^2}[x]$ satisfying $H(\alpha)\in\mathbb F_q^{*}$ for every $\alpha\in A$.
Then $C_D\in U$, and for every positive integer $k$ satisfying
\begin{equation}\label{eq:specialized-degree}
 (q+1)k\le q^2+q-1-bq-\deg H
\end{equation}
there exists $\boldsymbol v=(v_\alpha)_{\alpha\in A}\in(\mathbb F_{q^2}^{*})^n$ such that $\GRS_k(A,\boldsymbol v)$ is a Hermitian self-orthogonal \mbox{$[n,k,n-k+1]_{q^2}$} MDS code.
\end{lemma}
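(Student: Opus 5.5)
We apply Theorem~\ref{thm:general-construction} with $m=b$ and $\gamma=C_D$. Then $G(x)=x^bH(x)Q(x)$ and $\deg P=\deg H+b(q-1)$. Condition \eqref{eq:general-degree} becomes
\[
(q+1)k\le q^2+q-1-b-\deg H-b(q-1)=q^2+q-1-bq-\deg H,
\]
which is exactly \eqref{eq:specialized-degree}. Since $b\ge1\ge\varepsilon_X$, the requirement $m\ge\varepsilon_X$ holds. It therefore remains to show that $C_D\in U$ and that \eqref{eq:general-frobenius} holds with $\gamma=C_D$.

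First, $C_D\in U$ because $U$ is a group containing $-1$ and each $\delta\in D$; here $(-1)^{q+1}=1$ since $q$ is odd.

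Next, fix $\alpha\in A$ and put $u=\alpha^{q-1}$. This is defined since $0\notin A$, and $u\in U$. We have $G(\alpha)\ne0$: indeed $H(\alpha)\ne0$, $\alpha\ne0$, and $Q(\alpha)=F_{B'}(\alpha)$ up to a nonzero scalar, where $B'$ is the union of the cosets $\{x:x^{q-1}=\delta\}$. These cosets lie in $B$, since $Q$ divides $P$ and $P$ vanishes only on $B$, so $Q(\alpha)\ne0$ because $\alpha\notin B$. (Alternatively, nonvanishing of $P$ on $A$ is immediate: $P=c_0JF_B$ and $F_B(\alpha)\ne0$, but $J(\alpha)$ could vanish, so we argue factorwise as above.)

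Now compute $G(\alpha)^q$. Since $H(\alpha)\in\mathbb F_q^{*}$, we have $H(\alpha)^q=H(\alpha)$. Also $\alpha^{bq}=\alpha^b u^b$. The key calculation is, for each $\delta$,
\[
(u-\delta)^q=u^{-1}-\delta^{-1}=-(u\delta)^{-1}(u-\delta),
\]
using $u^q=u^{-1}$ and $\delta^q=\delta^{-1}$. Taking the product over $D$ gives
\[
Q(\alpha)^q=(-1)^b u^{-b}\Bigl(\prod_{\delta\in D}\delta\Bigr)^{-1}Q(\alpha)=C_D\,u^{-b}Q(\alpha).
\]
Hence $G(\alpha)^q=\alpha^b u^b H(\alpha)\,C_D u^{-b}Q(\alpha)=C_D\,G(\alpha)$. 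Here the factor $u^b$ from $x^b$ cancels the $u^{-b}$ from $Q$, which is why we chose $m=b$.

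The only delicate step is justifying $G(\alpha)\ne0$ from the hypotheses as stated. Note that $P$ is assumed to be $\prod_i c_iF_{B_i}$, so $P(\alpha)\ne0$ for $\alpha\notin B$, since each $F_{B_i}(\alpha)\ne0$. This gives $P(\alpha)\ne0$ directly, avoiding $J$ entirely. The rest is the Frobenius computation above, after which Theorem~\ref{thm:general-construction} yields the conclusion.
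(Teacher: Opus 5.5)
Your proof is correct and follows the paper's argument essentially step for step: $m=b$ and $\gamma=C_D$; $G(\alpha)\ne0$ on $A$ from $P=\prod_i c_iF_{B_i}$ together with $0\notin A$; the factorwise Frobenius identity $(\alpha^{q-1}-\delta)^q=-(\alpha^{q-1}\delta)^{-1}(\alpha^{q-1}-\delta)$; and the degree identity $m+\deg P=bq+\deg H$. One small correction to an aside: your parenthetical worry that $J(\alpha)$ might vanish is unfounded, since $J$ divides $P$ and so all its roots lie in $B$, but your argument does not rely on that remark.
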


\begin{proof}
Take $m=b$ in Theorem~\ref{thm:general-construction} and put
\[
 G(x)=x^bP(x)=x^bH(x)Q(x).
\]
Then $m=b\ge1\ge\varepsilon_X$. Since $P=\prod_i c_iF_{B_i}$ and $A\cap B_i=\varnothing$, we have $P(\alpha)\ne0$ for every $\alpha\in A$. As $0\notin A$, it follows that $G(\alpha)\ne0$.

The inclusions $-1\in U$ and $D\subseteq U$ imply $C_D\in U$. Fix $\alpha\in A$. Since $\alpha\ne0$, we have $\alpha^{q(q-1)}=\alpha^{1-q}$. Moreover, $\delta^q=\delta^{-1}$ for every $\delta\in D$. Hence
\[
\begin{aligned}
 Q(\alpha)^q
 &=\prod_{\delta\in D}(\alpha^{1-q}-\delta^{-1})\\
 &=\prod_{\delta\in D}\left(-\frac{\alpha^{q-1}-\delta}{\alpha^{q-1}\delta}\right)\\
 &=C_D\alpha^{-b(q-1)}Q(\alpha).
\end{aligned}
\]
Using $H(\alpha)^q=H(\alpha)$, we obtain
\[
\begin{aligned}
 G(\alpha)^q
 &=\alpha^{bq}H(\alpha)^qQ(\alpha)^q\\
 &=C_D\alpha^bH(\alpha)Q(\alpha)
 =C_DG(\alpha).
\end{aligned}
\]
Thus the Frobenius condition~\eqref{eq:general-frobenius} in Theorem~\ref{thm:general-construction} holds with $\gamma=C_D$. Since $\deg Q=b(q-1)$, we also have
\[
 \deg P=\deg H+b(q-1).
\]
Consequently, $m+\deg P=bq+\deg H$, so the degree condition~\eqref{eq:general-degree} in that theorem is exactly \eqref{eq:specialized-degree}. The result follows from Theorem~\ref{thm:general-construction}.
\end{proof}

\subsection{Intersection formulas and vanishing polynomials}

We now give intersection formulas and monic vanishing polynomials for the subsets used in our constructions. These formulas will be used to compute the code lengths and to choose the polynomial factors in Theorem~\ref{thm:general-construction}.

Fix a primitive element $g$ of $\mathbb F_{q^2}^{*}$ and write $\rho=g^{q-1}$ and $\xi=g^{q+1}$. Then $U=\langle\rho\rangle$ and $\mathbb F_q^{*}=\langle\xi\rangle$. For $\delta\in U$ and $\nu\in\mathbb F_q^{*}$, define
\[
 M_\delta=\{x\in\mathbb F_{q^2}^{*}:x^{q-1}=\delta\},
 \qquad
 \mathcal U_\nu=\{x\in\mathbb F_{q^2}^{*}:\N(x)=\nu\}.
\]
The following lemma describes these sets as cosets and gives their intersection numbers.

\begin{lemma}\label{lem:coset-intersections}
With the notation above, the following statements hold.
\begin{enumerate}[label=\textup{(\roman*)}]
\item If $\delta=\rho^j$, then $M_\delta=g^j\mathbb F_q^{*}$. As $\delta$ ranges over $U$, the sets $M_\delta$ form a partition of $\mathbb F_{q^2}^{*}$, and each has cardinality $q-1$.
\item If $\nu=\xi^i$, then $\mathcal U_\nu=g^iU$. As $\nu$ ranges over $\mathbb F_q^{*}$, the sets $\mathcal U_\nu$ form a partition of $\mathbb F_{q^2}^{*}$, and each has cardinality $q+1$.
\item For any integers $i,j$,
\[
 |\mathcal U_{\xi^i}\cap M_{\rho^j}|=
 \begin{cases}
 2,&i\equiv j\pmod2,\\
 0,&i\not\equiv j\pmod2.
 \end{cases}
\]
\item If $\nu$ is a square in $\mathbb F_q^{*}$, then every element of $\mathcal U_\nu$ is a square in $\mathbb F_{q^2}^{*}$. If $\nu$ is a nonsquare in $\mathbb F_q^{*}$, then every element of $\mathcal U_\nu$ is a nonsquare in $\mathbb F_{q^2}^{*}$. There are $(q-1)/2$ sets of each type.
\end{enumerate}
\end{lemma}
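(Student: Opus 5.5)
Every statement is an exponent computation in the cyclic group $\mathbb F_{q^2}^{*}=\langle g\rangle$ of order $q^2-1$, so the plan is to write each element as $x=g^t$ and translate each condition into a congruence on $t$.

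For (i), note that $x=g^t$ satisfies $x^{q-1}=\rho^j$ exactly when $(q-1)t\equiv(q-1)j\pmod{q^2-1}$, that is, when $t\equiv j\pmod{q+1}$. Hence $M_{\rho^j}=g^j\langle g^{q+1}\rangle=g^j\mathbb F_q^{*}$. These sets are the fibres of the surjection $x\mapsto x^{q-1}$ onto $U$, whose kernel is $\mathbb F_q^{*}$. So they partition $\mathbb F_{q^2}^{*}$, and each has size $q-1$.

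For (ii), the same argument applies. The condition $x^{q+1}=\xi^i$ becomes $(q+1)t\equiv(q+1)i\pmod{q^2-1}$, i.e. $t\equiv i\pmod{q-1}$. This gives $\mathcal U_{\xi^i}=g^i\langle g^{q-1}\rangle=g^iU$. These are the fibres of the norm, so they partition $\mathbb F_{q^2}^{*}$ and each has size $q+1$.

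For (iii), which I expect to be the only step needing care, the intersection $\mathcal U_{\xi^i}\cap M_{\rho^j}$ consists of $g^t$ with $t\in\mathbb Z/(q^2-1)$ satisfying
\[
 t\equiv i\pmod{q-1},\qquad t\equiv j\pmod{q+1}.
\]
Since $q$ is odd, $\gcd(q-1,q+1)=2$ and $\operatorname{lcm}(q-1,q+1)=(q^2-1)/2$. By the generalized Chinese remainder theorem, the system is solvable if and only if $i\equiv j\pmod 2$. When it is solvable, the solution is unique modulo $(q^2-1)/2$, which gives exactly $2$ residues modulo $q^2-1$. Otherwise the intersection is empty.

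For (iv), I will use that $g^t$ is a square in $\mathbb F_{q^2}^{*}$ if and only if $t$ is even, and that $\xi^i=g^{(q+1)i}$ is a square in $\mathbb F_q^{*}=\langle\xi\rangle$ if and only if $i$ is even. Every element of $\mathcal U_{\xi^i}$ has the form $g^t$ with $t\equiv i\pmod{q-1}$. Since $q-1$ is even, $t\equiv i\pmod 2$, so the parity of $t$ is determined by $i$, which gives the claim. Finally, $\mathbb F_q^{*}$ has $(q-1)/2$ squares and $(q-1)/2$ nonsquares, so there are $(q-1)/2$ sets of each type.
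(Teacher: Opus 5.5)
Your proof is correct. It differs from the paper's mainly in language. You turn every condition into a congruence on the exponent $t$ of $x=g^t$ and settle (iii) with the Chinese remainder theorem for the non-coprime moduli $q-1$ and $q+1$. The paper instead stays inside the group: it shows that $U\mathbb F_q^{*}$ equals the subgroup $S$ of squares, via $U\cap\mathbb F_q^{*}=\{\pm1\}$ and $|U\mathbb F_q^{*}|=|S|$. It then reads off (iii) as the condition $g^{j-i}\in S$ and (iv) from $U\subseteq S$. Both arguments rest on the single fact $\gcd(q-1,q+1)=2$, so neither is more general. Your version is more mechanical and uniform across the four parts. The paper's version makes the link between (iii) and (iv) visible, since the parity condition in (iii) is exactly the square/nonsquare dichotomy of (iv). It also identifies the intersection explicitly as $\{x,-x\}$. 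You can recover that description from your argument as well: the two exponents differ by $(q^2-1)/2$, and $g^{(q^2-1)/2}=-1$.
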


\begin{proof}
\begin{enumerate}[label=\textup{(\roman*)}]
\item The equation $x^{q-1}=\rho^j$ is equivalent to $x/g^j\in\mathbb F_q^{*}$, so $M_\delta=g^j\mathbb F_q^{*}$. The preimages of distinct $\delta$ are disjoint, giving the stated partition into sets of size $q-1$.

\item The equation $x^{q+1}=\xi^i$ is equivalent to $x/g^i\in U$, so $\mathcal U_\nu=g^iU$. The preimages of distinct $\nu$ are disjoint, giving the stated partition into sets of size $q+1$.

\item The subgroups $U=\langle g^{q-1}\rangle$ and $\mathbb F_q^{*}=\langle g^{q+1}\rangle$ are both contained in the subgroup of squares $S=(\mathbb F_{q^2}^{*})^2$. Moreover, $U\cap\mathbb F_q^{*}=\{1,-1\}$, since every element of the intersection has square equal to $1$. Thus
\[
 |U\mathbb F_q^{*}|=\frac{(q+1)(q-1)}{2}=|S|,
 \qquad U\mathbb F_q^{*}=S.
\]
By (i) and (ii), $\mathcal U_{\xi^i}=g^iU$ and $M_{\rho^j}=g^j\mathbb F_q^{*}$. The two cosets intersect if and only if $g^{j-i}\in U\mathbb F_q^{*}=S$, equivalently, $i\equiv j\pmod2$. If $x$ lies in the intersection, then any other element $y$ of the intersection satisfies $x^{-1}y\in U\cap\mathbb F_q^{*}=\{1,-1\}$, so $y=x$ or $y=-x$. Conversely, since $-1\in U\cap\mathbb F_q^{*}$, both $x$ and $-x$ lie in the intersection. These points are distinct because $q$ is odd and $x\ne0$. Hence the intersection is $\{x,-x\}$.

\item Write $\nu=\xi^i$. Since $\mathcal U_\nu=g^iU$ and $U\subseteq S$, all elements of $\mathcal U_\nu$ are squares in $\mathbb F_{q^2}^{*}$ if and only if $i$ is even. As $\mathbb F_q^{*}=\langle\xi\rangle$, this is equivalent to $\nu$ being a square in $\mathbb F_q^{*}$. There are $(q-1)/2$ squares and $(q-1)/2$ nonsquares in $\mathbb F_q^{*}$, which proves the assertion.
\end{enumerate}
\end{proof}

The trace map is a surjective $\mathbb F_q$-linear map from the two-dimensional vector space $\mathbb F_{q^2}$ onto $\mathbb F_q$, with kernel of cardinality $q$. We may therefore choose a nonzero $\omega$ such that $\omega^q=-\omega$ \cite[Chapter~2]{LidlNiederreiter1997}. Since $q$ is odd, $\omega\notin\mathbb F_q$, and hence $\{1,\omega\}$ is a basis. Put $\Delta=\omega^2$. Then $\Delta\in\mathbb F_q^{*}$ and $\Delta^{(q-1)/2}=\omega^{q-1}=-1$, so $\Delta$ is a nonsquare. For $x=s+t\omega$, where $s,t\in\mathbb F_q$, the identity $x^q=s-t\omega$ gives
\[
 \Tr(x)=2s,\qquad \Tr(-\omega x)=-2\Delta t,
 \qquad \N(x)=s^2-\Delta t^2.
\]

Fix this choice of $\omega$ for the remainder of the paper. For $c,e\in\mathbb F_q$, put
\[
\begin{aligned}
 L_c&=\{x\in\mathbb F_{q^2}:\Tr(x)=c\},\\
 K_e&=\{x\in\mathbb F_{q^2}:\Tr(-\omega x)=e\}.
\end{aligned}
\]
Here $\Tr(-\omega x)=\omega(x^q-x)$. Within each family, distinct parameter values give pairwise disjoint sets.

\begin{proposition}\label{prop:intersection-formulas}
For all $c,e\in\mathbb F_q$, the sets $L_c$ and $K_e$ each have cardinality $q$ and satisfy
\[
 L_c=\frac c2+\omega\mathbb F_q,\qquad
 K_e=-\frac{e}{2\omega}+\mathbb F_q.
\]
Their intersection consists of the single point
\[
 x_{c,e}=\frac12\left(c-\frac e\omega\right),
 \qquad \N(x_{c,e})=\frac{c^2-e^2/\Delta}{4}.
\]
For every $\delta\in U\setminus\{-1\}$, we have $L_0\cap M_\delta=\varnothing$, and for every $c\in\mathbb F_q^{*}$,
\[
 L_c\cap M_\delta=\left\{\frac{c}{\delta+1}\right\}.
\]
For every $\delta\in U\setminus\{1\}$, we have $K_0\cap M_\delta=\varnothing$, and for every $e\in\mathbb F_q^{*}$,
\[
 K_e\cap M_\delta=\left\{\frac{e}{\omega(\delta-1)}\right\}.
\]
\end{proposition}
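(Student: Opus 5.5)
The whole argument runs in the coordinates $x=s+t\omega$ with $s,t\in\mathbb F_q$, using the identities $\Tr(x)=2s$, $\Tr(-\omega x)=-2\Delta t$ and $\N(x)=s^2-\Delta t^2$ from above. Since $2$ and $\Delta$ are invertible in $\mathbb F_q$, the condition $x\in L_c$ is equivalent to $s=c/2$ with $t$ free. Hence $L_c=c/2+\omega\mathbb F_q$ and $|L_c|=q$. Similarly, $x\in K_e$ is equivalent to $t=-e/(2\Delta)$ with $s$ free. Because $-e\omega/(2\Delta)=-e/(2\omega)$ (as $\omega^2=\Delta$), this gives $K_e=-e/(2\omega)+\mathbb F_q$, again of size $q$.

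For the intersection, both coordinates are now fixed: $s=c/2$ and $t=-e/(2\Delta)$. So $L_c\cap K_e$ is the single point $x_{c,e}=c/2-e\omega/(2\Delta)=\tfrac12(c-e/\omega)$. The norm formula $\N(x_{c,e})=s^2-\Delta t^2=c^2/4-e^2/(4\Delta)$ follows by direct substitution.

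For the intersections with $M_\delta$, I use that $x\in M_\delta$ means $x\neq0$ and $x^q=\delta x$.
\begin{itemize}
\item \emph{Trace sets.} For $x\in L_c\cap M_\delta$ we get $\Tr(x)=x+x^q=(1+\delta)x=c$. If $\delta\neq-1$, then $x=c/(1+\delta)$ is forced. When $c=0$ this gives $x=0$, contradicting $x\neq0$, so $L_0\cap M_\delta=\varnothing$.
\item \emph{Norm-type sets.} For $x\in K_e\cap M_\delta$ we get $\Tr(-\omega x)=\omega(x^q-x)=\omega(\delta-1)x=e$. If $\delta\neq1$, then $x=e/(\omega(\delta-1))$ is forced, and $e=0$ forces $x=0$, so $K_0\cap M_\delta=\varnothing$.
\end{itemize}

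The remaining step, which is the only one needing care, is the converse: the candidate points must actually lie in $M_\delta$.
\begin{itemize}
\item For $x=c/(1+\delta)$ with $c\neq0$, we have $x\neq0$. Using $\delta^q=\delta^{-1}$, we compute $x^q=c/(1+\delta^{-1})=c\delta/(\delta+1)=\delta x$. Hence $x^{q-1}=\delta$, and then $\Tr(x)=(1+\delta)x=c$ holds automatically.
\item For $x=e/(\omega(\delta-1))$ with $e\neq0$, we use $\omega^q=-\omega$ to get $x^q=e/(-\omega(\delta^{-1}-1))=e\delta/(\omega(\delta-1))=\delta x$. Hence $x\in M_\delta$, and $\omega(x^q-x)=e$ holds.
\end{itemize}

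I expect no genuine obstacle here. The only points to watch are the excluded values $\delta=\mp1$, where the denominators vanish and which the statement excludes, and keeping the sign conventions for $\omega$ consistent throughout.
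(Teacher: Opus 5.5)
Your proposal is correct and follows essentially the same route as the paper: coordinates $x=s+t\omega$ for $L_c$, $K_e$, their intersection point and its norm, then $x^q=\delta x$ to force the unique candidate in $M_\delta$ and a direct check via $\delta^q=\delta^{-1}$ and $\omega^q=-\omega$ that it lies there. The handling of the zero trace values and the excluded values $\delta=\mp1$ also matches the paper's argument.
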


\begin{proof}
In the coordinate representation $x=s+t\omega$, the condition $\Tr(x)=c$ fixes $s=c/2$ while $t$ varies over $\mathbb F_q$, and the condition $\Tr(-\omega x)=e$ fixes $t=-e/(2\Delta)$ while $s$ varies over $\mathbb F_q$. This gives the stated descriptions and cardinalities of $L_c$ and $K_e$. Fixing both coordinates gives the unique intersection point $x_{c,e}$ and its norm.

For $x\in M_\delta$, the relation $x^q=\delta x$ yields
\[
 \Tr(x)=(\delta+1)x,\qquad
 \Tr(-\omega x)=\omega(\delta-1)x.
\]
If $c\ne0$ and $\delta\ne-1$, the only possible intersection point is $x=c/(\delta+1)$. This point indeed belongs to $M_\delta$, since
\[
 x^q=\frac{c}{\delta^{-1}+1}=\frac{c\delta}{1+\delta}=\delta x.
\]
Similarly, if $e\ne0$ and $\delta\ne1$, the only possible intersection point is $x=e/[\omega(\delta-1)]$, and
\[
 x^q=\frac{e}{(-\omega)(\delta^{-1}-1)}=\delta x.
\]
The displayed points also satisfy their respective trace conditions. If the prescribed trace value is zero, the nonzero coefficients above force $x=0$. Since $0\notin M_\delta$, the corresponding intersections are empty.
\end{proof}

The first trace--norm intersection formula below agrees with \cite[Lemma~8]{FangWenFu2024}. The same coordinate representation also gives the intersection number of $K_e$ and $\mathcal U_\nu$.

\begin{lemma}\label{lem:trace-norm-intersections}
Let $\chi$ be the quadratic character of $\mathbb F_q$, extended by $\chi(0)=0$. For all $c,e\in\mathbb F_q$ and $\nu\in\mathbb F_q^{*}$,
\[
\begin{aligned}
 |L_c\cap\mathcal U_\nu|&=1-\chi(c^2-4\nu),\\
 |K_e\cap\mathcal U_\nu|&=1+\chi\left(\nu+\frac{e^2}{4\Delta}\right).
\end{aligned}
\]
\end{lemma}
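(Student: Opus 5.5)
We work in the coordinates $x=s+t\omega$ with $s,t\in\mathbb F_q$, exactly as in the proof of Proposition~\ref{prop:intersection-formulas}, where $\Tr(x)=2s$, $\Tr(-\omega x)=-2\Delta t$ and $\N(x)=s^2-\Delta t^2$. Fixing one trace condition fixes one coordinate. The norm condition then becomes a quadratic equation in the remaining coordinate. We count its solutions with the standard fact that for $a\in\mathbb F_q$ (with $q$ odd) the equation $y^2=a$ has exactly $1+\chi(a)$ solutions $y\in\mathbb F_q$.

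\emph{First formula.} A point $x\in L_c$ has $s=c/2$ and $t\in\mathbb F_q$ arbitrary, so
\[
 x\in\mathcal U_\nu\iff \frac{c^2}{4}-\Delta t^2=\nu\iff t^2=\frac{c^2-4\nu}{4\Delta}.
\]
The map $t\mapsto x$ is a bijection from $\mathbb F_q$ to $L_c$, and $\nu\ne0$ guarantees $x\neq 0$, so $x\in\mathbb F_{q^2}^{*}$ automatically. Hence $|L_c\cap\mathcal U_\nu|=1+\chi\bigl((c^2-4\nu)/(4\Delta)\bigr)$. Since $4$ is a nonzero square and $\Delta$ is a nonsquare, multiplicativity of $\chi$ gives $\chi\bigl((c^2-4\nu)/(4\Delta)\bigr)=-\chi(c^2-4\nu)$. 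This also holds when $c^2=4\nu$, since both sides are then $0$. This yields $1-\chi(c^2-4\nu)$.

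\emph{Second formula.} A point $x\in K_e$ has $t=-e/(2\Delta)$ and $s\in\mathbb F_q$ arbitrary, so
\[
 x\in\mathcal U_\nu\iff s^2-\frac{e^2}{4\Delta}=\nu\iff s^2=\nu+\frac{e^2}{4\Delta}.
\]
Counting the solutions $s$ gives $1+\chi\bigl(\nu+e^2/(4\Delta)\bigr)$ directly, again using $\nu\ne0$ to exclude $x=0$.

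The argument is elementary. The only points needing care are the sign flip coming from the nonsquare $\Delta$ in the first formula, and the degenerate case where the right-hand side vanishes, which gives one solution and is consistent with $\chi(0)=0$.
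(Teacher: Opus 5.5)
Your proof is correct and follows essentially the same route as the paper: fix one coordinate via the trace condition in the basis $\{1,\omega\}$, reduce the norm condition to $t^2=(c^2-4\nu)/(4\Delta)$ or $s^2=\nu+e^2/(4\Delta)$, count solutions as $1+\chi(\cdot)$, and use $\chi(\Delta)=-1$ for the sign flip. Your remark that $\nu\ne0$ forces $x\ne0$, so that the intersection lies in $\mathbb F_{q^2}^{*}$, is a small point the paper leaves implicit.
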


\begin{proof}
Every element of $L_c$ is uniquely expressible as $c/2+t\omega$, where $t\in\mathbb F_q$, and has norm $c^2/4-\Delta t^2$. Thus the first intersection size equals the number of solutions of
\[
 t^2=\frac{c^2-4\nu}{4\Delta}
\]
in $\mathbb F_q$, namely $1+\chi((c^2-4\nu)/(4\Delta))$. Since $\chi(\Delta)=-1$, this equals $1-\chi(c^2-4\nu)$. In particular, the intersection size is 1, 2, or 0 according as $c^2-4\nu$ is zero, a nonsquare, or a nonzero square.

Similarly, every element of $K_e$ is uniquely expressible as $s-e/(2\omega)$, where $s\in\mathbb F_q$, and has norm $s^2-e^2/(4\Delta)$. The second intersection size is therefore the number of solutions of
\[
s^2=\nu+\frac{e^2}{4\Delta}\]
in $\mathbb F_q$, giving the stated formula.
\end{proof}

We next record the monic vanishing polynomials of these sets, which will provide explicit choices of the factors $P_i$ in Theorem~\ref{thm:general-construction}.

\begin{lemma}\label{lem:basic-vanishing-polynomials}
For $c,e\in\mathbb F_q$, $\nu\in\mathbb F_q^{*}$, and $\delta\in U$, the corresponding monic vanishing polynomials are
\begin{equation}\label{eq:basic-vanishing-polynomials}
\begin{aligned}
 F_{L_c}(x)&=x^q+x-c,&
 F_{\mathcal U_\nu}(x)&=x^{q+1}-\nu,\\
 F_{K_e}(x)&=x^q-x-e/\omega,&
 F_{M_\delta}(x)&=x^{q-1}-\delta.
\end{aligned}
\end{equation}
In particular, all four polynomials have only simple roots.
\end{lemma}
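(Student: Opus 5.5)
The plan is to compare each displayed polynomial with the vanishing polynomial of the corresponding set. Each polynomial is monic of degree $q$, $q+1$, $q$, and $q-1$ respectively. Each listed set has exactly that many elements: Proposition~\ref{prop:intersection-formulas} gives $|L_c|=|K_e|=q$, and Lemma~\ref{lem:coset-intersections}(i),(ii) give $|M_\delta|=q-1$ and $|\mathcal U_\nu|=q+1$. So it suffices to show that every element of the set is a root of the displayed polynomial. A monic polynomial of degree $N$ that vanishes on $N$ distinct points equals the product of the corresponding linear factors, so it is the monic vanishing polynomial. This also shows that it has only simple roots.

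I will check membership set by set.

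\begin{enumerate}[label=\textup{(\alph*)}]
\item If $x\in L_c$, then $x^q+x=\Tr(x)=c$, so $x$ is a root of $x^q+x-c$.
\item If $x\in\mathcal U_\nu$, then $x^{q+1}=\N(x)=\nu$, so $x$ is a root of $x^{q+1}-\nu$.
\item If $x\in M_\delta$, then $x^{q-1}=\delta$ by definition.
\item For $K_e$, use the identity $\Tr(-\omega x)=\omega(x^q-x)$ noted after the definition, which follows from $\omega^q=-\omega$. The condition $\Tr(-\omega x)=e$ then becomes $x^q-x=e/\omega$, since $\omega\ne0$.
\end{enumerate}

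There is no real obstacle. The only points needing care are the sign computation $(-\omega x)^q=\omega x^q$ for $K_e$, and the cardinality count, which guarantees that the roots are distinct rather than merely that the set is contained in the root set.
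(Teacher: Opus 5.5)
Your proposal is correct and follows essentially the same route as the paper: each set lies in the zero locus of its displayed polynomial, so the monic vanishing polynomial of the set divides it, and since the degree equals the cardinality and both polynomials are monic, they are equal and the roots are simple. Your explicit check $(-\omega x)^q=\omega x^q$ for $K_e$ only spells out the identity $\Tr(-\omega x)=\omega(x^q-x)$ that the paper records when it defines $K_e$.
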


\begin{proof}
Let $S\in\{L_c,\mathcal U_\nu,K_e,M_\delta\}$, and let $P_S(x)$ denote the corresponding polynomial on the right-hand side of \eqref{eq:basic-vanishing-polynomials}. By definition, $P_S(\alpha)=0$ for every $\alpha\in S$, and hence
\[
 F_S(x)=\prod_{\alpha\in S}(x-\alpha)\mid P_S(x).
\]
The cardinalities established above imply $\deg P_S=|S|=\deg F_S$. Since $P_S$ and $F_S$ are both monic, $P_S=F_S$. The linear factors in the product are distinct, so all four polynomials have only simple roots.
\end{proof}

We now use these intersection formulas and vanishing polynomials to construct five families of Hermitian self-orthogonal GRS codes whose locator sets are complements of unions of selected subsets.
\subsection{The first family of Hermitian self-orthogonal GRS codes}

The first construction uses sets $\mathcal U_\nu$ with prescribed norm values and $\mathbb F_q^{*}$-cosets $M_\delta$.

Retain the notation $g$, $\rho$, and $\xi$ introduced above, and define
\[
 U_0=\{\rho^j:j\equiv0\pmod2\},
 \qquad
 U_1=\{\rho^j:j\equiv1\pmod2\}.
\]
Choose $\mathcal N_0$ and $\mathcal N_1$ to be subsets of the squares and nonsquares of $\mathbb F_q^{*}$, respectively, and write $|\mathcal N_i|=a_i\in\{0,1,\ldots,(q-1)/2\}$ for $i=0,1$. Choose arbitrary $D_i\subseteq U_i$ with $|D_i|=b_i\in\{0,1,\ldots,(q+1)/2\}$ for $i=0,1$. Put $a=a_0+a_1,  b=b_0+b_1\ge1,\ \kappa=a_0b_0+a_1b_1$, and assume $a+b\le q-1$. Define
\[
 B_U=\bigcup_{\nu\in\mathcal N_0\cup\mathcal N_1}\mathcal U_\nu,
 \qquad
 B_M=\bigcup_{\delta\in D_0\cup D_1}M_\delta,
 \qquad
 A=\mathbb F_{q^2}^{*}\setminus(B_U\cup B_M),
\]
and
\[
 N_1=q^2-1-a(q+1)-b(q-1)+2\kappa.
\]

\begin{theorem}\label{thm:first-grs}
For every integer $1\le k\le q-a-b$, there exists $\boldsymbol v=(v_\alpha)_{\alpha\in A}\in (\mathbb F_{q^2}^{*})^{N_1}$ such that $\GRS_k(A,\boldsymbol v)$ is a Hermitian self-orthogonal $[N_1,k,N_1-k+1]_{q^2}$ MDS code.
\end{theorem}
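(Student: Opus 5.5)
The plan is to apply Lemma~\ref{lem:q-power-calculation} with $X=\mathbb F_{q^2}^{*}$. The subsets are $B_i$ ranging over the sets $\mathcal U_\nu$ for $\nu\in\mathcal N_0\cup\mathcal N_1$ and the cosets $M_\delta$ for $\delta\in D=D_0\cup D_1$, all with $c_i=1$.

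\emph{Setting up $P$.} By Lemma~\ref{lem:basic-vanishing-polynomials},
\[
P(x)=H(x)Q(x),\qquad H(x)=\prod_{\nu\in\mathcal N_0\cup\mathcal N_1}(x^{q+1}-\nu),\qquad Q(x)=\prod_{\delta\in D}(x^{q-1}-\delta),
\]
with $|D|=b\ge1$. Since $0\notin A\subseteq\mathbb F_{q^2}^{*}$, the hypothesis $0\notin A$ holds. We must also check $B\ne X$; this will follow once the count below gives $N_1\ge1$.

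\emph{Verifying the hypothesis on $H$.} For $\alpha\in A$, each factor $\alpha^{q+1}-\nu$ lies in $\mathbb F_q$ because $\N(\alpha)\in\mathbb F_q$ and $\nu\in\mathbb F_q$. It is nonzero because $\alpha\notin\mathcal U_\nu$. Hence $H(\alpha)\in\mathbb F_q^{*}$.

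\emph{Degree condition.} We have $\deg H=a(q+1)$, so \eqref{eq:specialized-degree} reads
\[
(q+1)k\le q^2+q-1-bq-a(q+1)=(q+1)(q-a-b)+(b-1).
\]
Since $b\ge1$, every $k\le q-a-b$ satisfies it, and $k\ge1$ is possible because $a+b\le q-1$. Lemma~\ref{lem:q-power-calculation} then gives the Hermitian self-orthogonal MDS code of length $n=|A|$.

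\emph{Computing $n$.} This is the main bookkeeping step and uses inclusion--exclusion. The sets $\mathcal U_\nu$ are pairwise disjoint, and so are the $M_\delta$, by Lemma~\ref{lem:coset-intersections}(i),(ii). Therefore
\[
|B|=a(q+1)+b(q-1)-\sum_{\nu,\delta}|\mathcal U_\nu\cap M_\delta|.
\]
Write $\nu=\xi^i$ and $\delta=\rho^j$. By Lemma~\ref{lem:coset-intersections}(iii), each intersection has size $2$ when $i\equiv j\pmod 2$ and $0$ otherwise. The parity of $i$ records whether $\nu$ is a square, i.e.\ whether $\nu\in\mathcal N_0$ or $\mathcal N_1$. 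The parity of $j$ records whether $\delta\in U_0$ or $U_1$; this is well defined because $|U|=q+1$ is even. So the intersection sum is $2(a_0b_0+a_1b_1)=2\kappa$, and $n=q^2-1-|B|=N_1$.

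Finally, $N_1\ge1$ follows because the code requires $k\le n-1$; alternatively it follows directly, since $a+b\le q-1$ forces $|B|\le(q-1)(q+1)-2<q^2-1$. I expect no real obstacle beyond the parity matching in this count.
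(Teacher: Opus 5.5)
Your proposal is correct and follows the paper's proof essentially step for step. Both apply Lemma~\ref{lem:q-power-calculation} with $X=\mathbb F_{q^2}^{*}$, take $H$ as the norm product and $Q$ as the coset product, use the same degree identity $(q+1)(q-a-b)+(b-1)$, and get the parity-matched intersection count $2\kappa$ from Lemma~\ref{lem:coset-intersections}(iii). Your first justification of $N_1\ge1$ is circular, since $B\ne X$ must hold before the lemma can be invoked, but your direct bound $|B|\le q^2-3$ settles it, as the paper's bound $N_1\ge 2b+2\kappa>0$ does.
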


\begin{proof}
By Lemma~\ref{lem:coset-intersections}, the sets $\mathcal U_\nu$ for distinct norm values $\nu$ are pairwise disjoint and each has $q+1$ elements. Likewise, the sets $M_\delta$ for distinct parameters are pairwise disjoint and each has $q-1$ elements. The same lemma shows that, for $i,j\in\{0,1\}$, $\nu\in\mathcal N_i$, and $\delta\in D_j$, the sets $\mathcal U_\nu$ and $M_\delta$ intersect if and only if $i=j$, and each nonempty intersection has two elements. Distinct parameter pairs give disjoint intersections. Hence
\[
 |B_U\cap B_M|=2a_0b_0+2a_1b_1=2\kappa.
\]
The inclusion--exclusion principle yields $|A|=N_1$. Moreover, $a+b\le q-1$ implies $N_1\ge2b+2\kappa>0$.

Put $D=D_0\cup D_1$ and define
\[
\begin{aligned}
 P_U(x)=\prod_{\nu\in\mathcal N_0\cup\mathcal N_1}
 (x^{q+1}-\nu),
 \qquad
 P_M(x)=\prod_{\delta\in D}(x^{q-1}-\delta).
\end{aligned}
\]
By Lemma~\ref{lem:basic-vanishing-polynomials}, we have $P_U=F_{B_U}$ and $P_M=F_{B_M}$.

Set $P=P_UP_M$. In Lemma~\ref{lem:q-power-calculation}, take $H=P_U$ and $Q=P_M$. For every $\alpha\in A$, we have $\alpha\notin B_U$, so each factor $\alpha^{q+1}-\nu$ is a nonzero element of $\mathbb F_q$. Hence $H(\alpha)\in\mathbb F_q^{*}$.

Furthermore, $X=\mathbb F_{q^2}^{*}$, so $0\notin A$. Since $D_0\subseteq U_0$ and $D_1\subseteq U_1$, we have $D=D_0\cup D_1\subseteq U$ and $b\ge1$.
For every positive integer $k$ with $k\le q-a-b$,
\[
\begin{aligned}
 &q^2+q-1-bq-\deg H-(q+1)k\\
 &\qquad=(q+1)(q-a-b-k)+(b-1)\ge0.
\end{aligned}
\]
Thus the degree condition~\eqref{eq:specialized-degree} in Lemma~\ref{lem:q-power-calculation} holds, and the result follows from that lemma.
\end{proof}

\subsection{The second family of Hermitian self-orthogonal GRS codes}

In the second construction, the sets $\mathcal U_\nu$ in the first construction are replaced by sets $L_c$ with prescribed trace values.

For $c\in\mathbb F_q$ and $\delta\in U$, let $L_c$ and $M_\delta$ be as in Proposition~\ref{prop:intersection-formulas}. Choose $C=\{0\}\cup C_1$, where $C_1\subseteq\mathbb F_q^{*}$ and $|C_1|=u\ge0$, and choose a nonempty subset $D\subseteq U\setminus\{-1\}$ with $|D|=b\ge1$. Assume $u+b\le q-2$. Put
\[
 L_C=\bigcup_{c\in C}L_c,
 \qquad
 M_D=\bigcup_{\delta\in D}M_\delta,
 \qquad
 A=\mathbb F_{q^2}\setminus(L_C\cup M_D),
\]
and
\[
 N_2=q^2-(u+1)q-b(q-1)+ub.
\]

\begin{theorem}\label{thm:second-grs}
For every integer $1\le k\le q-u-b-1$, there exists $\boldsymbol v=(v_\alpha)_{\alpha\in A}\in (\mathbb F_{q^2}^{*})^{N_2}$ such that $\GRS_k(A,\boldsymbol v)$ is a Hermitian self-orthogonal $[N_2,k,N_2-k+1]_{q^2}$ MDS code.
\end{theorem}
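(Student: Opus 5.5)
The plan is to mirror the proof of Theorem~\ref{thm:first-grs}: compute $|A|$ by inclusion--exclusion, then apply Lemma~\ref{lem:q-power-calculation} with $X=\mathbb F_{q^2}$, taking $H$ to be the trace factor and $Q$ the coset factor.

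\textbf{Step 1: the cardinality $|A|=N_2$.} By Proposition~\ref{prop:intersection-formulas}, the sets $L_c$ for $c\in C$ are pairwise disjoint of size $q$, so $|L_C|=(u+1)q$. By Lemma~\ref{lem:coset-intersections}, the sets $M_\delta$ are disjoint of size $q-1$, so $|M_D|=b(q-1)$. Since $-1\notin D$, Proposition~\ref{prop:intersection-formulas} gives $L_0\cap M_\delta=\varnothing$ and $|L_c\cap M_\delta|=1$ for $c\in C_1$. Distinct pairs $(c,\delta)$ give disjoint intersections, so $|L_C\cap M_D|=ub$ and $|A|=N_2$. The bound $u+b\le q-2$ shows $L_C\cup M_D\ne\mathbb F_{q^2}$; in fact $N_2=(q-u-b)(q-b)+b(u+1)-... $, and it suffices to check $N_2>0$ directly.

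\textbf{Step 2: checking the hypotheses of Lemma~\ref{lem:q-power-calculation}.} The key observation is $0\in L_0\subseteq L_C$, so $0\notin A$ even though $X=\mathbb F_{q^2}$; this is exactly why $0$ is forced into $C$. Put
\[
H(x)=\prod_{c\in C}(x^q+x-c),\qquad Q(x)=\prod_{\delta\in D}(x^{q-1}-\delta).
\]
By Lemma~\ref{lem:basic-vanishing-polynomials}, these are $F_{L_C}$ and $F_{M_D}$, so $P=HQ$ fits Lemma~\ref{lem:union-vanishing} with all $c_i=1$. For $\alpha\in A$, each factor $\alpha^q+\alpha-c=\Tr(\alpha)-c$ lies in $\mathbb F_q$ and is nonzero because $\alpha\notin L_c$. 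Hence $H(\alpha)\in\mathbb F_q^{*}$. Also $D\subseteq U$ and $b\ge1$.

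\textbf{Step 3: the degree condition.} Here $\deg H=(u+1)q$, and we need \eqref{eq:specialized-degree}:
\[
(q+1)k\le q^2+q-1-bq-(u+1)q.
\]
For $k\le q-u-b-1$ we have $(q+1)k\le(q+1)(q-u-b-1)$. The difference between the right-hand side above and this bound is
\[
q^2-1-(u+b)q-(q+1)(q-u-b-1)=u+b\ge 0.
\]
So the degree condition holds, and Lemma~\ref{lem:q-power-calculation} yields the code.

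\textbf{Main obstacle.} The main delicacy is verifying that $0\notin A$ when the ambient set is $\mathbb F_{q^2}$, and that the intersection count is exact. The latter uses $-1\notin D$, so that every $L_c\cap M_\delta$ is a single point for $c\ne0$ and empty for $c=0$. Everything else is routine.
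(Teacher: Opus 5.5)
Your proposal is correct and follows the paper's proof essentially step by step. Like the paper, you get $|L_C\cap M_D|=ub$ by inclusion--exclusion, use $0\in L_0$ to get $0\notin A$, apply Lemma~\ref{lem:q-power-calculation} with $X=\mathbb F_{q^2}$, $H=P_C$, $Q=P_M$, and verify the same degree inequality, which has slack exactly $u+b$. One cosmetic fix: replace the dangling expression for $N_2$ in Step 1 with the factorization $N_2=(q-u-1)(q-b)$. Both factors are positive because $u+b\le q-2$, and this is how the paper shows $A\neq\varnothing$.
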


\begin{proof}
By Proposition~\ref{prop:intersection-formulas}, each $M_\delta$ meets each $L_c$ with $c\ne0$ in exactly one point and is disjoint from $L_0$. Distinct parameter pairs $(c,\delta)$ give distinct intersection points. Hence $|L_C\cap M_D|=ub$. The inclusion--exclusion principle yields $|A|=N_2=(q-u-1)(q-b)>0$, since both factors are positive by $u+b\le q-2$.

Define
\[
 P_C(x)=\prod_{c\in C}(x^q+x-c),
 \qquad
 P_M(x)=\prod_{\delta\in D}(x^{q-1}-\delta).
\]
By Lemma~\ref{lem:basic-vanishing-polynomials}, we have $P_C=F_{L_C}$ and $P_M=F_{M_D}$.

Since $0\in L_0$, we have $0\notin A$. In Lemma~\ref{lem:q-power-calculation}, take $X=\mathbb F_{q^2}$, $H=P_C$, and $Q=P_M$. Then $P=P_CP_M=HQ$, $c_0=1$, $b\ge1$, and $D\subseteq U$. For every $\alpha\in A$, each factor $\Tr(\alpha)-c$ belongs to $\mathbb F_q^{*}$, so $H(\alpha)\in\mathbb F_q^{*}$. Moreover, $\deg H=(u+1)q$.

For every positive integer $k$ with $k\le q-u-b-1$,
\[
\begin{aligned}
 &q^2+q-1-bq-\deg H-(q+1)k\\
 &\qquad=(q+1)(q-u-b-1-k)+(u+b)\ge0.
\end{aligned}
\]
The result therefore follows from Lemma~\ref{lem:q-power-calculation}.
\end{proof}

\subsection{The third family of Hermitian self-orthogonal GRS codes}

The third construction combines sets $L_c$ with prescribed trace values and sets $\mathcal U_\nu$ with prescribed norm values.

Choose nonempty subsets $C\subseteq\mathbb F_q$ and $\mathcal N\subseteq\mathbb F_q^{*}$ such that $|C|=R\ge1$, $|\mathcal N|=a\ge1$, and $R+a\le q-1$. Put
\[
 L_C=\bigcup_{c\in C}L_c,
 \qquad
 \mathcal U_{\mathcal N}=\bigcup_{\nu\in\mathcal N}\mathcal U_\nu,
 \qquad
 A=\mathbb F_{q^2}\setminus(L_C\cup\mathcal U_{\mathcal N}),
\]
and write
\[
 \sigma=|L_C\cap\mathcal U_{\mathcal N}|
 =\sum_{c\in C,\,\nu\in\mathcal N}
 \bigl(1-\chi(c^2-4\nu)\bigr),
\]
\[
 N_3=q^2-Rq-a(q+1)+\sigma.
\]

\begin{theorem}\label{thm:third-grs}
For every integer $1\le k\le q-R-a$, there exists $\boldsymbol v=(v_\alpha)_{\alpha\in A}\in (\mathbb F_{q^2}^{*})^{N_3}$ such that $\GRS_k(A,\boldsymbol v)$ is a Hermitian self-orthogonal $[N_3,k,N_3-k+1]_{q^2}$ MDS code.
\end{theorem}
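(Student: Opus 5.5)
The plan is to apply Theorem~\ref{thm:general-construction} directly, with $X=\mathbb F_{q^2}$ (so $\varepsilon_X=0$) and $m=0$. The Frobenius condition will hold with $\gamma=1$. Lemma~\ref{lem:q-power-calculation} is not needed, since no $\mathbb F_q^{*}$-cosets occur and $0$ may lie in $A$.

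\emph{Step 1: the length.} I take the subsets $B_i$ to be the sets $L_c$ ($c\in C$) and $\mathcal U_\nu$ ($\nu\in\mathcal N$). Distinct trace values give disjoint sets $L_c$, each of size $q$. By Lemma~\ref{lem:coset-intersections}, distinct norm values give disjoint sets $\mathcal U_\nu$, each of size $q+1$. Each point of $L_C\cap\mathcal U_{\mathcal N}$ therefore lies in exactly one pair $L_c\cap\mathcal U_\nu$. Lemma~\ref{lem:trace-norm-intersections} then gives $|L_C\cap\mathcal U_{\mathcal N}|=\sigma$, and inclusion--exclusion gives $|A|=N_3$.

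I also need $B\neq X$. Using $\sigma\ge0$ and $R+a\le q-1$,
\[
N_3\ge q^2-(R+a)q-a\ge q-a>0 .
\]

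\emph{Step 2: the polynomial $P$.} By Lemma~\ref{lem:basic-vanishing-polynomials}, I take $P_i$ to be the monic polynomials $x^q+x-c$ and $x^{q+1}-\nu$, so $c_0=1$. Then
\[
G(x)=P(x)=\prod_{c\in C}(x^q+x-c)\prod_{\nu\in\mathcal N}(x^{q+1}-\nu),
\qquad \deg P=Rq+a(q+1).
\]
Let $\alpha\in A$. Since $\alpha\notin L_c$, the factor $\Tr(\alpha)-c$ is a nonzero element of $\mathbb F_q$. Since $\alpha\notin\mathcal U_\nu$, the factor $\N(\alpha)-\nu$ is a nonzero element of $\mathbb F_q$. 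Hence $G(\alpha)\in\mathbb F_q^{*}$, so $G(\alpha)\neq0$ and $G(\alpha)^q=G(\alpha)$. This is \eqref{eq:general-frobenius} with $\gamma=1\in U$.

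\emph{Step 3: the degree condition.} For $1\le k\le q-R-a$, a direct expansion gives
\[
q^2+q-1-\deg P-(q+1)k=(q+1)(q-R-a-k)+(R-1)\ge0,
\]
using $R\ge1$. This is \eqref{eq:general-degree}, and Theorem~\ref{thm:general-construction} then yields the required Hermitian self-orthogonal MDS code.

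The only step needing care is the length count in Step~1: the intersections must be counted exactly once, and this rests on the within-family disjointness of the $L_c$ and of the $\mathcal U_\nu$. The remaining steps are routine verifications.
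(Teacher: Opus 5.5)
Your proposal is correct and matches the paper's proof essentially step for step. Both apply Theorem~\ref{thm:general-construction} with $X=\mathbb F_{q^2}$, $m=0$, $\gamma=1$ and $c_0=1$, count $|A|=N_3$ by inclusion--exclusion using within-family disjointness, bound $N_3\ge q-a>0$, and verify the degree condition via $(q+1)(q-R-a-k)+(R-1)\ge0$.
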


\begin{proof}
The sets $L_c$ for distinct trace values $c$ are pairwise disjoint, as are the sets $\mathcal U_\nu$ for distinct norm values $\nu$. Lemma~\ref{lem:trace-norm-intersections} therefore gives the stated expression for $\sigma$. The inclusion--exclusion principle then yields $|A|=N_3$. Since $R+a\le q-1$, we have $N_3\ge q-a>0$.

Define
\[
 P_C(x)=\prod_{c\in C}(x^q+x-c),
 \qquad
 P_{\mathcal N}(x)=\prod_{\nu\in\mathcal N}(x^{q+1}-\nu).
\]
By Lemma~\ref{lem:basic-vanishing-polynomials}, we have $P_C=F_{L_C}$ and $P_{\mathcal N}=F_{\mathcal U_{\mathcal N}}$. Put $P=P_CP_{\mathcal N}$, so $\deg P=Rq+a(q+1)$. For every $\alpha\in A$, we have $P_C(\alpha),\ P_{\mathcal N}(\alpha)\in\mathbb F_q^{*}$, and hence $P(\alpha)^q=P(\alpha)$. Thus the hypotheses of Theorem~\ref{thm:general-construction} hold with $X=\mathbb F_{q^2}$, $\varepsilon_X=0$, $m=0$, $\gamma=1$, and $c_0=1$.

For every positive integer $k$ with $k\le q-R-a$,
\[
\begin{aligned}
 &q^2+q-1-\deg P-(q+1)k\\
 &\qquad=(q+1)(q-R-a-k)+(R-1)\ge0.
\end{aligned}
\]
Thus the degree condition~\eqref{eq:general-degree} in Theorem~\ref{thm:general-construction} holds, proving the result.
\end{proof}

\subsection{The fourth family of Hermitian self-orthogonal GRS codes}

The fourth construction combines the two families of sets $L_c$ and $K_e$ with prescribed trace values with the $\mathbb F_q^{*}$-cosets $M_\delta$.

Choose $\omega\in\mathbb F_{q^2}^{*}$ with $\omega^q=-\omega$. Let $C=\{0\}\cup C_1$, where $C_1\subseteq\mathbb F_q^{*}$ and $|C_1|=u\ge1$, and choose $E\subseteq\mathbb F_q^{*}$ with $|E|=v\ge1$. Choose a nonempty subset $D\subseteq U\setminus\{\pm1\}$ and write $|D|=b\ge1$. Assume $u+v+b\le q-2$. Put
\[
 L_C=\bigcup_{c\in C}L_c,
 \quad
 K_E=\bigcup_{e\in E}K_e,
 \quad
 M_D=\bigcup_{\delta\in D}M_\delta,
 \quad
 A=\mathbb F_{q^2}\setminus(L_C\cup K_E\cup M_D),
\]
and define
\[
 \tau_D
 =\#\left\{(c,e)\in C_1\times E:
 \left(\frac12\left(c-\frac e\omega\right)\right)^{q-1}\in D\right\},
\]
\[
 \begin{aligned}
 N_4={}&q^2-(u+1)q-vq-b(q-1)\\
       &+(u+1)v+b(u+v)-\tau_D.
\end{aligned}
\]

\begin{theorem}\label{thm:fourth-grs}
For every integer $1\le k\le q-u-v-b-1$, there exists $\boldsymbol v=(v_\alpha)_{\alpha\in A}\in (\mathbb F_{q^2}^{*})^{N_4}$ such that $\GRS_k(A,\boldsymbol v)$ is a Hermitian self-orthogonal $[N_4,k,N_4-k+1]_{q^2}$ MDS code.
\end{theorem}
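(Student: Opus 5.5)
The plan is to apply Lemma~\ref{lem:q-power-calculation} with $X=\mathbb F_{q^2}$. We take $H$ to be a suitably normalized product of the trace-type vanishing polynomials and $Q=P_M=\prod_{\delta\in D}(x^{q-1}-\delta)$. First we compute $|A|$ by inclusion--exclusion, and then we check the hypotheses of that lemma.

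\emph{Counting.} By Proposition~\ref{prop:intersection-formulas} we have $|L_C|=(u+1)q$, $|K_E|=vq$ and $|M_D|=b(q-1)$. The pairwise intersections are as follows.
\begin{itemize}
\item Each $L_c\cap K_e$ is the single point $x_{c,e}$, so $|L_C\cap K_E|=(u+1)v$.
\item $L_0\cap M_\delta=\varnothing$ because $\delta\ne-1$, and each $L_c$ with $c\ne0$ meets each $M_\delta$ in one point. Distinct pairs give distinct points, so $|L_C\cap M_D|=ub$.
\item Likewise $|K_E\cap M_D|=vb$, using $\delta\ne1$ and $0\notin E$.
\end{itemize}
For the triple intersection, a point of $L_c\cap K_e$ is $x_{c,e}$, and it lies in $M_D$ iff $x_{c,e}^{q-1}\in D$. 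For $c=0$ we get $x_{0,e}=-e/(2\omega)$. Hence $x_{0,e}^{q-1}=\omega^{1-q}=-1\notin D$. So only pairs in $C_1\times E$ contribute, and the triple intersection has size $\tau_D$, since distinct pairs $(c,e)$ give distinct points. Inclusion--exclusion then gives $|A|=N_4$.

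\emph{Positivity of $N_4$.} Write $x=s+t\omega$. The conditions $x\notin L_C$ and $x\notin K_E$ restrict $s$ to $q-u-1$ values, all nonzero because $0\in C$, and $t$ to $q-v$ values. For a fixed $s\ne0$, each coset $M_\delta=g^j\mathbb F_q^{*}$ is a punctured line through the origin. It therefore contains at most one point with that $s$-coordinate. Thus each admissible row keeps at least $q-v-b\ge u+2>0$ points, and $N_4>0$. Also $0\in L_0$, so $0\notin A$.

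\emph{Applying the lemma.} The one subtlety is that $F_{K_e}(\alpha)=\alpha^q-\alpha-e/\omega=(\Tr(-\omega\alpha)-e)/\omega$ is not in $\mathbb F_q$. To fix this, we use the freedom of constants $c_i$ in Lemma~\ref{lem:union-vanishing} and take the factors
\[
P_i=\omega F_{K_e}=\omega(x^q-x)-e\quad\text{for } e\in E,\qquad P_i=x^q+x-c\quad\text{for } c\in C,
\]
so that $c_0=\omega^v$. Set
\[
H=\prod_{c\in C}(x^q+x-c)\prod_{e\in E}\bigl(\omega(x^q-x)-e\bigr).
\]
For $\alpha\in A$, every factor equals $\Tr(\alpha)-c$ or $\Tr(-\omega\alpha)-e$, and each of these is a nonzero element of $\mathbb F_q$. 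Hence $H(\alpha)\in\mathbb F_q^{*}$, and $P=HQ$ with $D\subseteq U$ and $b\ge1$. Since $\deg H=(u+v+1)q$, we get
\[
q^2+q-1-bq-\deg H-(q+1)k=(q+1)(q-u-v-b-1-k)+(u+v+b)\ge0
\]
for $k\le q-u-v-b-1$. Lemma~\ref{lem:q-power-calculation} then yields the result.

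The main obstacle is this normalization of the $K_e$ factors. One also has to check that Theorem~\ref{thm:general-construction}, and hence Lemma~\ref{lem:q-power-calculation}, really allows non-monic $P_i$. It does, because $h$ absorbs the factor $c_0$.
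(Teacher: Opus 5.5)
Your proposal is correct and follows the paper's proof essentially step for step: the same inclusion--exclusion count, the same normalization $P_E=\prod_{e\in E}(\omega(x^q-x)-e)=\omega^vF_{K_E}$ with $c_0=\omega^v$, and the same application of Lemma~\ref{lem:q-power-calculation} with $H=P_CP_E$ and $Q=P_M$, including the identical degree identity. The only differences are minor: you check explicitly that pairs with $c=0$ do not contribute to the triple intersection (which also follows from $L_0\cap M_D=\varnothing$), and you prove $N_4>0$ by a row-by-row count in the coordinates $x=s+t\omega$, whereas the paper uses the cruder bound $|L_C|+|K_E|+|M_D|\le q^2-q-b$ to obtain $N_4\ge q+b$.
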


\begin{proof}
By Proposition~\ref{prop:intersection-formulas},
\[
 |L_C\cap K_E|=(u+1)v,
 \qquad
 |L_C\cap M_D|=ub,
 \qquad
 |K_E\cap M_D|=vb.
\]
The triple intersection has size $\tau_D$. Hence the inclusion--exclusion principle yields $|A|=N_4$. Since $u+v+b\le q-2$, the sum of the cardinalities of the three deleted sets is at most $q^2-q-b$. Thus $N_4\ge q+b>0$.

Define
\[
 P_C(x)=\prod_{c\in C}(x^q+x-c),
 \qquad
 P_E(x)=\prod_{e\in E}(\omega(x^q-x)-e),
\]
\[
 P_M(x)=\prod_{\delta\in D}(x^{q-1}-\delta).
\]
By Lemma~\ref{lem:basic-vanishing-polynomials}, we have $P_C=F_{L_C}$, $P_E=\omega^vF_{K_E}$, and $P_M=F_{M_D}$.

Since $0\in L_0$, we have $0\notin A$. In Lemma~\ref{lem:q-power-calculation}, take $X=\mathbb F_{q^2}$, $H=P_CP_E$, and $Q=P_M$. Then $P=P_CP_EP_M=HQ$, $c_0=\omega^v$, $b\ge1$, and $D\subseteq U$. For every $\alpha\in A$, the factors of the first two polynomials have the forms $\Tr(\alpha)-c$ and $\Tr(-\omega\alpha)-e$, respectively. Since $\alpha\notin L_C\cup K_E$, all these factors belong to $\mathbb F_q^{*}$, and hence $H(\alpha)\in\mathbb F_q^{*}$. Comparing degrees gives $\deg H=(u+v+1)q$.

For every positive integer $k$ with $k\le q-u-v-b-1$,
\[
\begin{aligned}
 &q^2+q-1-bq-\deg H-(q+1)k\\
 &\qquad=(q+1)(q-u-v-b-1-k)+(u+v+b)\ge0.
\end{aligned}
\]
The result therefore follows from Lemma~\ref{lem:q-power-calculation}.
\end{proof}

\subsection{The fifth family of Hermitian self-orthogonal GRS codes}

The fifth construction combines the two families of sets $L_c$ and $K_e$ with prescribed trace values with the sets $\mathcal U_\nu$ having prescribed norm values.

Choose nonempty subsets $C,E\subseteq\mathbb F_q$ and $\mathcal N\subseteq\mathbb F_q^{*}$ such that $|C|=R\ge1$, $|E|=V\ge1$, $|\mathcal N|=a\ge1$, and $R+V+a\le q-1$. Put
\[
 L_C=\bigcup_{c\in C}L_c,
 \quad
 K_E=\bigcup_{e\in E}K_e,
 \quad
 \mathcal U_{\mathcal N}=\bigcup_{\nu\in\mathcal N}\mathcal U_\nu,
 \quad
 A=\mathbb F_{q^2}\setminus(L_C\cup K_E\cup\mathcal U_{\mathcal N}).
\]
Write
\[
\begin{aligned}
 \sigma_+&=|L_C\cap\mathcal U_{\mathcal N}|
 =\sum_{c\in C,\,\nu\in\mathcal N}
 \bigl(1-\chi(c^2-4\nu)\bigr),\\
 \sigma_-&=|K_E\cap\mathcal U_{\mathcal N}|
 =\sum_{e\in E,\,\nu\in\mathcal N}
 \left(1+\chi\left(\nu+\frac{e^2}{4\Delta}\right)\right),
\end{aligned}
\]
\[
 \tau_N=\#\left\{(c,e)\in C\times E:
 \frac{c^2-e^2/\Delta}{4}\in\mathcal N\right\},
\]
and
\[
 \begin{aligned}
 N_5={}&q^2-Rq-Vq-a(q+1)\\
       &+RV+\sigma_++\sigma_- -\tau_N.
\end{aligned}
\]

\begin{theorem}\label{thm:fifth-grs}
For every integer $1\le k\le q-R-V-a$, there exists $\boldsymbol v=(v_\alpha)_{\alpha\in A}\in (\mathbb F_{q^2}^{*})^{N_5}$ such that $\GRS_k(A,\boldsymbol v)$ is a Hermitian self-orthogonal $[N_5,k,N_5-k+1]_{q^2}$ MDS code.
\end{theorem}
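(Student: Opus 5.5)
The plan mirrors the proof of Theorem~\ref{thm:third-grs}, applying Theorem~\ref{thm:general-construction} with $X=\mathbb F_{q^2}$, $m=0$ and $\gamma=1$. The length and the Frobenius condition are routine; the only bookkeeping is in the degree estimate.

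First I verify that $|A|=N_5$ by inclusion--exclusion over the three deleted sets $L_C$, $K_E$ and $\mathcal U_{\mathcal N}$. Within each family the sets are pairwise disjoint. This gives $|L_C|=Rq$, $|K_E|=Vq$ and $|\mathcal U_{\mathcal N}|=a(q+1)$.

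For the pairwise intersections:
\begin{itemize}
\item By Proposition~\ref{prop:intersection-formulas}, each $L_c$ meets each $K_e$ in exactly the point $x_{c,e}$. Distinct pairs give distinct points, so $|L_C\cap K_E|=RV$.
\item Lemma~\ref{lem:trace-norm-intersections} gives $|L_C\cap\mathcal U_{\mathcal N}|=\sigma_+$ and $|K_E\cap\mathcal U_{\mathcal N}|=\sigma_-$.
\end{itemize}
The triple intersection consists of those points $x_{c,e}$ whose norm lies in $\mathcal N$. Using $\N(x_{c,e})=(c^2-e^2/\Delta)/4$, its size is $\tau_N$. Inclusion--exclusion then yields $N_5$.

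I also need $N_5>0$, so that $B\ne X$. Since $R+V+a\le q-1$, the union has size at most $Rq+Vq+a(q+1)\le q^2-q+a$. Hence $N_5\ge q-a>0$.

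Next, define
\[
 P_C(x)=\prod_{c\in C}(x^q+x-c),\qquad P_E(x)=\prod_{e\in E}(\omega(x^q-x)-e),\qquad P_{\mathcal N}(x)=\prod_{\nu\in\mathcal N}(x^{q+1}-\nu).
\]
By Lemma~\ref{lem:basic-vanishing-polynomials}, $P_C=F_{L_C}$, $P_E=\omega^VF_{K_E}$ and $P_{\mathcal N}=F_{\mathcal U_{\mathcal N}}$. Set $P=P_CP_EP_{\mathcal N}$, so that $c_0=\omega^V$.

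For $\alpha\in A$, every factor of $P(\alpha)$ has one of the forms $\Tr(\alpha)-c$, $\Tr(-\omega\alpha)-e$ or $\N(\alpha)-\nu$. Each lies in $\mathbb F_q$, and each is nonzero because $\alpha$ lies outside the deleted sets. Hence $P(\alpha)\in\mathbb F_q^{*}$. With $m=0$ this gives $G(\alpha)\ne0$ and $G(\alpha)^q=G(\alpha)$, so \eqref{eq:general-frobenius} holds with $\gamma=1\in U$. Also $m=0\ge\varepsilon_X=0$.

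Finally, $\deg P=(R+V)q+a(q+1)$. For $1\le k\le q-R-V-a$ I check \eqref{eq:general-degree} via
\[
 q^2+q-1-\deg P-(q+1)k=(q+1)(q-R-V-a-k)+(R+V-1),
\]
and the right-hand side is nonnegative because $R+V\ge2$. Theorem~\ref{thm:general-construction} then gives the conclusion.

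The only step needing care is the triple-intersection count. There one must note that $x_{c,e}$ may be $0$ (when $c=e=0$), but $0$ has norm $0\notin\mathcal N$, so it is correctly excluded from $\tau_N$.
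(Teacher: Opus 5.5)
Your proposal is correct and follows essentially the same route as the paper. Both arguments use inclusion--exclusion to get $|A|=N_5$ with $N_5\ge q-a>0$, then apply Theorem~\ref{thm:general-construction} with $X=\mathbb F_{q^2}$, $m=0$, $\gamma=1$, $c_0=\omega^V$, and the identical degree identity. Your remark that $x_{0,0}=0$ is automatically excluded from $\tau_N$ (because $\mathcal N\subseteq\mathbb F_q^{*}$) is a small check that the paper leaves implicit.
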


\begin{proof}
Within each of the three families, sets corresponding to distinct parameters are pairwise disjoint. Lemma~\ref{lem:trace-norm-intersections} therefore gives the stated expressions for $\sigma_+$ and $\sigma_-$. Proposition~\ref{prop:intersection-formulas} also gives
\[
 |L_C\cap K_E|=RV.
\]
The triple intersection has size $\tau_N$. Hence the inclusion--exclusion principle yields $|A|=N_5$. Since $R+V+a\le q-1$, the sum of the cardinalities of the three deleted sets is at most $q(q-1)+a$. Thus $N_5\ge q-a>0$.

Put
\[
 P_C(x)=\prod_{c\in C}(x^q+x-c),
 \quad
 P_E(x)=\prod_{e\in E}(\omega(x^q-x)-e),
\]
\[
 P_{\mathcal N}(x)=\prod_{\nu\in\mathcal N}(x^{q+1}-\nu).
\]
By Lemma~\ref{lem:basic-vanishing-polynomials}, we have $P_C=F_{L_C}$, $P_E=\omega^VF_{K_E}$, and $P_{\mathcal N}=F_{\mathcal U_{\mathcal N}}$. Set $P=P_CP_EP_{\mathcal N}$. Its leading coefficient is $\omega^V$, and $\deg P=(R+V)q+a(q+1)$. For every $\alpha\in A$, all three factors take values in $\mathbb F_q^{*}$, so $P(\alpha)^q=P(\alpha)$. Thus the hypotheses of Theorem~\ref{thm:general-construction} hold with $X=\mathbb F_{q^2}$, $\varepsilon_X=0$, $m=0$, $\gamma=1$, and $c_0=\omega^V$.

For every positive integer $k$ with $k\le q-R-V-a$,
\[
\begin{aligned}
 &q^2+q-1-\deg P-(q+1)k\\
 &\qquad=(q+1)(q-R-V-a-k)+(R+V-1)\ge0.
\end{aligned}
\]
Thus the degree condition~\eqref{eq:general-degree} in Theorem~\ref{thm:general-construction} holds, proving the result.
\end{proof}

\section{Quantum MDS codes and parameter comparisons}\label{sec:quantum}\label{sec:comparison}

We now apply the five families of Hermitian self-orthogonal GRS codes in Section~\ref{sec:classical} to construct quantum MDS codes and express their lengths and minimum-distance ranges in a common form. We then compare their parameters with those of codes obtainable from previously known constructions, either directly or via the propagation rule. In particular, we determine the minimum distances attainable at a fixed length by constructions based on trace maps, linear transformations, and cosets of multiplicative subgroups, and give sufficient conditions for strict improvement together with exact gains or lower bounds on the gains. We begin by recalling the Hermitian construction.

\begin{theorem}[Hermitian construction \cite{AshikhminKnill2001,KetkarEtAl2006}]\label{thm:hermitian}
If an \mbox{$[n,k,n-k+1]_{q^2}$} MDS code $C$ satisfies $C\subseteq C^{\perp_H}$, then there exists a pure $q$-ary quantum MDS code with parameters $\qcode{n}{n-2k}{k+1}$.
\end{theorem}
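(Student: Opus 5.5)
The plan is to deduce the statement from the standard stabilizer correspondence between additive codes over $\mathbb F_{q^2}$ and $q$-ary stabilizer codes. The MDS property of the dual then pins down the minimum distance, and the quantum Singleton bound \eqref{eq:intro-singleton} confirms that the code is MDS. I would not rebuild the stabilizer formalism from scratch. Instead I would invoke the theorem of \cite{AshikhminKnill2001,KetkarEtAl2006}: if $C\subseteq\mathbb F_{q^2}^n$ is an $\mathbb F_{q^2}$-linear code with $C\subseteq C^{\perp_H}$, then there is a $q$-ary stabilizer code with parameters $[[n,\,n-2k,\,d]]_q$. Here $d=\operatorname{wt}(C^{\perp_H}\setminus C)$, and the code is pure when $d=\operatorname{wt}(C^{\perp_H})$.

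First I would record the translation underlying that theorem, to make the plan self-contained. Fix an $\mathbb F_q$-basis $\{1,\beta\}$ of $\mathbb F_{q^2}$ and identify $\mathbb F_{q^2}^n$ with $\mathbb F_q^{2n}$. Under this identification the trace-alternating form $\Tr\bigl((x\cdot y^q-x^q\cdot y)/(\beta-\beta^q)\bigr)$ is a scalar multiple of the symplectic form on $\mathbb F_q^{2n}$. For an $\mathbb F_{q^2}$-linear code $C$, the trace-alternating dual coincides with $C^{\perp_H}$. To see this, scale $y$ by $\lambda\in\mathbb F_{q^2}$ and use nondegeneracy of the trace. Hence $C\subseteq C^{\perp_H}$ says exactly that $C$, viewed as an $\mathbb F_q$-subspace of dimension $2k$, is symplectically self-orthogonal. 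It therefore defines an abelian stabilizer group of order $q^{2k}$ in the error group. Its joint eigenspace has dimension $q^{n-2k}$, which gives a quantum code of size $q^{n-2k}$.

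Next I would compute the distance. The undetectable errors are the elements of the symplectic dual minus the stabilizer, that is, $C^{\perp_H}\setminus C$. Since $C$ is an $[n,k,n-k+1]_{q^2}$ MDS code, its Hermitian dual $C^{\perp_H}=(C^{\perp_E})^q$ is also MDS. Its parameters are $[n,n-k,k+1]_{q^2}$, because coordinatewise Frobenius preserves Hamming weight and the Euclidean dual of an MDS code is MDS. Consequently every nonzero vector of $C^{\perp_H}$ has weight at least $k+1$. This gives $d\ge k+1$ and shows that the code is pure. On the other hand, the quantum Singleton bound \eqref{eq:intro-singleton} with dimension $n-2k$ gives $2d\le n-(n-2k)+2$, so $d\le k+1$. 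Hence $d=k+1$ and the code is a pure quantum MDS code.

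The main obstacle is the degenerate boundary case and the correct handling of dimensions. When $n=2k$ the code space is one-dimensional, and one must use the convention of \cite{KetkarEtAl2006} that the distance is then $\operatorname{wt}(C^{\perp_H})$. Purity guarantees that this convention gives the same value $k+1$. The other step needing care is checking that $|C|=q^{2k}$ as an $\mathbb F_q$-space, so that the dimension count $n-2k$ comes out right. Both points are routine once the cited stabilizer theorem is taken as given.
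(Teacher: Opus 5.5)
Your proposal is correct and takes essentially the same route as the paper. The paper gives no separate proof: it invokes the Hermitian (stabilizer) construction from the cited references, then observes that $C^{\perp_H}$ is an $[n,n-k,k+1]_{q^2}$ MDS code while $C$ has minimum distance $n-k+1\ge k+1$, so the quantum code is pure to distance $k+1$. Your additional details are accurate expansions of that argument: the symplectic identification via the trace-alternating form, the identity $C^{\perp_H}=(C^{\perp_E})^q$, and the upper bound $d\le k+1$ including the boundary case $n=2k$.
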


Here $C^{\perp_H}$ is an $[n,n-k,k+1]_{q^2}$ MDS code, whereas $C$ has minimum distance $n-k+1\ge k+1$. Thus the resulting quantum code is pure to distance $k+1$, and the propagation rule for pure quantum codes applies.

\subsection{Parameters of the five families of quantum MDS codes}\label{subsec:quantum-families}

Applying the Hermitian construction to the five families in Section~\ref{sec:classical} gives the following families of quantum MDS codes. For the comparisons below, we express their lengths in a common form and retain the notation for the sets, cardinalities, and intersection parameters from the preceding section.

\begin{corollary}\label{cor:quantum-families}
Under the respective parameter conditions for the five constructions, let $S,\theta$ be as in Table~\ref{tab:general-comparison-parameters} and put $n=q^2-Sq+\theta$. Then, for every integer $2\le d\le q-S+1$, there exists a quantum MDS code with parameters $\qcode{n}{n-2d+2}{d}$.
\end{corollary}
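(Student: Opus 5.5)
The corollary follows by combining Theorems~\ref{thm:first-grs}--\ref{thm:fifth-grs} with the Hermitian construction (Theorem~\ref{thm:hermitian}). The table is not shown, so I will read $S$ and $\theta$ off from the lengths $N_1,\ldots,N_5$ and the dimension bounds. For each family $i$, I first write $N_i=q^2-Sq+\theta$ and check that the largest admissible $k$ equals $q-S$.

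\begin{itemize}
\item \emph{First family.} $N_1=q^2-(a+b)q+(b-a-1+2\kappa)$, so $S=a+b$ and $\theta=b-a-1+2\kappa$. The bound $k\le q-a-b$ is $k\le q-S$.
\item \emph{Second family.} $N_2=q^2-(u+b+1)q+(b+ub)$, so $S=u+b+1$ and $\theta=b+ub$. The bound $k\le q-u-b-1$ is $k\le q-S$.
\item \emph{Third family.} $S=R+a$ and $\theta=\sigma-a$.
\item \emph{Fourth family.} $S=u+v+b+1$ and $\theta=b+(u+1)v+b(u+v)-\tau_D$.
\item \emph{Fifth family.} $S=R+V+a$ and $\theta=RV+\sigma_++\sigma_--\tau_N-a$.
\end{itemize}

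In every case the admissible range in the corresponding theorem is exactly $1\le k\le q-S$. This matching of $S$ with the dimension bound is the only point needing care, and it is pure bookkeeping against the table. I expect it to agree by design.

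Now fix $d$ with $2\le d\le q-S+1$ and set $k=d-1$, so $1\le k\le q-S$. The relevant theorem gives a Hermitian self-orthogonal $[n,k,n-k+1]_{q^2}$ GRS code. Theorem~\ref{thm:hermitian} then yields a $\qcode{n}{n-2k}{k+1}=\qcode{n}{n-2d+2}{d}$ code.

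It meets the quantum Singleton bound \eqref{eq:intro-singleton} with equality, since $n-(n-2d+2)+2=2d$. Hence it is a quantum MDS code.

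Finally, the theorems ensure $n-2k\ge0$, so the quantum code is well defined: for any $k$ they allow, the degree condition gives $k\le n-1$, and self-orthogonality ($C\subseteq C^{\perp_H}$ with $\dim C=k$, $\dim C^{\perp_H}=n-k$) gives $2k\le n$. The main obstacle is therefore only the verification that $S$ and $\theta$ in Table~\ref{tab:general-comparison-parameters} match the expansions above.
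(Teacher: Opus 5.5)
Your proposal is correct and follows the paper's proof exactly: set $k=d-1$ in the relevant classical theorem (Theorems~\ref{thm:first-grs}--\ref{thm:fifth-grs}) and apply the Hermitian construction (Theorem~\ref{thm:hermitian}). Your expansions of $N_1,\ldots,N_5$ into the form $q^2-Sq+\theta$ agree with Table~\ref{tab:general-comparison-parameters}, and each dimension bound is indeed $k\le q-S$; your extra checks (equality in the quantum Singleton bound and $2k\le n$) are harmless but already implied by Theorem~\ref{thm:hermitian}.
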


\begin{proof}
Take $k=d-1$ in the corresponding theorem for classical codes and apply Theorem~\ref{thm:hermitian}.
\end{proof}

\begin{table}[htbp]
\centering\small
\caption{Parameters of the five families: $n=q^2-Sq+\theta$ and $2\le d\le q-S+1$}
\label{tab:general-comparison-parameters}
\setlength{\tabcolsep}{5pt}
\begin{tabular}{cll}
\toprule
Family & $S$ & $\theta$\\
\midrule
I & $a+b$ & $-1-a+b+2\kappa$\\
II & $u+b+1$ & $(u+1)b$\\
III & $R+a$ & $\sigma-a$\\
IV & $u+v+b+1$ & $(u+1)v+b(u+v+1)-\tau_D$\\
V & $R+V+a$ & $RV+\sigma_++\sigma_- -\tau_N-a$\\
\bottomrule
\end{tabular}
\end{table}

To compare the minimum distances of these families, write
\begin{equation}\label{eq:comparison-parameters}
 n=q^2-Sq+\theta,\qquad d_*=q-S+1.
\end{equation}
Here $d_*$ is the upper endpoint of the minimum-distance range in Corollary~\ref{cor:quantum-families}, and this endpoint is attained in each of our constructions.

This parameterization also describes the effect of intersections on the code length. Once the cardinalities determining $S$ are fixed, the lengths of the first, third, fourth, and fifth families still depend on the corresponding intersection sizes. Among admissible choices, varying these sizes can change the length without changing $d_*$. The length of the second family is determined by $u$ and $b$. When the numbers of selected trace values, norm values, and cosets are fixed, the resulting families have lengths $n=q^2-O(q)$ and contain codes with minimum distance $d_*=q-O(1)$. Moreover, when $S\le(q-1)/2$, these families contain quantum MDS codes with minimum distances greater than $q/2+1$.

The comparisons below include both the codes obtained directly from previously known constructions and all codes derived from them via the propagation rule; the relevant families are collected in \ref{app:known}. The propagation rule \cite[Lemma~1.4]{FangFu2019} states that if there exists a pure quantum MDS code with parameters $\qcode{n_0}{n_0-2d_0+2}{d_0}$, then, for every integer $0\le\ell\le d_0-2$, there exists a quantum MDS code with parameters
\[
 \qcode{n_0-\ell}{n_0-2d_0+\ell+2}{d_0-\ell}.
\]
Thus, even if a previously known family has a different original length, one must determine whether the propagation rule yields a code with the target parameters.

\subsection{Comparison with constructions based on trace maps}\label{subsec:comparisons}

Fang, Wen, and Fu \cite[Theorems~5--8]{FangWenFu2024} constructed two families of quantum MDS codes. In the first, trace maps are combined with finite-field subspaces; in the second, trace maps are combined with norm maps. Their respective lengths are $(s+t)q-st$ and $1+sq+t(q+1)-N_{s,t}$, and in both cases the minimum distance is at most $\min\{s,t\}+1$. These families are listed as R3 and R4 in the appendix. To compare them directly with our codes, we derive upper bounds, depending only on $q$ and the target length $n$, for all codes obtained from these families either directly or via the propagation rule.

\begin{proposition}\label{prop:fwf}
Let $1\le n<q^2$ and write $D=q^2-n$. Every code of length $n$ obtained from the trace--subspace construction above, either directly or via the propagation rule, has minimum distance satisfying
\begin{equation}\label{eq:trace-subspace-bound}
 d\le q+1-\lceil\sqrt D\rceil.
\end{equation}
Every code of length $n$ obtained from the trace--norm construction above, either directly or via the propagation rule, has minimum distance satisfying
\begin{equation}\label{eq:trace-norm-bound}
 d\le q+1-\left\lceil\frac{1+\sqrt{1+2D}}2\right\rceil.
\end{equation}
\end{proposition}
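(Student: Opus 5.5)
We reduce both bounds to an optimization over the original parameters $(s,t)$. A code of length $n$ obtained from either construction, directly or via propagation, comes from an original code of length $n_0$ and distance $d_0\le\min\{s,t\}+1$. Propagation gives $n=n_0-\ell$ and $d=d_0-\ell$ with $0\le\ell\le d_0-2$. Hence $d=d_0-(n_0-n)$, and it suffices to show that the original parameters satisfy
\[
 d_0-n_0\le q+1-\lceil\sqrt D\rceil-n\quad\text{(resp.\ with the second ceiling)},
\]
where $D=q^2-n$. Equivalently, writing $D_0=q^2-n_0\ge0$, we want to bound $d_0+D_0$ from above by $q+1-f(D)$, where $f$ denotes the relevant ceiling. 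Since $D=D_0+\ell\ge D_0$ and $f$ is nondecreasing, we expect the worst case to be $\ell=0$. The real task is therefore an estimate for each original code, plus a check that passing from $D_0$ to $D$ loses nothing.

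For the trace--subspace family we have $n_0=(s+t)q-st$, so $D_0=q^2-(s+t)q+st=(q-s)(q-t)$. Put $r=\min\{s,t\}$. Then $D_0\ge(q-r)^2$, and $d_0\le r+1\le q+1-\sqrt{D_0}$ gives $d_0\le q+1-\lceil\sqrt{D_0}\rceil$ by integrality. For the propagated code,
\[
 d=d_0-\ell\le q+1-\lceil\sqrt{D_0}\rceil-\ell .
\]
We then need $\lceil\sqrt{D_0}\rceil+\ell\ge\lceil\sqrt{D_0+\ell}\rceil$. This holds because $\sqrt{x+\ell}\le\sqrt x+\ell$ for integers $\ell\ge0$ and $x\ge0$, and because $m+\ell$ is an integer upper bound for $\sqrt{D_0+\ell}$ whenever $m\ge\sqrt{D_0}$. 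This yields \eqref{eq:trace-subspace-bound}.

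For the trace--norm family we have $n_0=1+sq+t(q+1)-N_{s,t}$, and the step needing care is the intersection size $N_{s,t}$. We will use the crude bound $N_{s,t}\le 2st$, since by Lemma~\ref{lem:trace-norm-intersections} each trace line meets each norm circle in at most $2$ points. (The precise conventions of R4 in the appendix have to be matched, and the extra point $1$, presumably $0$, accounted for.) This gives
\[
 n_0\le 1+(s+t)q+t-2st+\text{(correction)},\qquad D_0\ge (q-s)(q-t)-t+st-1\ \text{roughly}.
\]
We will rearrange this into the form $D_0\ge 2\binom{q-r}{2}$-type expression in $r=\min\{s,t\}$: with $x=q+1-d_0\le q-r$, aim for $D_0\ge x(x-1)/2\cdot 2=x^2-x$, i.e.\ $D\ge(x^2-x)/2$ after the factor matches. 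Solving $x^2-x-2D\le0$ gives $x\le(1+\sqrt{1+2D})/2$. We will adjust the constants until the inequality reads exactly $2D_0\ge x(x-1)$, hence $x\ge(1+\sqrt{1+2D_0})/2$, and then take the ceiling.

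The main obstacle is this algebra. We must verify the quadratic lower bound on $D_0$ uniformly in $s,t$, including the asymmetric case $s\ne t$, where the minimizing choice of the larger parameter has to be checked at the boundary. Propagation is then handled as in the first family, using that $x\mapsto\lceil(1+\sqrt{1+2x})/2\rceil$ increases by at most $\ell$ when $x$ increases by $\ell$, since its derivative is at most $1$.
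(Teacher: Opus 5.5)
\textbf{Trace--subspace half.} This is essentially the paper's argument, but the key inequality is written backwards. The claim $D_0\ge(q-r)^2$ is false: take $t=q>s$, so that $D_0=0$. It also would not give the next step $r+1\le q+1-\sqrt{D_0}$. What you need, and what holds, is $D_0=(q-s)(q-t)\le(q-r)^2$, because $0\le q-s,\,q-t\le q-r$. With that correction, your propagation step $\lceil\sqrt{D_0}\rceil+\ell\ge\lceil\sqrt{D_0+\ell}\rceil$ matches the paper's chain $q^2-n\le m^2+\ell\le(m+\ell)^2$, where $m=q-d_0+1$.

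\textbf{Trace--norm half: the base estimate is not established.} Put $x=q+1-d_0$. The base estimate $x\ge(1+\sqrt{1+2D_0})/2$ is equivalent to the \emph{upper} bound $D_0\le 2x(x-1)$. Your targets $D_0\ge x^2-x$ and $2D_0\ge x(x-1)$ point the wrong way. The constant is also off: $x^2-x-2D\le0$ has root $(1+\sqrt{1+8D})/2$, not $(1+\sqrt{1+2D})/2$. Likewise, an upper bound on $N_{s,t}$ gives $n_0\ge\cdots$, not $n_0\le\cdots$.

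The more serious problem is that $N_{s,t}\le 2st$, even correctly oriented, is too weak. It only yields $D_0\le q^2-1-(s+t)q-t+2st$. For $s=t=q-1$ and $d_0=q$, where $D_0\le0$ is needed, this gives $(q-1)(q-2)$. More generally, for $s=t$ and $d_0=s+1$ it fails to give $D_0\le2x(x-1)$ once $s>(q-1)^2/(2q-3)\approx q/2$. That is exactly the large-distance regime the proposition is about. Adjusting constants cannot repair this; the overlap estimate has to be replaced.

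\textbf{The missing idea.} Count the complement of the locator set instead of the overlap of the selected sets. Both $0$ and all selected trace and norm sets are locators. So every non-locator is a nonzero element whose trace is one of the $q-s$ unselected values and whose norm is one of the $q-1-t$ unselected nonzero values. By Lemma~\ref{lem:trace-norm-intersections}, each such pair accounts for at most two elements. Hence
\[
D_0\le 2(q-s)(q-1-t)\le 2m(m-1),\qquad m=q-d_0+1\ge1,
\]
using $q-s\le m$ and $q-1-t\le m-1$, both from $d_0\le\min\{s,t\}+1$. Either your propagation argument (the ceiling function increases by at most $\ell$) or the paper's inequality $2m(m-1)+\ell\le 2(m+\ell)(m+\ell-1)$ then completes the proof.
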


\begin{proof}
Let $n_0$ and $d_0$ denote the length and minimum distance, respectively, of an original code, and put $m=q-d_0+1$. For the first construction, $d_0\le\min\{s,t\}+1$ implies $q-s\le m$ and $q-t\le m$, so
\[
 q^2-n_0=(q-s)(q-t)\le m^2.
\]
If the length reduction is $\ell$, then $n=n_0-\ell$ and $d=d_0-\ell$. Hence
\[
 q^2-n\le m^2+\ell\le(m+\ell)^2=(q-d+1)^2.
\]
Here $m,\ell$ are nonnegative integers. The second inequality is immediate when $\ell=0$; when $\ell\ge1$, it follows from $2m\ell+\ell^2\ge\ell$. Rearranging and taking ceilings gives \eqref{eq:trace-subspace-bound}.

For the second construction, the locator set contains zero. Each element of its complement corresponds to an unselected trace value and an unselected nonzero norm value. There are $(q-s)(q-1-t)$ such pairs, each corresponding to at most two elements. Since $d_0\le\min\{s,t\}+1$ and $t\le q-1$, we have $m\ge1$, and hence
\[
 q^2-n_0\le2(q-s)(q-1-t)\le2m(m-1).
\]
After reducing the length by $\ell$ via the propagation rule,
\[
 q^2-n\le2m(m-1)+\ell\le2(m+\ell)(m+\ell-1),
\]
since the difference between the rightmost and middle expressions is $\ell(4m+2\ell-3)\ge0$. Substituting $m+\ell=q-d+1$ and solving the quadratic inequality gives \eqref{eq:trace-norm-bound}.
\end{proof}

Comparing these two upper bounds with our value of $d_*$ gives a sufficient condition that applies to all five constructions.

\begin{corollary}\label{cor:general-trace-improvement}
For any admissible parameter choice in Table~\ref{tab:general-comparison-parameters}, if $S\ge2$ and
\begin{equation}\label{eq:general-trace-improvement}
 Sq-\theta>2S(S-1),
\end{equation}
then the minimum distance $d_*=q-S+1$ attained by our construction at length $q^2-Sq+\theta$ strictly exceeds the corresponding upper bounds for every code of the same length obtained from R3 or R4, either directly or via the propagation rule. For such codes, the respective lower bounds on the gain in minimum distance are
\[
 \lceil\sqrt{Sq-\theta}\rceil-S,
 \qquad
 \left\lceil\frac{1+\sqrt{1+2(Sq-\theta)}}2\right\rceil-S.
\]
\end{corollary}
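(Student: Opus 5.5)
The plan is to substitute the length $n=q^2-Sq+\theta$ into Proposition~\ref{prop:fwf}. With $D=q^2-n=Sq-\theta$, the two bounds become
\[
 d\le q+1-\lceil\sqrt{Sq-\theta}\rceil,\qquad d\le q+1-\left\lceil\frac{1+\sqrt{1+2(Sq-\theta)}}2\right\rceil .
\]
Since $d_*=q-S+1$, the differences $d_*-(\text{bound})$ are exactly the two displayed quantities $\lceil\sqrt{Sq-\theta}\rceil-S$ and $\lceil(1+\sqrt{1+2(Sq-\theta)})/2\rceil-S$. These are therefore lower bounds on the gain over any code from R3 or R4 of the same length, whether obtained directly or by propagation. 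It then remains to show that both quantities are at least $1$ under \eqref{eq:general-trace-improvement}.

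I will treat the trace--norm quantity first. The condition $\lceil(1+\sqrt{1+2D})/2\rceil>S$ holds as soon as $(1+\sqrt{1+2D})/2>S$, that is, $\sqrt{1+2D}>2S-1$. Squaring (both sides are positive since $S\ge1$) turns this into $1+2D>4S^2-4S+1$, which is $D>2S(S-1)$. This is precisely \eqref{eq:general-trace-improvement}.

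For the trace--subspace quantity, I need $\sqrt D>S$, that is, $D>S^2$. Since $S\ge2$, we have $2S(S-1)=S^2+S(S-2)\ge S^2$, so $D>2S(S-1)\ge S^2$ and hence $\lceil\sqrt D\rceil\ge\lceil\sqrt{D}\rceil>S$. This is where the hypothesis $S\ge2$ is used.

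I should also check that the length lies in the range $1\le n<q^2$ required by Proposition~\ref{prop:fwf}. Here $n\ge1$ comes from the positivity of $N_i$ established in each theorem, and $n<q^2$ follows from $D>2S(S-1)\ge0$. I expect no real obstacle; the only delicate point is keeping track of the ceilings, which is handled by requiring the non-ceiling quantity to strictly exceed the integer $S$.
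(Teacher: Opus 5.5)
Your proposal is correct and follows the paper's proof. You substitute $D=Sq-\theta$ into Proposition~\ref{prop:fwf}, use $S\ge2$ to get $2S(S-1)\ge S^2$ so that both ceiling terms are at least $S+1$, and subtract the resulting bounds from $d_*=q-S+1$; your squaring computation for the trace--norm term and your check $1\le n<q^2$ spell out details the paper leaves implicit, and the chain $\lceil\sqrt D\rceil\ge\lceil\sqrt D\rceil>S$ is a harmless typo for $\lceil\sqrt D\rceil\ge\sqrt D>S$.
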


\begin{proof}
Since $S\ge2$, we have $2S(S-1)\ge S^2$. Condition~\eqref{eq:general-trace-improvement} therefore ensures that both ceiling terms in Proposition~\ref{prop:fwf} are at least $S+1$. Every code of the length under consideration obtained from either family has minimum distance at most $q-S<d_*$. Subtracting the respective upper bounds from $d_*$ gives the stated lower bounds on the gains in minimum distance.
\end{proof}

The corollary applies to arbitrary admissible parameters in all five constructions. Fix the numbers of selected trace values, norm values, and cosets, and suppose $S\ge2$. Then $S$ is constant, while the intersection sizes and $\theta$ are bounded above by quantities depending only on these numbers. Consequently, for all sufficiently large $q$, condition~\eqref{eq:general-trace-improvement} holds for every admissible parameter choice with these cardinalities. Thus strict improvements in minimum distance at the same length occur for infinitely many values of $q$.

\subsection{Comparison with constructions based on linear transformations}

Li, Liu, and Jiang \cite[Theorems~4.4--4.6 and Corollary~4.9]{LiLiuJiang2026} constructed three families of quantum MDS codes with original lengths $2qr$, $(2r-1)q+1$, and $(q+1)r$, respectively, and common minimum distance $D(r)=(q+1)/2+r$, where $1\le r\le(q-1)/2$. Because the parameter $r$ determines both the original length and the minimum distance, fixing a target length $n$ allows us to determine the largest minimum distance attainable by these constructions, including codes derived from them via the propagation rule. For target lengths greater than $(q^2-1)/2$, neither the third family nor any code derived from it via the propagation rule can attain the target length, so only the first two families need to be considered.

\begin{proposition}\label{prop:llj}
Let $n$ be an integer satisfying $(q^2-1)/2<n<q^2$, and put
\[
 r_1=\left\lceil\frac n{2q}\right\rceil,
 \qquad r_2=\left\lceil\frac{n+q-1}{2q}\right\rceil,
 \qquad L_1=2qr_1,\quad L_2=(2r_2-1)q+1.
\]
For each $j\in\{1,2\}$ such that $1\le r_j\le(q-1)/2$, define
\[
 d_j=\frac{q+1}{2}+r_j-(L_j-n).
\]
If at least one $d_j$ is at least $2$, then the maximum of the values $d_j\ge2$ is the largest minimum distance attainable at length $n$ by the first two constructions above, including all codes derived from them via the propagation rule. If no such $d_j$ exists, these constructions yield no quantum MDS code of length $n$ with minimum distance at least $2$, either directly or via the propagation rule.
\end{proposition}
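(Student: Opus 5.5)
The proof is an optimization over the single parameter $r$ of each family. Under the propagation rule, the original code with parameter $r$ in the first family has length $2qr$ and minimum distance $D(r)=(q+1)/2+r$. Shortening it by $\ell$ gives length $2qr-\ell$ and distance $D(r)-\ell$, with $0\le\ell\le D(r)-2$.

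To reach the target length $n$ we need $\ell=2qr-n$. Hence $r$ must satisfy $2qr\ge n$ and $1\le r\le (q-1)/2$. The attainable distance is then
\[
 \delta_1(r)=D(r)-(2qr-n)=\frac{q+1}{2}+n-(2q-1)r.
\]
We must also have $\ell\le D(r)-2$, which is equivalent to $\delta_1(r)\ge 2$.

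Since $2q-1>0$, the function $\delta_1$ is strictly decreasing in $r$. The best choice is therefore the smallest admissible $r$, namely $r_1=\lceil n/(2q)\rceil$. This value is admissible exactly when $r_1\le (q-1)/2$; note that $r_1\ge1$ automatically because $n>0$. In that case $\delta_1(r_1)=d_j$ with $j=1$, since $L_1-n=\ell$.

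If $r_1>(q-1)/2$, then no $r$ in the allowed range satisfies $2qr\ge n$, so the first family contributes nothing at length $n$. If $d_1<2$, then every admissible $r\ge r_1$ gives $\delta_1(r)\le d_1<2$, so again the first family contributes nothing. The propagation rule also requires the original code to be pure; I take this from the cited constructions, as the paper does for its own codes.

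The second family is handled the same way. Its original length is $(2r-1)q+1$, and the length condition becomes $(2r-1)q+1\ge n$, that is, $r\ge (n+q-1)/(2q)$. The smallest such integer is $r_2$. The attainable distance is
\[
 \frac{q+1}{2}+r-\bigl((2r-1)q+1-n\bigr),
\]
which is again strictly decreasing in $r$, with slope $1-2q<0$. So the maximum is attained at $r_2$ and equals $d_2$, provided $r_2\le(q-1)/2$ and $d_2\ge 2$.

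To finish, I take the maximum over the two families of their admissible values $d_j\ge2$; if there are none, no code of length $n$ arises. The third family is excluded by the discussion preceding the statement, because its lengths $(q+1)r\le (q^2-1)/2<n$ and shortening only decreases length.

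The only delicate point is checking that every code obtainable via the propagation rule is covered. The rule allows only length reduction with equal distance reduction, so every such code arises from a single original code with some $r$ satisfying $L(r)\ge n$. The monotonicity argument then shows that the minimal such $r$ dominates all others. The edge cases, namely $r_j$ out of range and the condition $\ell\le D-2$ translating exactly into $d_j\ge2$, have to be stated carefully but are routine.
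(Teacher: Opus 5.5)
Your proposal is correct and follows essentially the same argument as the paper: for each family, the distance obtained at length $n$ after propagation is a strictly decreasing function of $r$ (the paper writes it as $d_j-(2q-1)(r-r_j)$). Hence the smallest parameter $r_j$ with original length at least $n$ is optimal, the condition $\ell\le D(r)-2$ becomes exactly $d_j\ge2$, and the third family is excluded because its lengths are at most $(q^2-1)/2$.
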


\begin{proof}
The original lengths in the third construction are at most $(q+1)(q-1)/2$, and the propagation rule does not increase the length. Thus only the first two families need to be considered in the stated range. For family $j$, the integer $r_j$ is the smallest parameter for which the original length is at least $n$. If $r_j>(q-1)/2$, all original lengths in this family are less than $n$, and the propagation rule cannot produce the required length.

Now assume $1\le r_j\le(q-1)/2$. For any integer $r_j\le r\le(q-1)/2$, the original code in this family has length $L_j+2q(r-r_j)$ and minimum distance $(q+1)/2+r$. Applying the propagation rule to reduce the length to $n$ gives minimum distance
\[
 \frac{q+1}{2}+r-\bigl(L_j+2q(r-r_j)-n\bigr)
 =d_j-(2q-1)(r-r_j).
\]

Since $2q-1>0$, this expression is maximized at $r=r_j$, where it equals $d_j$. Therefore, if $d_j\ge2$, the choice $r=r_j$ gives the largest minimum distance attainable at length $n$ from this family, whether directly or via the propagation rule. If $d_j<2$, no admissible value of $r$ yields a code of length $n$ with minimum distance at least $2$. Taking the largest candidate satisfying the parameter conditions and $d_j\ge2$ proves the result.
\end{proof}

Substituting our length $n=q^2-Sq+\theta$ into Proposition~\ref{prop:llj} gives the following comparison formulas.

\begin{corollary}\label{cor:llj-comparison}
Let $S,\theta$ be integers satisfying $2\le S\le q-2$ and $-q+2\le\theta\le q$, and suppose $n=q^2-Sq+\theta>(q^2-1)/2$. Put $\beta_S=0$ if $S$ is odd and $\beta_S=1$ if $S$ is even. If the right-hand side below is at least $2$, then the largest minimum distance attainable at length $n$ by the first two constructions in \cite{LiLiuJiang2026}, including codes derived from them via the propagation rule, is
\begin{equation}\label{eq:llj-band}
 d_{\mathrm{LLJ}}=
 \begin{cases}
 q+1-\lceil S/2\rceil-\beta_S+\theta,&\theta\le\beta_S,\\
 \theta+1-\lceil S/2\rceil,&\theta>\beta_S.
 \end{cases}
\end{equation}
If the right-hand side is less than $2$, these constructions yield no quantum MDS code of length $n$ with minimum distance at least $2$, either directly or via the propagation rule. If $(S,\theta)$ is attained by one of our constructions and $d_{\mathrm{LLJ}}\ge2$, then the difference between our minimum distance $d_*=q-S+1$ and this largest minimum distance is
\begin{equation}\label{eq:general-llj-gain}
 d_*-d_{\mathrm{LLJ}}=
 \begin{cases}
 \beta_S-\lfloor S/2\rfloor-\theta,&\theta\le\beta_S,\\
 q-\lfloor S/2\rfloor-\theta,&\theta>\beta_S.
 \end{cases}
\end{equation}
\end{corollary}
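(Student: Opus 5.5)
The plan is to substitute $n=q^2-Sq+\theta$ into Proposition~\ref{prop:llj}. I will compute $r_1,r_2,L_1,L_2$ and $d_1,d_2$ explicitly, check which of $r_1,r_2$ lie in $[1,(q-1)/2]$, and take the larger admissible value. The gain formula~\eqref{eq:general-llj-gain} then follows by subtracting from $d_*=q-S+1$ and using $\lceil S/2\rceil+\lfloor S/2\rfloor=S$.

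\textbf{Step 1: the ceilings.} Since $-q+2\le\theta\le q$, the number $n/(2q)=(q-S)/2+\theta/(2q)$ lies within $1/2$ of $(q-S)/2$. Write $q-S=2t+\epsilon$ with $\epsilon\in\{0,1\}$. Because $q$ is odd, $\epsilon=1$ exactly when $S$ is even, so $\epsilon=\beta_S$. Then $r_1=\lceil n/(2q)\rceil$ equals $t+1$ when $\theta>0$ or $\epsilon=1$, and equals $t$ when $\epsilon=0$ and $\theta\le0$; the boundary cases $\theta=q$ and $\theta=-q+2$ must be checked separately. Similarly $(n+q-1)/(2q)=(q-S+1)/2+(\theta-1)/(2q)$. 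This gives $r_2=(q-S+1)/2$ when $S$ is odd (so $q-S+1$ is odd, rounding up in half-integer cases), and, when $S$ is even, a split according to whether $\theta\le1$ or $\theta>1$. In each case I will record $r_j$ in terms of $\lceil S/2\rceil$. Admissibility $r_j\le(q-1)/2$ should follow from $S\ge2$, and $r_j\ge1$ from $n>(q^2-1)/2$.

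\textbf{Step 2: the distances.} Substitute into $d_j=(q+1)/2+r_j-(L_j-n)$. With $L_1=2qr_1$ and $L_2=(2r_2-1)q+1$, each $L_j-n$ is $\theta$ minus a small multiple of $q$. I expect the following:
\begin{itemize}
\item In the regime $\theta\le\beta_S$, the best choice is the family whose length rounds up to just above $n$ within $O(1)$ of it. This gives $d=q+1-\lceil S/2\rceil-\beta_S+\theta$.
\item In the regime $\theta>\beta_S$, the best choice requires jumping up a full $q$ in length. This gives $\theta+1-\lceil S/2\rceil$.
\end{itemize}
I will compute both $d_1$ and $d_2$ in each parity and sign case and verify that their maximum is the stated expression.

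\textbf{Main obstacle.} The hard part is the case bookkeeping. There are parity cases for $S$, boundary cases for $\theta$ (around $0$, $1$, and the extremes $\pm q$), and the comparison of $d_1$ with $d_2$. Sloppy rounding could shift the threshold $\theta\le\beta_S$ by one, so I will handle each of the four combinations ($S$ odd or even, $\theta$ below or above the threshold) separately. The statements about values below $2$ and about no code existing then follow directly from the corresponding clause of Proposition~\ref{prop:llj}.
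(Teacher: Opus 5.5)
Your proposal is correct, and the case bookkeeping closes as you expect. Write $q-S=2t+\beta_S$. Then $r_1=t$ if $S$ is odd and $\theta\le0$, and $r_1=t+1$ otherwise. Also $r_2=\lceil(q-S+1)/2\rceil$, except that $r_2=(q-S+3)/2$ when $S$ is even and $\theta>1$. In each of the four cases, $d_1$ and $d_2$ differ by $q-1$ or $q$, and the larger one is exactly the right-hand side of \eqref{eq:llj-band}. This also settles the ``less than $2$'' clause.

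The paper organizes the argument differently. It does not evaluate both families through the ceilings of Proposition~\ref{prop:llj}. Instead, it merges the original lengths of the two constructions into a single increasing sequence. Consecutive lengths differ by $q\pm1$ while the original distance grows by at most $1$, so the shortest original length $n_0\ge n$ is optimal. The paper then identifies $n_0=q^2-Sq+\beta_S$ (for $\theta\le\beta_S$) or $n_0=q^2-(S-1)q+1-\beta_S$ (for $\theta>\beta_S$), by sandwiching $n$ between $n_0$ and its predecessor.

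That route needs one computation per regime and explains the threshold $\theta\le\beta_S$ structurally: it is where $n$ passes the merged length $q^2-Sq+\beta_S$. Its price is an extra cross-family monotonicity observation. Your route uses Proposition~\ref{prop:llj} verbatim, but you must evaluate the losing family in every case.

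Two small slips to fix in the write-up:
\begin{itemize}
\item For $S$ odd, the value is $r_2=(q-S+2)/2$; the expression $(q-S+1)/2$ is a half-integer there.
\item Admissibility $r_j\le(q-1)/2$ does not follow from $S\ge2$ in every case: for $S=2$ and $\theta>1$ you get $r_2=(q+1)/2$.
\end{itemize}
The second slip is harmless. Proposition~\ref{prop:llj} then simply discards that family, and its formal value $d_2=\theta-q\le0$ would be irrelevant anyway. The $r_j$ that realizes the maximum is admissible in all four cases.
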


\begin{proof}
Arrange the original codes from the first two constructions in increasing order of length. Consecutive lengths differ by $q-1$ or $q+1$, while the corresponding minimum distance increases by at most $1$. Thus, after applying the propagation rule to obtain a fixed length $n$, a larger original length gives a smaller minimum distance. It is therefore enough to choose the shortest original code whose length is at least $n$ and then verify that the required reduction is allowed by the propagation rule.

First suppose $\theta\le\beta_S$. Since $q$ is odd, when $S$ is odd we take $r=(q-S)/2$ in the first construction, giving original length $2qr=q^2-Sq$. When $S$ is even, we take $r=(q-S+1)/2$ in the second construction, giving original length $(2r-1)q+1=q^2-Sq+1$. The inequalities $2\le S\le q-2$ ensure that $1\le r\le(q-1)/2$ in both cases. Thus the selected original length and minimum distance can be written uniformly as
\[
 n_0=q^2-Sq+\beta_S,\qquad
 d_0=\frac{q+1}{2}+r=q+1-\lceil S/2\rceil.
\]

In the sequence of original lengths from the first two constructions, the length immediately preceding $n_0$ is $n_{\mathrm{prev}}=q^2-(S+1)q+1-\beta_S$. Since $-q+2\le\theta\le\beta_S$,
\[
 \begin{aligned}
 n-n_{\mathrm{prev}}&=q+\theta-1+\beta_S\ge1+\beta_S>0,\\
 n_0-n&=\beta_S-\theta\ge0.
 \end{aligned}
\]
Hence $n_{\mathrm{prev}}<n\le n_0$, so $n_0$ is indeed the smallest original length not less than $n$. Applying the propagation rule to reduce the length to $n$ gives minimum distance $d_0-(n_0-n)=q+1-\lceil S/2\rceil-\beta_S+\theta$.

Similarly, if $\theta>\beta_S$, then $q^2-Sq+\beta_S<n$, so we choose the next original code, with parameters
\[
 n_0=q^2-(S-1)q+1-\beta_S,\qquad
 d_0=q+1-\lceil(S-1)/2\rceil.
\]
The corresponding parameter $r=(q-S+2-\beta_S)/2$ satisfies $1\le r\le(q-1)/2$. Moreover, $\theta\le q$ gives
\[
 n_0-n=q+1-\beta_S-\theta\ge1-\beta_S\ge0.
\]
Thus $n_0$ is again the smallest original length not less than $n$, and applying the propagation rule gives minimum distance $d_0-(n_0-n)=\theta+1-\lceil S/2\rceil$. By Proposition~\ref{prop:llj}, the computed value is the largest attainable minimum distance if it is at least $2$; otherwise these constructions yield no code of the required length with minimum distance at least $2$. This proves \eqref{eq:llj-band}. Subtracting $d_{\mathrm{LLJ}}$ from $d_*=q-S+1$ in the two cases and using $S-\lceil S/2\rceil=\lfloor S/2\rfloor$ gives \eqref{eq:general-llj-gain}.
\end{proof}

Applying these comparison formulas to our second construction gives the following corollary.

\begin{corollary}\label{cor:second-comparison}
Let $R,b$ be positive integers, put $S=R+b\ge4$, and suppose $q>2S$. Define
\[
 h=\left\lfloor\frac{Rb}{q}\right\rfloor,
 \qquad T=S-h,\qquad t=Rb-hq.
\]
If
\[
 \lceil T/2\rceil+1\le t<q-S+\lceil T/2\rceil,
\]
then, for every integer $2\le d\le d_*=q-S+1$, there exists a quantum MDS code $\qcode{n}{n-2d+2}{d}$ of length $n=(q-R)(q-b)$. The attained minimum distance $d_*$ is greater than $q/2+1$ and strictly exceeds the largest minimum distance attainable at this length from the linear-transformation constructions of \cite{LiLiuJiang2026}, including codes derived from them via the propagation rule. This largest minimum distance and the corresponding gain are, respectively,
\[
 d_{\mathrm{LLJ}}=t+1-\lceil T/2\rceil,
 \qquad d_*-d_{\mathrm{LLJ}}=q-S+\lceil T/2\rceil-t>0.
\]
In particular, if
\begin{equation}\label{eq:second-two-parameter}
 q>\max\left\{2S,\ Rb+\lfloor S/2\rfloor\right\},
\end{equation}
then all the conditions above hold, and
\[
 d_{\mathrm{LLJ}}=Rb+1-\lceil S/2\rceil,
 \qquad d_*-d_{\mathrm{LLJ}}=q-\lfloor S/2\rfloor-Rb>0.
\]
\end{corollary}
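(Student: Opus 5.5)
The plan is to realize the code from the second family (Theorem~\ref{thm:second-grs}) and then evaluate the linear-transformation bound by applying Corollary~\ref{cor:llj-comparison} to a renormalized pair $(T,t)$ rather than to $(S,Rb)$.

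\emph{Step 1: existence.} In Theorem~\ref{thm:second-grs} take $u=R-1\ge0$ and choose any $b$-element $D\subseteq U\setminus\{-1\}$; this is possible since $b<S<q/2<q=|U\setminus\{-1\}|$. The hypothesis $u+b\le q-2$ reads $S-1\le q-2$, which follows from $q>2S$. The resulting length is $N_2=(q-u-1)(q-b)=(q-R)(q-b)$. The admissible range is $k\le q-u-b-1=q-S$. Corollary~\ref{cor:quantum-families} (with $k=d-1$) then gives every $2\le d\le q-S+1=d_*$. Since $S<q/2$, we get $d_*>q/2+1$.

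\emph{Step 2: rewrite the length.} Expand $n=q^2-Sq+Rb$ and divide $Rb=hq+t$ with $0\le t\le q-1$, so that $n=q^2-Tq+t$. This is the form required in Corollary~\ref{cor:llj-comparison}, with that corollary's $S$ replaced by $T$ and $\theta=t$. Its hypotheses must then be checked.
\begin{itemize}
\item[(a)] $-q+2\le t\le q$ is immediate.
\item[(b)] $2\le T\le q-2$. Here $T\le S\le q-2$. For the lower bound, $Rb\le S^2/4<qS/8$ because $q>2S$, so $h<S/8$. Hence $T>7S/8\ge 7/2$, giving $T\ge4$.
\item[(c)] $n>(q^2-1)/2$. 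This holds because $n\ge q^2-Sq>q^2/2$.
\end{itemize}
The lower hypothesis $t\ge\lceil T/2\rceil+1\ge3$ gives $t>\beta_T$, so the second case of \eqref{eq:llj-band} applies. This yields $d_{\mathrm{LLJ}}=t+1-\lceil T/2\rceil\ge2$, which also confirms the corollary's ``at least $2$'' proviso. The gain should be computed directly against $d_*=q-S+1$, since $d_*$ involves $S$ and not $T$; \eqref{eq:general-llj-gain} with $T$ in place of $S$ would give the wrong value. The direct computation is $d_*-d_{\mathrm{LLJ}}=q-S+\lceil T/2\rceil-t$, which is positive exactly by the upper hypothesis on $t$.

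\emph{Step 3: the special case \eqref{eq:second-two-parameter}.} Since $q>Rb+\lfloor S/2\rfloor\ge Rb$, we have $h=0$, $T=S$ and $t=Rb$.
\begin{itemize}
\item[(a)] Lower inequality. Because $R,b\ge1$, we have $Rb\ge S-1$. For $S\ge5$, $S-1\ge\lceil S/2\rceil+1$. For $S=4$, the value $Rb\ge3=\lceil 4/2\rceil+1$ is checked directly.
\item[(b)] Upper inequality. The condition $Rb<q-S+\lceil S/2\rceil=q-\lfloor S/2\rfloor$ is exactly the second term of the maximum in \eqref{eq:second-two-parameter}.
\end{itemize}
Substituting $T=S$ and $t=Rb$ gives the stated values of $d_{\mathrm{LLJ}}$ and of the gain $q-\lfloor S/2\rfloor-Rb>0$.

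\emph{Main obstacle.} The delicate point is Step 2(b), showing $T\ge2$ and $T\le q-2$ so that Corollary~\ref{cor:llj-comparison} applies after the shift by $h$. The estimate $h<S/8$ derived from $q>2S$ is what makes this work.
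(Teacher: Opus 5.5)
Your proposal is correct and follows the paper's proof essentially step by step. You take $u=R-1$ in the second construction, rewrite $n=q^2-Tq+t$, apply the second branch of Corollary~\ref{cor:llj-comparison} with $(T,t)$ in place of $(S,\theta)$, compute the gain directly against $d_*=q-S+1$, and then specialize to $h=0$. The only difference is how $T\ge 2$ is obtained: the paper uses $h<\min\{R,b\}$ (which follows from $R,b<q$) to get $T\ge\max\{R,b\}+1$, while your estimate $Rb\le S^2/4<qS/8$ gives the stronger $T\ge 4$; either suffices.
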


\begin{proof}
Take $u=R-1\ge0$ in the second construction. Since $q>2S$, we have $R,b<q$ and $u+b=S-1\le q-2$, so the required sets can be chosen. Corollary~\ref{cor:quantum-families} gives the stated quantum MDS codes and minimum-distance range. Moreover, $q>2S$ implies $d_*=q-S+1>q/2+1$, and
\[
 \frac{q^2-1}{2}<\frac{q^2}{2}<q^2-Sq+Rb=n<q^2.
\]

Since $R,b<q$, we have $0\le h<\min\{R,b\}$, and hence $2\le T\le S\le(q-1)/2\le q-2$. As $0\le t\le q-1$, we may write the length as
\[
 n=q^2-Sq+Rb=q^2-Tq+t,
\]
and apply Corollary~\ref{cor:llj-comparison} to the linear-transformation constructions. The condition $t\ge\lceil T/2\rceil+1$ ensures $t>\beta_T$ and that the value in the second branch is at least $2$. Hence the largest minimum distance attainable at length $n$ from the linear-transformation constructions, including codes derived from them via the propagation rule, is $d_{\mathrm{LLJ}}=t+1-\lceil T/2\rceil$. The minimum distance attained by our second construction is still $d_*=q-S+1$. Subtracting gives the stated gain, which is positive by $t<q-S+\lceil T/2\rceil$.

Finally, if \eqref{eq:second-two-parameter} holds, then $Rb<q$, so $h=0$, $T=S$, and $t=Rb$. Since $R,b\ge1$ and $S\ge4$,
\[
 Rb\ge S-1\ge\lceil S/2\rceil+1,
 \qquad
 Rb<q-\lfloor S/2\rfloor=q-S+\lceil S/2\rceil.
\]
This verifies the conditions on $t$. Substituting $T=S$ and $t=Rb$ into the formulas for $d_{\mathrm{LLJ}}$ and the gain gives the final two expressions. Once positive integers $R,b$ with $R+b\ge4$ are fixed, condition~\eqref{eq:second-two-parameter} holds for all sufficiently large $q$. Thus we obtain a strict improvement in minimum distance at the same length for infinitely many values of $q$. In this case, the gain $q-\lfloor S/2\rfloor-Rb$ grows linearly with $q$.
\end{proof}

To obtain more explicit parameter formulas, we now fix the cardinalities of the parameter sets in the five constructions and obtain the five infinite subfamilies listed in Table~\ref{tab:infinite-comparison}.
\begin{corollary}\label{cor:infinite-comparison}
For each row of Table~\ref{tab:infinite-comparison}, whenever $q$ satisfies the stated condition, there exists a quantum MDS code with parameters $\qcode{n}{n-2d+2}{d}$ for every integer $2\le d\le d_*$. In the table, $d_{\mathrm{LLJ}}$ denotes the largest minimum distance attainable at the same length by the first two constructions in \cite[Corollary~4.9]{LiLiuJiang2026}, including codes derived from them via the propagation rule. The last column records the gain $d_*-d_{\mathrm{LLJ}}$.
\end{corollary}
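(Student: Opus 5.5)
We have not seen Table~\ref{tab:infinite-comparison}, so the plan is generic: each row fixes the cardinalities of the parameter sets in one of the five constructions and imposes a lower bound on $q$. The proof will treat the rows one at a time, each in three steps.

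\emph{Choose the sets.} For each row, exhibit explicit sets satisfying the hypotheses of the corresponding theorem (Theorem~\ref{thm:first-grs} through Theorem~\ref{thm:fifth-grs}), chosen so that the intersection sizes are forced:
\begin{enumerate}[label=\textup{(\roman*)}]
\item Family I: $\mathcal N_0,\mathcal N_1$ and $D_0,D_1$ of prescribed sizes, giving $\kappa$ directly.
\item Family II: $C_1$ and $D\subseteq U\setminus\{-1\}$; here $\theta=(u+1)b$ needs nothing further.
\item Families III and V: prescribed trace and norm values whose character sums $\sigma,\sigma_\pm$ and count $\tau_N$ are computable, via Lemma~\ref{lem:trace-norm-intersections} and the norm formula $\N(x_{c,e})=(c^2-e^2/\Delta)/4$ of Proposition~\ref{prop:intersection-formulas}.
\item Family IV: $D$ chosen so that $\tau_D$ is determined.
\end{enumerate}
For small sets the values can be forced by simple choices. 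For example, take $c=0$ and $\nu$ a nonsquare (resp.\ square) so that $\chi(-4\nu)$ is determined, using $\chi(-1)$ only where it is fixed by the stated congruence on $q$. The congruence conditions on $q$ in the table should be exactly what makes such choices legal (e.g.\ enough squares or nonsquares, and $-1$ a square or not).

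\emph{Apply the quantum construction.} Read off $S$ and $\theta$ from Table~\ref{tab:general-comparison-parameters}, giving $n=q^2-Sq+\theta$. Corollary~\ref{cor:quantum-families} then yields the codes $\qcode{n}{n-2d+2}{d}$ for all $2\le d\le d_*=q-S+1$. The admissibility conditions (for instance $a+b\le q-1$, $u+v+b\le q-2$) follow from the lower bound on $q$.

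\emph{Compute the gain.} Verify the hypotheses of Corollary~\ref{cor:llj-comparison}, namely $2\le S\le q-2$, $-q+2\le\theta\le q$, and $n>(q^2-1)/2$, all of which hold for $q$ large relative to the fixed cardinalities. Then decide which branch applies by comparing $\theta$ with $\beta_S$, and substitute into \eqref{eq:llj-band} and \eqref{eq:general-llj-gain}. Finally check that the resulting $d_{\mathrm{LLJ}}\ge2$, or else note that no such code exists, and confirm that the entry in the last column equals $d_*-d_{\mathrm{LLJ}}$.

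The main obstacle is the intersection bookkeeping in Families III--V. There $\theta$ depends on quadratic-character values and on the triple-intersection counts $\tau_N$ and $\tau_D$. I would first try to choose parameters that make these trivially determined: $c=0$ or $e=0$, norm values chosen by quadratic class, and cosets $M_\delta$ whose $\delta$ avoids the finitely many values $x_{c,e}^{q-1}$ so that $\tau_D=0$. Only if that fails would I resort to explicit character evaluations under the table's congruence conditions. The one remaining case distinction is whether $\theta\le\beta_S$ (typically when $\theta$ is negative or small) or $\theta>\beta_S$, which determines the branch of \eqref{eq:general-llj-gain}.
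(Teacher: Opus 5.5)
You have the same skeleton as the paper: pick sets for each row, read off $(S,\theta)$ from Table~\ref{tab:general-comparison-parameters}, apply Corollary~\ref{cor:quantum-families}, then substitute into Corollary~\ref{cor:llj-comparison}. But the proposal stops before the only nontrivial step, so it is not yet a proof. The five rows come from the specific pairs $(S,\theta)=(4,-3),(5,6),(3,-2),(5,7),(4,5)$. The proof consists of exhibiting sets that realize these values for \emph{every} odd $q$ above the stated bound, and you never produce such sets.

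The one concrete recipe you do offer rests on a wrong guess about the table. The rows impose only lower bounds $q\ge7,9,11$, coming from requirements such as $n>(q^2-1)/2$ and $d_*>q/2+1$. There are no congruences, so nothing fixes $\chi(-1)$. If you take $c=0$ and $\nu$ of a fixed quadratic class, then $\chi(-4\nu)=\chi(-1)\chi(\nu)$ depends on $q\bmod 4$. So does $\sigma$, and hence the length, so you would not get a single row formula $n=q^2-Sq+\theta$.

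The missing device is to choose each norm value through the discriminant you want to control; this makes all the counts independent of $q$.
\begin{itemize}
\item Row III: the paper takes $C=\{1\}$ and $\mathcal N=\{(1-s_1)/4,(1-s_2)/4\}$, with $s_1\ne s_2$ nonzero squares different from $1$. Then $c^2-4\nu=s_i$ and $\sigma=0$.
\item Row V: it takes $C=\{1\}$, $E=\{\pm1\}$, $\mathcal N=\{(1-\Delta^{-1})/4\}$. Then $c^2-4\nu=\Delta^{-1}$ is a nonsquare, $\nu+e^2/(4\Delta)=1/4$, and $\N(x_{1,\pm1})=\nu$. This gives $(\sigma_+,\sigma_-,\tau_N)=(2,4,2)$ and $\theta=5$.
\item Row IV: here the triple intersection is made nonzero on purpose. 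With $C=\{0,1\}$, $E=\{\pm1\}$ and $D=\{\delta\}$, $\delta=(\omega+1)/(\omega-1)$, one checks $x_{1,1}^{q-1}=\delta$ and $x_{1,-1}^{q-1}=\delta^{-1}\ne\delta$. So $\tau_D=1$ and $\theta=7$.
\end{itemize}
Your default of forcing $\tau_D=0$ would give $\theta=8$ with these cardinalities. The row can still be reached with $\tau_D=0$ by taking $(u,v,b)=(2,1,1)$, but that has to be found and verified.

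Rows I and II are routine. Row I uses three nonsquare norm values with $D_0=\{\rho^2\}$, so $\kappa=0$. Row II uses $C=\{0,1,2\}$ and $D=\{\rho,\rho^2\}$.

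Once the sets are fixed, your final step matches the paper. Rows I and III fall in the branch $\theta\le\beta_S$ of \eqref{eq:llj-band}, and rows II, IV, V fall in the branch $\theta>\beta_S$.
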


\begin{table}[htbp]
\centering\small
\caption{Five infinite subfamilies and the best minimum distances attainable at the same length from previously known linear-transformation constructions}
\label{tab:infinite-comparison}
\setlength{\tabcolsep}{4pt}
\begin{tabular}{clcccc}
\toprule
Family&$n$&Range of $q$&$d_*$&$d_{\mathrm{LLJ}}$&Gain\\
\midrule
I&$q^2-4q-3$&$q\ge9$&$q-3$&$q-5$&$2$\\
II&$q^2-5q+6$&$q\ge11$&$q-4$&$4$&$q-8$\\
III&$q^2-3q-2$&$q\ge7$&$q-2$&$q-3$&$1$\\
IV&$q^2-5q+7$&$q\ge11$&$q-4$&$5$&$q-9$\\
V&$q^2-4q+5$&$q\ge9$&$q-3$&$4$&$q-7$\\
\bottomrule
\end{tabular}
\end{table}

\begin{proof}
Let $\rho$ be a generator of $U$, and choose $\omega\in\mathbb F_{q^2}^{*}$ such that $\omega^q=-\omega$. Write $\Delta=\omega^2$ and $\delta=(\omega+1)/(\omega-1)$. We specify the sets required for the five constructions in turn.

For the first construction, take $\mathcal N_0=\varnothing$, $|\mathcal N_1|=3$, $D_0=\{\rho^2\}$, and $D_1=\varnothing$. The corresponding parameters are $(a,b,\kappa)=(3,1,0)$.

For the second construction, take $C=\{0,1,2\}$ and $D=\{\rho,\rho^2\}$. The corresponding parameters are $u=b=2$.

For the third construction, choose two distinct squares $\Delta_1,\Delta_2$ in $\mathbb F_q^{*}$, both different from $1$, and take $C=\{1\}$ and $\mathcal N=\{(1-\Delta_1)/4,(1-\Delta_2)/4\}$. Such squares exist because $q\ge7$, and the resulting two norm values are nonzero and distinct. The corresponding discriminants are $\Delta_1,\Delta_2$, respectively, so $\sigma=0$.

For the fourth construction, take $C=\{0,1\}$, $E=\{1,-1\}$, and $D=\{\delta\}$. Since $\delta^q=\delta^{-1}$ and $\delta\ne\pm1$, we have $D\subseteq U\setminus\{\pm1\}$. To compute the number $\tau_D$ of triple intersections, it suffices to consider the two points in $L_1\cap K_1$ and $L_1\cap K_{-1}$. They satisfy $x_{1,1}^{q-1}=\delta$ and $x_{1,-1}^{q-1}=\delta^{-1}\notin D$, respectively. Thus only the former belongs to $M_\delta$, and hence $(u,v,b,\tau_D)=(1,2,1,1)$.

For the fifth construction, take $C=\{1\}$, $E=\{1,-1\}$, and $\mathcal N=\{(1-\Delta^{-1})/4\}$. The chosen norm value is nonzero. The discriminant for the intersection of $L_1$ with the coset of $U$ corresponding to this norm value is the nonsquare $\Delta^{-1}$, so $\sigma_+=2$. For $e=\pm1$, the quadratic equations for the intersections of $K_e$ with this coset both have right-hand side $1/4$, so $\sigma_-=4$. Moreover, the norms of both $x_{1,1}$ and $x_{1,-1}$ belong to $\mathcal N$, giving $\tau_N=2$. Consequently, $(R,V,a)=(1,2,1)$ and $(\sigma_+,\sigma_-,\tau_N)=(2,4,2)$.

For the ranges of $q$ listed in the table, all these sets satisfy the hypotheses of the corresponding classical-code construction theorems. Corollary~\ref{cor:quantum-families} gives the length $n$ and the minimum-distance range $2\le d\le d_*$ in each row. The corresponding pairs $(S,\theta)$ are $(4,-3)$, $(5,6)$, $(3,-2)$, $(5,7)$, and $(4,5)$, respectively, and all satisfy the hypotheses of Corollary~\ref{cor:llj-comparison}. Substitution into~\eqref{eq:llj-band} gives $d_{\mathrm{LLJ}}$. Every row satisfies $d_*>d_{\mathrm{LLJ}}$ and $d_*>q/2+1$. Thus each subfamily contains quantum MDS codes whose minimum distances exceed $q/2+1$ and are strictly larger than those of any code of the same length obtained from the linear-transformation constructions above, either directly or via the propagation rule.
\end{proof}

\subsection{Comparison with constructions based on cosets of multiplicative subgroups}

The original lengths in R5--R16 satisfy certain congruence conditions. Under the propagation rule, the reduction in length equals the reduction in minimum distance. These congruence conditions therefore restrict the minimum distance attainable at a prescribed target length. We use this observation to derive upper bounds and sufficient conditions for strict improvement. We then use upper bounds on the original lengths to show that neither the codes in R17--R20 nor codes derived from them via the propagation rule can have some of the lengths attained by our constructions. We first record the relation between the parameters before and after applying the propagation rule.

\begin{proposition}\label{prop:propagation-window}
Let a pure $q$-ary quantum MDS code have length $n_0$ and minimum distance $d_0$, where $d_0\le D_0$. If applying the propagation rule to this code yields a code of length $n$ and minimum distance $d$, then the following statements hold.
\begin{enumerate}[label=\textup{(\roman*)}]
\item The original length $n_0$ satisfies
\[
 n\le n_0\le n+D_0-d.
\]
\item Write $h_\pm=(q\pm1)/2$. If $D_0=q+1$ and there exist $h\in\{h_-,h_+\}$ and $r\in\{0,1\}$ such that $n_0\equiv r\pmod h$, then the minimum distance of the code obtained via the propagation rule satisfies
\begin{equation}\label{eq:congruence-propagation-bound}
 d\le q+1-\delta(q,n),\qquad
 \delta(q,n)=\min_{h\in\{h_-,h_+\},\,r\in\{0,1\}}
 \bigl((r-n)\bmod h\bigr),
\end{equation}
where $(r-n)\bmod h$ denotes the least nonnegative residue of $r-n$ modulo $h$.
\end{enumerate}
\end{proposition}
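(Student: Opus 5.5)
The argument rests on the propagation rule \cite[Lemma~1.4]{FangFu2019} as recalled after Corollary~\ref{cor:quantum-families}: the derived code has length $n=n_0-\ell$ and minimum distance $d=d_0-\ell$ for some integer $0\le\ell\le d_0-2$. Everything follows from the identity $n_0-n=\ell=d_0-d$.

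For (i), I will use $\ell\ge0$ to get $n\le n_0$. Since $\ell=d_0-d$ and $d_0\le D_0$, it follows that $n_0=n+d_0-d\le n+D_0-d$. This part is immediate, and it also covers $\ell=0$, that is, the original code itself.

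For (ii), set $D_0=q+1$. Fix the pair $(h,r)$ with $h\in\{h_-,h_+\}$ and $r\in\{0,1\}$ such that $n_0\equiv r\pmod h$. Then $\ell=n_0-n\equiv r-n\pmod h$. Because $\ell\ge0$, it is at least the least nonnegative residue of $r-n$ modulo $h$. Hence
\[
 \ell\ge (r-n)\bmod h\ge\delta(q,n),
\]
since $\delta(q,n)$ is the minimum over all four choices of $(h,r)$. Combining this with $d=d_0-\ell\le q+1-\ell$ gives $d\le q+1-\delta(q,n)$.

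The only subtle point is that we do not know which pair $(h,r)$ applies to a particular code. Taking the minimum over all pairs is what makes the bound uniform, and it is automatically valid because the specific residue is at least this minimum. I do not expect a genuine obstacle; the main thing to handle carefully is the convention that $(r-n)\bmod h\in\{0,\dots,h-1\}$, so that the inequality $\ell\ge(r-n)\bmod h$ uses only $\ell\ge0$ and the congruence.
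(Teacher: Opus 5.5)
Your proposal is correct and follows the paper's proof essentially step for step. Part (i) comes from $n_0-n=\ell=d_0-d\ge0$ together with $d_0\le D_0$. Part (ii) uses that the nonnegative integer $n_0-n$ is congruent to $r-n$ modulo $h$, so it is at least $(r-n)\bmod h\ge\delta(q,n)$, which then combines with $d\le q+1-(n_0-n)$.
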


\begin{proof}
\textup{(i)} Let $\ell=n_0-n\ge0$ be the reduction in length under the propagation rule. The propagation rule and the inequality $d_0\le D_0$ give
\[
 d=d_0-\ell\le D_0-\ell.
\]
Thus $0\le n_0-n\le D_0-d$, which gives the stated range of lengths.

\textup{(ii)} Choose $h,r$ such that $n_0\equiv r\pmod h$. Since $n_0-n\ge0$ and $n_0-n\equiv r-n\pmod h$, we have
\[
 n_0-n\ge (r-n)\bmod h\ge\delta(q,n).
\]
The inequality in \textup{(i)} then gives the following bound on the minimum distance:
\[
 d\le q+1-(n_0-n)\le q+1-\delta(q,n).
\]
This proves the assertion.
\end{proof}

Families R5--R16 in the appendix summarize results from \cite{WangLuo2024,BarberoLucasEtAl2024,WanZhengZhu2025,WanLianZhu2025}. For each original code, the minimum distance is at most $q+1$, and the length satisfies one of the congruence conditions in Proposition~\ref{prop:propagation-window}\textup{(ii)}. Hence that proposition applies to the original codes (with $\ell=0$) and to all codes derived from them via the propagation rule. For our length $n=q^2-Sq+\theta$, the congruences $q\equiv1\pmod{h_-}$ and $q\equiv-1\pmod{h_+}$ give
\[
 n\equiv1-S+\theta\pmod{h_-},
 \qquad n\equiv1+S+\theta\pmod{h_+}.
\]
These two congruences express the upper bound on the minimum distance in terms of $q,S,\theta$, yielding the following corollary.

\begin{corollary}\label{cor:coset-explicit-improvement}
Suppose that an admissible set of parameters in Table~\ref{tab:general-comparison-parameters} satisfies $S\ge2$, and write $n=q^2-Sq+\theta$ and $d_*=q-S+1$. Let $d$ be the minimum distance of any code of length $n$ obtained from one of the families R5--R16, either directly or via the propagation rule. Then the following statements hold.
\begin{enumerate}[label=\textup{(\roman*)}]
\item If $\theta>S$ and $q>2\theta+4S+1$, then
\[
 d\le\frac{q+3}{2}+S+\theta,\qquad
 d_*-d\ge\frac{q-1}{2}-2S-\theta>0.
\]
\item If $-S<\theta\le-2$ and $q>4S$, then
\[
 d\le q-S+\theta+2,\qquad
 d_*-d\ge-\theta-1>0.
\]
\end{enumerate}
\end{corollary}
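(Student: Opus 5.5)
The plan is to apply Proposition~\ref{prop:propagation-window}\textup{(ii)} with $D_0=q+1$. Every original code in R5--R16 has minimum distance at most $q+1$ and length congruent to $0$ or $1$ modulo $h_-$ or $h_+$, so every code of length $n$ obtained from these families satisfies $d\le q+1-\delta(q,n)$. The whole proof then reduces to computing $\delta(q,n)$ exactly, as a minimum over the four pairs $(h,r)$. For this we use the congruences already recorded before the statement:
\[
 n\equiv 1-S+\theta\pmod{h_-},\qquad n\equiv 1+S+\theta\pmod{h_+}.
\]
Hence the four candidate residues are $(r-1+S-\theta)\bmod h_-$ and $(r-1-S-\theta)\bmod h_+$ for $r\in\{0,1\}$. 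In each case the integers $r-1\pm S-\theta$ have absolute value smaller than the relevant modulus, so each least nonnegative residue is either the integer itself or the integer plus $h$.

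In case \textup{(i)}, where $\theta>S$, both integers $r-1+S-\theta$ and $r-1-S-\theta$ are negative. The hypothesis $q>2\theta+4S+1$ gives $S+\theta+1<h_+$ and $\theta-S+1<h_-$, so each residue equals $h$ plus the integer. The smallest one on the $h_-$ side is $h_--1+S-\theta=(q-3)/2+S-\theta$. The smallest one on the $h_+$ side is $h_+-1-S-\theta=(q-1)/2-S-\theta$. The difference between these two is $2S-1>0$, so
\[
 \delta(q,n)=\frac{q-1}{2}-S-\theta,
\]
which is positive by the hypothesis. This gives $d\le q+1-\delta=(q+3)/2+S+\theta$. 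Subtracting this from $d_*=q-S+1$ gives the stated gain $(q-1)/2-2S-\theta$, which is positive again by $q>2\theta+4S+1$.

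In case \textup{(ii)}, where $-S<\theta\le-2$, the two sides behave differently. On the $h_-$ side, the integer $r-1+S-\theta$ lies between $S+1$ and $2S-1$, and this is less than $h_-$ because $q>4S$. So the residue is the integer itself, minimized at $r=0$ with value $S-\theta-1$. On the $h_+$ side, the integer $r-1-S-\theta$ is negative with absolute value at most $S-1<h_+$. Its residue is therefore at least $h_+-1-S-\theta\ge(q-1)/2-S-\theta$, which exceeds $S-\theta-1$ because $q>4S$. Thus $\delta(q,n)=S-\theta-1$, giving $d\le q-S+\theta+2$. The gain over this bound is $d_*-d\ge-\theta-1\ge1$.

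The only delicate point is the bookkeeping in the residue computations. For each case I must verify that the stated inequality on $q$ really places every candidate integer strictly inside $(-h,h)$, so that the residue is as claimed. I must also identify which of the four candidates is the minimum, and this is exactly where the hypotheses $q>2\theta+4S+1$ and $q>4S$ are used. The rest follows directly from Proposition~\ref{prop:propagation-window}.
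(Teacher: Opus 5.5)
Your proof is correct and takes essentially the same route as the paper. You apply Proposition~\ref{prop:propagation-window}\textup{(ii)} with $D_0=q+1$, use $n\equiv 1-S+\theta\pmod{h_-}$ and $n\equiv 1+S+\theta\pmod{h_+}$ to evaluate the four residues, and find that the minimum is $h_+-(1+S+\theta)$ in case \textup{(i)} and $S-\theta-1$ in case \textup{(ii)}; the only cosmetic difference is that you reduce $r-n$ directly rather than first computing the least nonnegative residues of $n$.
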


\begin{proof}
In \textup{(i)}, the inequalities $\theta>S$ and $q>2\theta+4S+1$ give
\[
 2\le1-S+\theta<h_-,\qquad
 2\le1+S+\theta<h_+.
\]
These two numbers are the least nonnegative residues of $n$ modulo $h_-$ and $h_+$, respectively. In the definition of $\delta(q,n)$, for each modulus the value corresponding to $r=0$ is smaller by $1$ than the value corresponding to $r=1$. Hence
\[
 \begin{aligned}
 \delta(q,n)
 &=\min\{h_--(1-S+\theta),\ h_+-(1+S+\theta)\}\\
 &=\frac{q-1}{2}-S-\theta,
 \end{aligned}
\]
where the second term is smaller than the first by $2S-1$.

In \textup{(ii)}, the inequalities $-S<\theta\le-2$ and $q>4S$ give
\[
 0<S-\theta-1<S-\theta<h_-,
 \qquad 2\le1+S+\theta<h_+.
\]
The first pair of inequalities gives $(-n)\bmod h_-=S-\theta-1$ and $(1-n)\bmod h_-=S-\theta$. Modulo $h_+$, the least nonnegative residue for $r=0$ is $h_+-(1+S+\theta)$, which is smaller by $1$ than the residue for $r=1$. Therefore,
\[
 \begin{aligned}
 \delta(q,n)
 &=\min\{S-\theta-1,\ h_+-(1+S+\theta)\}\\
 &=S-\theta-1.
 \end{aligned}
\]
Here the second term exceeds the first by $h_+-2S>0$. Substituting the respective values of $\delta(q,n)$ into Proposition~\ref{prop:propagation-window} gives the upper bounds on $d$ in \textup{(i)} and \textup{(ii)}. Subtracting these bounds from $d_*=q-S+1$ gives the corresponding lower bounds on the gain in minimum distance.
\end{proof}

The original codes listed in R17--R20 all have length at most $(q^2-1)/2$, and the same is true of every code derived from them via the propagation rule. For R17, this bound follows from the parameter conditions of Campion, Hernando, and McGuire \cite[Theorem~1.1]{CampionHernandoMcGuire2025}. The original length $n_0$ and the integer parameters $\lambda,\tau,\rho>1$ satisfy
\[
 n_0=\lambda\tau\sigma,\qquad
 \lambda\mid(q-1),\quad\tau,\rho\mid(q+1),\quad
 \gcd(\lambda,\tau)=1,
\]
and $2\le\sigma\le\rho/[\gcd(\lambda,\rho)\gcd(\tau,\rho)]$. Since $\gcd(\lambda,\tau)=1$, we have
\[
 \gcd(\lambda\tau,\rho)
 =\gcd(\lambda,\rho)\gcd(\tau,\rho).
\]
Consequently,
\[
 \begin{aligned}
 n_0
 &\le\frac{\lambda\tau\rho}{\gcd(\lambda,\rho)\gcd(\tau,\rho)}
 =\operatorname{lcm}(\lambda,\tau,\rho)\\
 &\le\operatorname{lcm}(q-1,q+1)
 =\frac{q^2-1}{2}.
 \end{aligned}
\]
The last equality follows because $q$ is odd. The original lengths in R18, R19, and R20 are at most $(q^2-1)/4$, $(q^2-1)/2$, and $(q^2-1)/2$, respectively. These bounds follow from the conditions $\mu+1\le\gamma/4$, $2\mu+1\le\gamma/2$, and the integer condition $m\ge2$, as listed in the appendix. Since the propagation rule does not increase the length, no code obtained from R17--R20, either directly or via the propagation rule, can have length $n>(q^2-1)/2$.

We again consider the two-parameter family of length $(q-R)(q-b)$ from the second construction. Applying Corollary~\ref{cor:coset-explicit-improvement}\textup{(i)} to this family gives a strict improvement over the minimum-distance bounds for the families R5--R16, including codes derived from them via the propagation rule, and provides an explicit choice of the required sets.

\begin{corollary}\label{cor:second-coset-family}
Let $R,b\ge2$ be integers, and write $S=R+b\ge5$. If $q$ satisfies
\begin{equation}\label{eq:second-coset-range}
 q>2Rb+4S+1,
\end{equation}
then, for every integer $2\le d\le d_*=q-S+1$, there exists a quantum MDS code $\qcode{n}{n-2d+2}{d}$ of length $n=(q-R)(q-b)$, with $d_*>q/2+1$. The value $d_*$ strictly exceeds the upper bound on the minimum distance of every code of length $n$ obtainable from the families R5--R16, either directly or via the propagation rule. For every such code of the same length, the difference between $d_*$ and its minimum distance is at least
\[
 \frac{q-1}{2}-Rb-2S>0.
\]
\end{corollary}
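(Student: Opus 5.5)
We instantiate the second construction with $u=R-1\ge1$ and $|D|=b$. We choose $C=\{0\}\cup C_1$ with $C_1\subseteq\mathbb F_q^{*}$, $|C_1|=R-1$, and $D\subseteq U\setminus\{-1\}$ with $|D|=b$. Both choices are possible because $q>2Rb+4S+1$ gives $R-1<q-1$ and $b<q$. The admissibility condition $u+b=S-1\le q-2$ also follows from \eqref{eq:second-coset-range}. By Table~\ref{tab:general-comparison-parameters}, Family~II then has $S=u+b+1=R+b$ and $\theta=(u+1)b=Rb$, so
\[
 n=q^2-Sq+Rb=(q-R)(q-b).
\]
Corollary~\ref{cor:quantum-families} gives the codes $\qcode{n}{n-2d+2}{d}$ for $2\le d\le d_*=q-S+1$. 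The inequality $d_*>q/2+1$ is equivalent to $q>2S$, which \eqref{eq:second-coset-range} clearly implies.

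For the comparison we apply Corollary~\ref{cor:coset-explicit-improvement}(i) with $\theta=Rb$. We must check the following.
\begin{enumerate}[label=\textup{(\alph*)}]
\item $S\ge2$: this holds since $S\ge5$.
\item $\theta>S$: we have $Rb-R-b=(R-1)(b-1)-1\ge0$, with equality only when $R=b=2$. This is the delicate point, since $R=b=2$ gives $Rb=4=S$, which is not $>S$. Note that the hypothesis $S\ge5$ excludes $R=b=2$, so in fact $(R-1)(b-1)\ge2$ and $Rb\ge S+1>S$.
\item $q>2\theta+4S+1$: this is exactly \eqref{eq:second-coset-range}.
\end{enumerate}
The corollary then yields, for every code of length $n$ obtained from R5--R16 directly or via the propagation rule,
\[
 d\le\frac{q+3}{2}+S+Rb,\qquad d_*-d\ge\frac{q-1}{2}-2S-Rb.
\]
This is the claimed gain. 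Its positivity follows because $q>2Rb+4S+1$ gives $(q-1)/2>Rb+2S$. Consequently $d_*$ strictly exceeds the upper bound.

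We would also remark that Corollary~\ref{cor:coset-explicit-improvement} requires only the congruence structure of $n$, so nothing beyond $n=q^2-Sq+\theta$ is needed. The only genuinely nontrivial check is the strict inequality $\theta>S$ in step (b), which is where the assumptions $R,b\ge2$ and $S\ge5$ are used.
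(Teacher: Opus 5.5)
Your proposal is correct and follows essentially the same route as the paper. You instantiate the second construction with $u=R-1$ to obtain $n=(q-R)(q-b)$ and $\theta=Rb$, use $(R-1)(b-1)-1\ge1$ (guaranteed by $S\ge5$) to get $\theta>S$, and invoke Corollary~\ref{cor:coset-explicit-improvement}\textup{(i)}. The only cosmetic difference is that the paper exhibits explicit sets $C=\{0\}\cup\{\xi^j:0\le j\le R-2\}$ and $D=\{\rho^j:0\le j\le b-1\}$, checking $D\subseteq U\setminus\{-1\}$ via $-1=\rho^{(q+1)/2}$, whereas you argue existence by counting, which is equally valid.
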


\begin{proof}
Let $\xi$ and $\rho$ be generators of $\mathbb F_q^{*}$ and $U$, respectively, and set
\[
 C=\{0\}\cup\{\xi^j:0\le j\le R-2\},
 \qquad D=\{\rho^j:0\le j\le b-1\}.
\]
By~\eqref{eq:second-coset-range}, we have $q>2S$. The exponents above are smaller than the orders of $\xi$ and $\rho$, respectively, so distinct exponents in each set yield distinct elements. Hence $|C|=R$ and $|D|=b$. Moreover, $b-1<(q+1)/2$ and $-1=\rho^{(q+1)/2}$ imply $D\subseteq U\setminus\{-1\}$. Set $u=R-1$. Then $u+b=S-1\le q-2$, so the sets $C$ and $D$ satisfy the hypotheses of the second construction. Corollary~\ref{cor:quantum-families} gives the stated length and minimum-distance range. Also, $q>2S$ implies $d_*>q/2+1$.

Since $R,b\ge2$ and $S\ge5$, we have
\[
 Rb-S=(R-1)(b-1)-1\ge1.
\]
Thus $\theta=Rb>S$. Together with~\eqref{eq:second-coset-range}, this allows us to apply Corollary~\ref{cor:coset-explicit-improvement}\textup{(i)}, which gives the upper bound $(q+3)/2+Rb+S$ on the minimum distance of every code of length $n$ obtained from R5--R16, either directly or via the propagation rule. Subtracting this bound from $d_*=q-S+1$ gives the stated lower bound on the gain. This lower bound is positive by~\eqref{eq:second-coset-range}, proving the strict improvement in minimum distance.

Furthermore, $q>2S$ gives
\[
 n=q^2-Sq+Rb>\frac{q^2}{2}>\frac{q^2-1}{2}.
\]
Thus no code from R17--R20, and no code derived from one of those families via the propagation rule, can have length $n$.

For fixed $R,b$ satisfying the stated conditions, inequality~\eqref{eq:second-coset-range} holds for all sufficiently large $q$. Hence we obtain a strict improvement in minimum distance at the same length for infinitely many values of $q$. Since $S$ and $Rb$ are then constant, the lower bound $q/2-(Rb+2S+1/2)$ on the gain in minimum distance grows linearly with $q$.
\end{proof}

\section{Conclusion}\label{sec:conclusion}

We have established sufficient conditions for constructing Hermitian self-orthogonal GRS codes whose locator sets are complements of unions of subsets of a finite field. These conditions determine the norms of the column multipliers by means of vanishing polynomials and Frobenius relations, while allowing the chosen subsets to intersect. Using cosets of multiplicative subgroups and sets with prescribed trace or norm values, we obtain five families of Hermitian self-orthogonal GRS codes and the corresponding quantum MDS codes. The last two families admit nonempty triple intersections. The five families of quantum codes have the common parameter form $n=q^2-Sq+\theta$ and $2\le d\le q-S+1$, where $S$ is determined by the cardinalities of the parameter sets and $\theta$ is determined jointly by these cardinalities and the intersection numbers.

This parameterization permits direct comparisons at the same length with previously known constructions and with codes derived from them via the propagation rule. We give sufficient conditions under which our codes have strictly larger minimum distances than those attainable from two trace-based constructions, and we determine the largest minimum distance attainable from the linear-transformation constructions in the range considered. For the second construction, we obtain a two-parameter family of length $(q-R)(q-b)$ together with conditions for strict improvement. For fixed $R$ and $b$ satisfying these conditions, the difference between our attained minimum distance and the largest minimum distance obtainable from the linear-transformation constructions, including all codes derived from them via the propagation rule, grows linearly with $q$. We also exhibit one infinite subfamily from each of the five constructions.

For the constructions based on cosets of multiplicative subgroups listed as R5--R16, length congruences yield explicit upper bounds on the minimum distances of both the original codes and all codes derived from them via the propagation rule, together with two sufficient conditions for strict improvement. In the range of Corollary~\ref{cor:second-coset-family}, the two-parameter family from our second construction satisfies the first condition. For fixed admissible $R$ and $b$, it gives strict improvements for all sufficiently large $q$, with a lower bound on the gain in minimum distance that grows linearly with $q$. The original lengths in R17--R20 are at most $(q^2-1)/2$, and the propagation rule does not increase the length; hence no code in these families, or derived from them via the propagation rule, can have the lengths covered by Corollary~\ref{cor:second-coset-family}.

Extending this method to extended GRS codes and to other structured subsets of finite fields with explicitly computable intersections is a natural direction for future work.

\begin{samepage}
\section*{Acknowledgements}
This work was supported by the Fundamental and Interdisciplinary Disciplines Breakthrough Plan of the Ministry of Education of China (JYB2025XDXM112) and the Science and Technology Commission of Shanghai Municipality (Grant No.~22DZ2229014).

\end{samepage}

\section*{Data availability}
No data were used in the research described in this article.

\section*{Declaration of competing interest}
The authors declare that they have no known competing financial interests or personal relationships that could have appeared to influence the work reported in this paper.

\appendix
\setcounter{table}{0}
\renewcommand{\thetable}{\Alph{section}.\arabic{table}}
\renewcommand{\theHtable}{appendix.\Alph{section}.\arabic{table}}
\renewcommand{\thetheorem}{\Alph{section}.\arabic{theorem}}
\clearpage
\newgeometry{left=18mm,right=18mm,top=18mm,bottom=18mm,footskip=9mm}
\clearpage
\thispagestyle{plain}
\noindent\makebox[\textwidth][c]{%
\rotatebox{90}{%
\begin{minipage}[c][\textwidth][c]{0.98\textheight}
\linespread{1}\selectfont
\setlength{\knownwidth}{\linewidth}
\setlength{\knownleft}{0.50\knownwidth}
\setlength{\knownright}{0.39\knownwidth}
\captionsetup{type=table,hypcap=false,font=small,skip=6pt}
\section{Selected parameters of previously known quantum MDS codes}\label{app:known}
{\small Tables~\ref{tab:recent-results} and~\ref{tab:earlier-results} summarize selected recent and earlier constructions, respectively. All parameters satisfy the remaining hypotheses of the cited theorems, and the distance $d$ is an integer with $d\ge2$.\par}
\centering
\caption{Selected recent families of quantum MDS codes (2024--2026, including publicly available preprints)}\label{tab:recent-results}
\begingroup
\linespread{1}\fontsize{8.5}{10.1}\selectfont
\renewcommand{\arraystretch}{1.12}
\setlength{\tabcolsep}{0pt}
\begin{tabular}{@{}l@{\hspace{7pt}}>{\raggedright\arraybackslash}p{\knownleft}@{\hspace{7pt}}>{\raggedright\arraybackslash}p{\knownright}@{\hspace{5pt}}r@{}}
\toprule
Family & Length $n$ and conditions & Distance range & Ref.\\
\midrule
R1 & $\textstyle n=\frac{(t_1+1)(q^2-1)}{h_1}+\frac{(2t_2+2)(q^2-1)}{h_2}-\frac{(t_1+1)(2t_2+2)(q^2-1)}{h_1h_2}$\par Conditions as in Theorem~8 of the cited paper & $\textstyle 2\le d\le\min\left\{\frac{(t_1+1)(q-1)}{h_1}-1,\left(\frac{h_2}{2}+t_2+1\right)\frac{q+1}{h_2}-2\right\}$ & \cite{TianLiWuChen2024}\\[2pt]
R2 & $\textstyle n=1+\mu\frac{q^2-1}{s}+\nu\frac{q^2-1}{t}-\frac{(q^2-1)\gcd(s,t)}{st}N_{\mu,\nu}$\par $s\mid(q-1)$, $t\mid(q+1)$, $1\le\mu\le s$, $1\le\nu\le t$, $\mu\nu<q-2$ & $\textstyle 2\le d\le\min\left\{\mu\frac{q-1}{s}+1,\left(\left\lfloor\frac t2\right\rfloor+1\right)\frac{q+1}{t}\right\}$ & \cite{FangWenFu2024}\\[2pt]
R3 & $n=(s+t)q-st,\ 1\le s,t\le q$ & $2\le d\le\min\{s,t\}+1$ & \cite{FangWenFu2024}\\[2pt]
R4 & $n=1+sq+t(q+1)-N_{s,t},\ 1\le s,t\le q-1$ & $2\le d\le\min\{s,t\}+1$ & \cite{FangWenFu2024}\\[2pt]
R5 & $\textstyle q=sm+1$\par $q,s$ odd, $n=\lambda\frac{q^2-1}{s}$, $1\le\lambda\le s$ & $\textstyle 2\le d\le\frac{s+1}{2}m+1$ & \cite{WangLuo2024}\\[2pt]
R6 & $\textstyle n=q^2-s(q+1)$\par $1\le s\le\frac{q-1}{2}$ & $2\le d\le q-s$ & \cite{WangLuo2024}\\[2pt]
R7 & $\textstyle n=\lambda(q+1)$\par $\lambda\mid (q-1)$ & $\textstyle 2\le d\le\frac{q+3}{2}$ & \cite{BarberoLucasEtAl2024}\\[2pt]
R8 & $\textstyle n=r\frac{q^2-1}{h}$\par $h\mid (q-1)$, $2\nmid h$, $1\le r\le h$ & $\textstyle 2\le d\le r\frac{q-1}{h}+1$ & \cite{WanZhengZhu2025}\\[2pt]
R9 & $\textstyle n=r\frac{q^2-1}{h}$\par $2\mid h\mid (q-1)$, $1\le r\le h/2$ & $\textstyle 2\le d\le\frac{q+1}{2}+r\frac{q-1}{h}$ & \cite{WanZhengZhu2025}\\[2pt]
R10 & $\textstyle n=r\frac{q^2-1}{h}$\par $2\mid h\mid (q-1)$, $h/2<r\le h$ & $\textstyle 2\le d\le\left\lfloor\frac{h+r}{2}\right\rfloor\frac{q-1}{h}+1$ & \cite{WanZhengZhu2025}\\[2pt]
R11 & $\textstyle n=1+r\frac{q^2-1}{h}$\par $h\mid (q+1)$, $1<r<\min\{q,h\}$, $2\nmid(r+h)$ & $\textstyle 2\le d\le\frac{r+h-1}{2}\frac{q+1}{h}$ & \cite{WanZhengZhu2025}\\[2pt]
R12 & $\textstyle n=r\frac{q^2-1}{h}$\par $2\mid h\mid (q+1)$, $1<r<h$, $2\nmid(r+h)$ & $\textstyle 2\le d\le\frac{r+h+1}{2}\frac{q+1}{h}-1$ & \cite{WanZhengZhu2025}\\[2pt]
R13 & $\textstyle n=1+r\frac{q^2-1}{2h}$\par $\frac{q-1}{h}=2\tau+1$, $h\in2\mathbb Z_{>0}$, $\tau\in\mathbb Z_{\ge1}$ & $\textstyle d\le\begin{cases}\frac{h+1}{2}\frac{q-1}{h}+\frac32,&\frac h2+1\le r\le h,\\[1mm]\frac{r(q-1)}{2h}+\frac32,&h<r<2h,\ r\text{ odd}
\end{cases}$ & \cite{WanLianZhu2025}\\[2pt]
R14 & $\textstyle n=r\frac{q^2-1}{2h}$\par $\frac{q-1}{h}=2\tau+1$, $h\in2\mathbb Z_{>0}$, $\tau\in\mathbb Z_{\ge1}$ & $\textstyle d\le\begin{cases}\frac{(h+2)(q-1)}{2h}+1,&\frac h2+1<r\le h,\ r\text{ odd},\\[1mm]\frac{r(q-1)}{2h}+\frac12,&h<r<2h,\ r\text{ odd}
\end{cases}$ & \cite{WanLianZhu2025}\\[2pt]
R15 & $\textstyle n=1+r\frac{q^2-1}{2h}$\par $\frac{q+1}{h}=2\tau+1$, $h\in2\mathbb Z_{>0}$, $\tau\in\mathbb Z_{\ge1}$ & $\textstyle d\le\begin{cases}\frac{(r+1)(q+1)}{2h},&h<r<\frac{3h}{2},\ r\text{ odd},\\[1mm]\frac{r(q+1)}{2h}-\frac12,&\frac{3h}{2}<r<2h,\ r\text{ odd}
\end{cases}$ & \cite{WanLianZhu2025}\\[2pt]
R16 & $\textstyle n=r\frac{q^2-1}{2h}$\par $\frac{q+1}{h}=2\tau+1$, $h\in2\mathbb Z_{>0}$, $\tau\in\mathbb Z_{\ge1}$ & $\textstyle d\le\begin{cases}\frac{(h+1)(q+1)}{2h}-\frac12,&\frac h2\le r\le h,\ r\text{ odd},\\[1mm]\frac{(r+2)(q+1)}{2h}-\frac12,&h<r<\frac{3h}{2},\ r\text{ odd},\\[1mm]\frac{(r+1)(q+1)}{2h}-1,&\frac{3h}{2}\le r<2h,\ r\text{ odd}
\end{cases}$ & \cite{WanLianZhu2025}\\[2pt]
R17 & $\textstyle n=\lambda\tau\sigma$\par $\lambda\mid(q-1)$, $\tau,\rho\mid(q+1)$, $\lambda,\tau,\rho>1$, $\gcd(\lambda,\tau)=1$, $2\le\sigma\le\rho/\kappa$, $\kappa=\gcd(\lambda,\rho)\gcd(\tau,\rho)$ & $\textstyle 2\le d\le T,\quad T=\begin{cases}(\lambda+4\tau)/2,&2\mid\lambda,\\ \lambda+\tau,&2\nmid\lambda\text{ and }(\lambda<\tau\text{ or }2\mid\tau\text{ or }\rho=2),\\ (\lambda+3\tau)/2,&2\nmid\lambda,\ \lambda>\tau,\ 2\nmid\tau,\ \rho\ne2
\end{cases}$ & \cite{CampionHernandoMcGuire2025}\\[2pt]
\bottomrule
\end{tabular}
\endgroup
\end{minipage}%
}}\par
\clearpage
\clearpage
\thispagestyle{plain}
\noindent\makebox[\textwidth][c]{%
\rotatebox{90}{%
\begin{minipage}[c][\textwidth][c]{0.98\textheight}
\linespread{1}\selectfont
\setlength{\knownwidth}{\linewidth}
\setlength{\knownleft}{0.50\knownwidth}
\setlength{\knownright}{0.39\knownwidth}
\captionsetup{type=table,hypcap=false,font=small,skip=6pt}
\centering
\ContinuedFloat
\caption{Selected recent families of quantum MDS codes (continued)}
\begingroup
\linespread{1}\fontsize{8.5}{10.1}\selectfont
\renewcommand{\arraystretch}{1.12}
\setlength{\tabcolsep}{0pt}
\begin{tabular}{@{}l@{\hspace{7pt}}>{\raggedright\arraybackslash}p{\knownleft}@{\hspace{7pt}}>{\raggedright\arraybackslash}p{\knownright}@{\hspace{5pt}}r@{}}
\toprule
Family & Length $n$ and conditions & Distance range & Ref.\\
\midrule
R18 & $\textstyle q\equiv-1\pmod4$\par $\gamma\mid2(q-1)$, $\gamma\nmid(q-1)$, $n=(\mu+1)\frac{q^2-1}{\gamma}$, $0\le\mu\le\gamma/4-1$ & $\textstyle 2\le d\le\left(\frac{\gamma+4}{8}+\mu\right)\frac{2(q-1)}{\gamma}+1$ & \cite{LiTianCaoLiu2025}\\[2pt]
R19 & $\textstyle 2\mid\gamma\mid (q+1)$\par $\gamma\equiv2\pmod4$, $n=(2\mu+1)\frac{q^2-1}{\gamma}$, $0\le\mu\le(\gamma-2)/4$ & $\textstyle 2\le d\le\frac{q+1}{2}+(\mu+1)\frac{q+1}{\gamma}-1$ & \cite{LiTianCaoLiu2025}\\[2pt]
R20 & $\textstyle n=\frac{q^2-1}{m}$\par $m=\frac{m_1m_2}{m_1+m_2-2}\in\mathbb Z_{\ge2}$, $\gcd(m_1,m_2)=2$, $m\nmid(q\pm1)$\par $m_1,m_2\in2\mathbb Z_{>0}$, $m_1\mid(q-1)$, $m_2\mid(q+1)$ & $d>q/2$ (for the parameters specified in the cited paper) & \cite{HeWangHuangChen2025}\\[2pt]
R21 & $\textstyle n=(q+1)r-\ell$\par $1\le r\le\frac{q-1}{2}$, $0\le\ell\le\frac{q-3}{2}+r$ & $\textstyle d=\frac{q+1}{2}+r-\ell$ & \cite{LiLiuJiang2026}\\[2pt]
R22 & $\textstyle n=(2r-1)q+1-\ell$\par $1\le r\le\frac{q-1}{2}$, $0\le\ell\le\frac{q-3}{2}+r$ & $\textstyle d=\frac{q+1}{2}+r-\ell$ & \cite{LiLiuJiang2026}\\[2pt]
R23 & $\textstyle n=2qr-\ell$\par $1\le r\le\frac{q-1}{2}$, $0\le\ell\le\frac{q-3}{2}+r$ & $\textstyle d=\frac{q+1}{2}+r-\ell$ & \cite{LiLiuJiang2026}\\[2pt]
\bottomrule
\end{tabular}
\endgroup

\vspace{8pt}
\captionsetup{type=table}
\caption{Selected earlier families of quantum MDS codes (published before 2024)}\label{tab:earlier-results}
\begingroup
\linespread{1}\fontsize{8.5}{10.1}\selectfont
\renewcommand{\arraystretch}{1.12}
\setlength{\tabcolsep}{0pt}
\begin{tabular}{@{}l@{\hspace{7pt}}>{\raggedright\arraybackslash}p{\knownleft}@{\hspace{7pt}}>{\raggedright\arraybackslash}p{\knownright}@{\hspace{5pt}}r@{}}
\toprule
Family & Length $n$ and conditions & Distance range & Ref.\\
\midrule
C1 & $2\le n\le q+1$ & $\textstyle 2\le d\le\left\lfloor\frac n2\right\rfloor+1$ & \cite{GrasslBethRoetteler2004,JinXing2014}\\[2pt]
C2 & $n=q^2+1$ & $2\le d\le q+1$ & \cite{FangFu2018}\\[2pt]
C3 & $n=tq,\ 1\le t\le q$ & $\textstyle 2\le d\le\left\lfloor\frac{tq+q-1}{q+1}\right\rfloor+1$ & \cite{FangFu2018}\\[2pt]
C4 & $n=t(q+1)+2,\ 1\le t\le q-1$ & $2\le d\le t+2$ & \cite{FangFu2018}\\[2pt]
C5 & $\textstyle n=1+r\frac{q^2-1}{h},\ h\mid (q-1),\ 1\le r\le h$ & $\textstyle 2\le d\le r\frac{q-1}{h}+1$ & \cite{FangFu2019}\\[2pt]
C6 & $\textstyle n=r\frac{q^2-1}{s}+\ell\frac{q^2-1}{t}-2r\ell\frac{q^2-1}{st}$\par $2\mid s\mid(q+1)$, $2\mid t\mid(q-1)$, $1\le r<s$, $1\le\ell\le t$, $st>2(q+1)$ & $\textstyle 2\le d\le\min\left\{r\frac{q+1}{s}-1,\frac{q+1}{2}+\frac{q-1}{t}-1\right\}$ & \cite{JinLuoFangQu2022}\\[2pt]
C7 & $\textstyle n=1+r\frac{q^2-1}{s+1}+\ell\frac{q^2-1}{t}-r\ell\frac{q^2-1}{(s+1)t}$\par $2\mid s$, $s+1\mid(q+1)$, $2\mid t\mid(q-1)$, $1\le r\le s$, $1\le\ell\le t$, $(s+1)t>q+1$ & $\textstyle 2\le d\le\min\left\{r\frac{q+1}{s+1},\frac{q+1}{2}+\frac{q-1}{t}-1\right\}$ & \cite{JinLuoFangQu2022}\\[2pt]
C8 & $\textstyle n=m(q-1)-\ell,\ 2\le m\le q,\ \ell\ge0$ & $\textstyle 2\le d\le\left\lfloor\frac{mq-1}{q+1}\right\rfloor+1-\ell$ & \cite{GuoLiLiu2021}\\[2pt]
C9 & $n=s(q+1)-\ell,\ 2\le s\le q-1,\ 0\le\ell\le s-2$ & $2\le d\le s-\ell$ & \cite{GuoLiLiu2021}\\[2pt]
C10 & $n=Aq-B,\ 0\le B<A<q$ & $2\le d\le A-B+1$ & \cite{LiXingWang2008}\\[2pt]
\bottomrule
\end{tabular}
\endgroup

{\linespread{1}\fontsize{8.5}{10.1}\selectfont\raggedright The first branch of R15 follows the range $h<r<3h/2$ in \cite[Theorem~3(1)]{WanLianZhu2025}; Table~2 of that paper instead gives $h<r\le3h/2$. Families R21--R23 are taken from \cite[Corollary~4.9]{LiLiuJiang2026}. The remaining admissibility conditions for the sets in R3 and R4, and the intersection conditions in R1, are those in the cited theorems; $N_{s,t}$ denotes the cardinality of the intersection of the two locator sets in R4. In R2, $N_{\mu,\nu}$ is defined as in \cite[Section~3]{FangWenFu2024}. For C8, the displayed distance interval is required to be nonempty.\par}
\end{minipage}%
}}\par
\clearpage
\restoregeometry

% Alphabetically ordered references, embedded for a self-contained source file.

\end{document}